\newtheorem{df}{Definition}[section]  
\newtheorem{definition}{Definition}[section]  
\newtheorem{example}[df]{Example}
\newtheorem{theorem}[df]{Theorem}
\newtheorem{lemma}[df]{Lemma}
\newtheorem{proposition}[df]{Proposition}
\title{Strongly complete axiomatization for a logic\\ with probabilistic interventionist counterfactuals 
}
\author{%
Fausto Barbero$^1$\and
Jonni Virtema$^2$
\affiliations
$^1$University of Helsinki\\
$^2$University of Sheffield\\
\emails
fausto.barbero@helsinki.fi,
j.t.virtema@sheffield.ac.uk
}
   \DeclareSymbolFont{symbolsC}{U}{txsyc}{m}{n}
   \DeclareMathSymbol{\strictif}{\mathrel}{symbolsC}{74}
   \DeclareMathSymbol{\boxright}{\mathrel}{symbolsC}{128}
\newcommand{\cf}{\boxright}
\newcommand{\CO}{\mathcal{CO}}
\newcommand{\PCO}{\mathcal{PCO}}
\newcommand{\Al}{\hat \alpha}
\newcommand{\B}{\mathbb{B}}
\newcommand{\F}{\mathcal{F}}
\newcommand{\G}{\mathcal{G}}
\newcommand{\T}{\mathbb{T}}
\newcommand{\dfn}{\mathrel{\mathop:}=}
\newcommand{\dom}{\mathrm{Dom}}
\newcommand{\ran}{\mathrm{Ran}}
\newcommand{\End}{\mathrm{End}}
\newcommand{\PA}{\mathrm{PA}}
\newcommand{\SET}[1]{\mathbf{#1}}
\newcommand{\aff}[2]{{#1}\leadsto{#2}}
\newcommand*\linenomathpatch[1]{%
  \cspreto{#1}{\linenomath}%
  \cspreto{#1*}{\linenomath}%
  \csappto{end#1}{\endlinenomath}%
  \csappto{end#1*}{\endlinenomath}%
}
\newcommand*\linenomathpatchAMS[1]{%
  \cspreto{#1}{\linenomathAMS}%
  \cspreto{#1*}{\linenomathAMS}%
  \csappto{end#1}{\endlinenomath}%
  \csappto{end#1*}{\endlinenomath}%
}
  \let\linenomathAMS\linenomathWithnumbers
  \patchcmd\linenomathAMS{\advance\postdisplaypenalty\linenopenalty}{}{}{}
  \let\linenomathAMS\linenomathNonumbers
\patchcmd{\mmeasure@}{\measuring@true}{
  \measuring@true
  \ifnum-\linenopenaltypar>\interdisplaylinepenalty
    \advance\interdisplaylinepenalty-\linenopenalty
  \fi
  }{}{}
\begin{document}

\maketitle

\begin{abstract}
Causal multiteam semantics is a framework where probabilistic notions and causal inference can be studied in a unified setting. We study a logic ($\PCO$) that features marginal probabilities and interventionist counterfactuals, and allows expressing conditional probability statements, $do$ expressions and other mixtures of causal and probabilistic reasoning.
Our main contribution is a strongly complete infinitary axiomatisation for $\PCO$.
%
\end{abstract}





\section{Introduction}

In the field of \emph{causal inference} \cite{Pea2000,SpiGlySch1993} it has become customary to express many notions of causation in terms of so-called \emph{interventionist counterfactuals}. In their simplest form, these are expressions such as:
\begin{center}
    If variables $X_1,\dots,X_n$ were fixed to values $x_1,\dots,x_n$, then variable $Y$ would take value $y$
\end{center}      
where the antecedent describes some kind of ``intervention'' on a given system.
In practice, in the field of causal inference one is interested in estimating probabilities of events. A wide variety of causal-probabilistic expressions appears in the literature (\emph{do} expressions, conditional \emph{do} expressions, and what J. Pearl calls ``counterfactuals''). Such expressions all involve probabilities in a post-intervention scenario, but differ in whether they involve probabilistic conditioning or not, and in whether one conditions upon events in the pre-intervention or the post-intervention scenario.  \cite{BarSan2018} proposed to decompose these kinds of expressions in terms of three simpler ingredients: marginal probabilities, interventionist counterfactuals, and selective implications. The selective implication describes the effect of acquiring new information whereas the interventionist counterfactual describes the effect of an action. The three operators can be studied in a shared semantic framework called \emph{causal multiteam semantics}. The framework is meaningful already in a non-probabilistic context, where it has been studied both from a semantic and a proof-theoretic perspective \cite{BarSan2020,BarYan2022,BarGal2022}. The proof-theoretical results rely on a body of earlier work (\cite{GalPea1998,Hal2000,Bri2012}) on proof systems for (non-probabilistic) counterfactuals evaluated on \emph{structural equation models} (of which the causal multiteams are a generalization). In the probabilistic setting, some work in the semantic direction is forthcoming \cite{BarSan2023,BarVir2023}. In the present paper, we address for the first time the issue of the proof theory of the probabilistic counterfactual languages that are considered in these papers. To the best of our knowledge, there has been only one proposal in the literature of a deduction system for probabilistic interventionist counterfactuals \cite{IbeIca2020}. The language considered in \cite{IbeIca2020} differs in many respects from those we are interested in; it is more expressive, in allowing the use of arithmetical operations (sums and products of probabilities and scalars), and less expressive, since it does not allow for nesting of counterfactuals (iterated interventions), nor it has any obvious way of describing complex interactions of interventions and conditioning -- such as conditioning at the same time on \emph{both} a pre-intervention and a post-intervention scenario, or conditioning on a state of affairs that holds at an intermediate stage between two interventions.

Axiomatizing probabilistic logics is a notoriously difficult problem. As soon as a language allows to express inequalities of the form $\Pr(\alpha) \leq \epsilon$ ($\epsilon$ being a rational number), it is not compact, as for example the set of formulas of the  form $\Pr(\alpha) \leq \frac{1}{n}$ ($n$ natural number) entails that  $\Pr(\alpha) = 0$, but no finite subset yields the same conclusion. Consequently, no usual, finitary deduction system can be strongly complete for such a language. A possible answer to this problem is to settle for a deductive system that is weakly complete, i.e. it captures all the correct inferences from \emph{finite} sets of formulas. This has been achieved for a variety of probabilistic languages with arithmetic operations (e.g. \cite{FagHalMeg1990}). 
The result for probabilistic interventionist counterfactuals mentioned above \cite{IbeIca2020} is a weak completeness result in this tradition. Proving weak completeness for probabilistic languages \emph{without} arithmetical operations seems to be a more difficult task, and we could find only one such result in the literature \cite{HeiMon2001}. Unfortunately, the completeness proof of \cite{HeiMon2001} relies on a model-building method that seems not to work for languages where \emph{conditional} probabilities are expressible; thus, it is not adaptable in any straightforward way  to our case. 

Another path, on which we embark, is to respond to the failure of compactness by aiming for strong completeness using a deduction system with some kind of infinitary resources. The use of infinitary deduction rules (with countably many premises) has proved to be very fruitful and has led to strong completeness theorems for a plethora of probabilistic languages (cf. \cite{OgnRasMar2016}). Of particular interest to us are \cite{RasOgnMar2004}, where  strong completeness is obtained for a language with conditional probabilities, and \cite{OgnPerRas2008}, which obtains strong completeness for ``qualitative probabilities'' (i.e., for expressions such as $\Pr(\alpha)\leq\Pr(\beta)$). We build on these works in order to obtain a strongly complete deduction system (with two infinitary rules) for the probabilistic-causal language $\PCO$ used in \cite{BarSan2023,BarVir2023}. The proof proceeds via a canonical model construction, which relies on a transfinite version of the Lindenbaum lemma. While the proof follows essentially the scheme of \cite{RasOgnMar2004}, it presents peculiar difficulties of its own due to the presence of additional operators (counterfactuals and comparison atoms).


\section{Preliminaries}\label{sec: preliminaries}

\subsection{Basic notation}

Capital letters such as $X, Y,\dots$ denote \textbf{variables} (thought to stand
for specific magnitudes such as ``temperature'', ``volume'', etc.) which
take \textbf{values} denoted by small letters (e.g. the values of the variable $X$ will be
denoted by $x, x', \dots$). 
Sets (and tuples, depending on the context) of variables and values are denoted by boldface letters such as $\SET X$ and $\SET x$.
We consider probabilities that arise from the counting measures of finite (multi)sets. For finite sets $S\subseteq T$, we define 
\(
P_T(S):= \frac{|S|}{|T|}.
\)

\subsection{Multiteams and causal multiteams}

A \textbf{signature} is a pair $(\dom,\ran)$, where $\dom$ is a nonempty, finite set of variables and $\ran$ is a function that associates to each variable $X\in \dom$ a nonempty, finite set $\ran(X)$ of values (the \textbf{range} of $X$). 
We consider throughout the paper a fixed ordering of $\dom$, and write $\SET W$ for the tuple of all variables of $\dom$ listed in such order. Furthermore, we write $\SET W_X$ for the variables of $\dom\setminus\{X\}$ listed according to the fixed order. 
Given a tuple $\SET X = (X_1,\dots, X_n)$ of variables, we denote as $\ran(\SET X)$ the Cartesian product $\ran(X_1)\times\dots\times \ran(X_n)$. 
An \textbf{assignment} of signature $\sigma$ is a mapping $s:\dom\rightarrow\bigcup_{X\in \dom}\ran(X)$ such that $s(X)\in \ran(X)$
for each $X\in \dom$.
The set of all assignments of signature $\sigma$ is denoted by $\B_\sigma$.
Given an assignment $s$ that has the variables of $\SET X$ in its domain, $s(\SET X)$ will denote the tuple $(s(X_1), \dots,s(X_n))$.
For $\SET X\subseteq \dom$, $s_{\upharpoonright \SET X}$ denotes the restriction of $s$ to the variables in $\SET X$.

A \textbf{team} $T$ of signature $\sigma$ is a subset of $\B_\sigma$.
Intuitively, a multiteam is just a multiset analogue of a team.
%
%
%
We represent \textbf{multiteams} as (finite) sets of assignments with an extra variable $Key$ (not belonging to the signature) ranging over $\mathbb N$, which takes different values over different assignments of the team, and which is never mentioned in the formal languages. 
 A multiteam can be represented as a table, in which each row represents an assignment. For example, if $Dom = \{X,Y,Z\}$, a multiteam may look like this: 

\begin{center}
\begin{tabular}{|c|c|c|c|}
\hline
\multicolumn{4}{|l|}{\small{Key} \ \ \ \ \ $X$ \ \ \ \ \ $Y$ \ \ \ \ \ $Z$} \\
\hline
 $\phantom{a}$0$\phantom{a}$ & $x$ & $y$ & $z$\\
\hline
 1 & $x'$ & $y'$ & $z'$ \\
\hline
 2 & $x'$ & $y'$ & $z'$ \\
\hline
\end{tabular}
\end{center}

%

Multiteams by themselves do not encode any solid notion of causation; they do not tell us how a system would be affected by an intervention. We therefore need to enrich multiteams with additional structure.

\begin{definition}\label{def: causal multiteam}
A \textbf{causal multiteam} $T$ of signature $(\dom(T), \ran(T))$ 
with endogenous variables $\mathbf V\subseteq \dom(T)$ is a pair $T = (T^-,\F)$, where
\begin{enumerate}
\item 
$T^-$ is a multiteam of domain $\dom(T)$,
and
\item
$\F$ is a function $\{(V,\F_V) \ | \ V\in\mathbf V\}$ that assigns to each endogenous variable $V$ a non-constant $|\SET W_V|$-ary function $\F_V: \ran(\SET W_V)\rightarrow 
\ran(V)$,
which satisfies the further \textbf{compatibility constraint}: 
\begin{center}
$\F_V(s(\SET W_V))=s(V)$, for all $s\in T^-$.
\end{center}
%
\end{enumerate}
\end{definition}
%
%
%
%
\noindent We will also write $\End(T)$ for the set of endogenous variables of $T$. 

The function $\mathcal F$ induces a system of structural equations; an equation
\[
V := \mathcal F_V(\SET W_V)
\]
for each variable $V\in End(T)$. A structural equation tells how the value of $V$ should be recomputed if the value of some variables in $\SET W_V$ is modified. Note that that some of the variables in $\SET W_V$ may not  be necessary for evaluating $V$. For example, if $V$ is given by the structural equation $V:= X+1$, all the variables in $\SET W_V\setminus \{X\}$ are irrelevant (we call them \textbf{dummy arguments} of $\F_V$). The set of non-dummy arguments of $\F_V$ is denoted as $\PA_V$ (the set of \textbf{parents} of $V$).

We associate to each causal multiteam $T$ a \textbf{causal graph} $G_T$, whose vertices are the variables in $\dom$ and where an arrow is drawn from each variable in $\PA_V$ to $V$, whenever $V\in \End(T)$. 
The variables in $\dom(T)\setminus \End(T)$ are called \textbf{exogenous}.



\begin{definition}
A causal multiteam $S=(S^-,\F_S)$ is a \textbf{causal sub-multiteam} of $T=(T^-,\F_T)$, if they have same signature, $S^-\subseteq T^-$, and $\mathcal{F}_S = \mathcal{F}_T$. We then write $S\leq T$.
%
\end{definition}


We consider causal multiteams as dynamic models, that can be affected by various kinds of operations -- specifically, by observations and interventions.
Given a causal multiteam $T = (T^-,\F)$ and a formula $\alpha$ of some formal language (evaluated over assignments according to some semantic relation $\models$), ``observing $\alpha$'' produces the causal sub-multiteam $T^\alpha = ((T^\alpha)^-,\F)$ of $T$, where
\[
(T^\alpha)^- \dfn \{s\in T^-  \ | \  (\{s\},\F)\models \alpha\}.
\]

An intervention on $T$ will \emph{not}, in general, produce a sub-multiteam of $T$. It will instead modify the values that appear in some of the columns of $T$. We consider interventions that are described by formulas of the form $X_1=x_1 \land\dots\land X_n=x_n$ (or, shortly, $\SET X = \SET x$). Such a formula is \textbf{inconsistent} if there are two indexes $i,j$ such that $X_i$ and $X_j$ denote the same variable, while $x_i$ and $x_j$ denote distinct values; it is \textbf{consistent} otherwise.

Applying an intervention $do(\SET X = \SET x)$, where $\SET X = \SET x$ is consistent, to a causal multiteam $T = (T^-,\F)$ will produce a causal multiteam $T_{\SET X = \SET x} = (T^-_{\SET X = \SET x},\F_{\SET X = \SET x})$, where the function component is 
\(
\F_{\SET X = \SET x} := \F_{\upharpoonright(\SET V \setminus \SET X)}
\)
(the restriction of $\F$ to the set of variables $\SET V \setminus \SET X$) 
and the multiteam component is 
\(
T_{\SET X=\SET x}^-:=\{s^\F_{\SET X=\SET x}\mid s\in T^-\},
\)
where each $s^\F_{\SET X=\SET x}$ is the unique assignment compatible with $\mathcal F_{{\SET X = \SET x}}$ defined (recursively) as 
\[s^\F_{\SET X=\SET x}(V)=\begin{cases}
x_i&\text{ if }V=X_i\in \SET X\\
s(V)&\text{ if }V\in Exo(T)\setminus \SET X\\
\F_V(s^\F_{\SET X=\SET x}(\PA_V
))&\text{ if }V\in End(T)\setminus \SET X.
\end{cases}
\]


\begin{example}
Consider the following table:
\begin{center}
$T^-$: \begin{tabular}{|c|c|c|c|}
\hline
\multicolumn{4}{|c|}{ } \\
\multicolumn{4}{|l|}{\small{Key} \ \ $X$\tikzmark{XR} \ \ \ \tikzmark{YL}$Y$\tikzmark{YR} \ \ \ \tikzmark{ZL}$Z$} \\
\hline
 0 & 0 & 1 & 0\\
\hline
 1 & 1 & 2 & 2 \\
\hline
 2 & 1 & 2 & 2 \\
\hline
 3 & 2 & 3 & 6\\ 
\hline
\end{tabular}
\begin{tikzpicture}[overlay, remember picture, yshift=0\baselineskip, shorten >=.5pt, shorten <=.5pt]
  \draw ([yshift=7pt]{pic cs:XR})  edge[line width=0.2mm, out=45,in=135,->] ([yshift=7pt]{pic cs:ZL});
  \draw [->] ([yshift=3pt]{pic cs:YR})  [line width=0.2mm] to ([yshift=3pt]{pic cs:ZL});
  \draw [->] ([yshift=3pt]{pic cs:XR})  [line width=0.2mm] to ([yshift=3pt]{pic cs:YL});
\end{tikzpicture}
\end{center}
where each row represents an assignment (e.g., the fourth row represents an assignment $s$ with $s(Key)=3$, $s(X)=2$, $s(Y)=3$, $s(Z)=6$). Assume further that the variable $Z$ is generated by the function $\F_Z(X,Y) = X \times Y$, $Y$ is generated by $\F_Y(X)=X+1$, and $X$ is exogenous. The rows of the table are compatible with these laws, so this is a causal multiteam (call it $T$). It encodes many probabilities; for example, $P_T(Z=2) = \frac{1}{2}$. Suppose we have a way to enforce the variable $Y$ to take the value $1$. We represent the effect of such an intervention ($do(Y=1)$) by recomputing the $Y$ and then  the $Z$ column:

\begin{center}
 \begin{tabular}{|c|c|c|c|}
\hline
\multicolumn{4}{|c|}{ } \\
\multicolumn{4}{|l|}{ \hspace{-5pt} \small{Key} \ \ $X$\tikzmark{XR'} \ \ \ \tikzmark{YL'}$Y$\tikzmark{YR'} \ \ \ \ \ \tikzmark{ZL'}$Z$} \\
\hline
 0 & 0 & \textbf{1} & $\dots$ \\
\hline
 1 & 1 & \textbf{1} &  $\dots$  \\
\hline
 2 & 1 & \textbf{1} &  $\dots$ \\
\hline
 3 & 2 & \textbf{1} &  $\dots$ \\ 
\hline
\end{tabular}
\hspace{15pt} $\leadsto$ \hspace{5pt}  $T_{Y=1}^-$: \begin{tabular}{|c|c|c|c|}
\hline
\multicolumn{4}{|c|}{ } \\
\multicolumn{4}{|l|}{\small{Key} \  $X$\tikzmark{XR''} \ \ \ \tikzmark{YL''}$Y$\tikzmark{YR''} \ \ \ \ \tikzmark{ZL''}$Z$} \\
\hline
 0 & 0 & 1 & \textbf{0}\\
\hline
 1 & 1 & 1 & \textbf{1} \\
\hline
 2 & 1 & 1 & \textbf{1} \\
\hline
 3 & 2 & 1 & \textbf{2}\\ 
\hline
\end{tabular}
\begin{tikzpicture}[overlay, remember picture, yshift=0\baselineskip, shorten >=.5pt, shorten <=.5pt]
  \draw ([yshift=7pt]{pic cs:XR'})  edge[line width=0.2mm, out=45,in=135,->] ([yshift=7pt]{pic cs:ZL'});
  \draw [->] ([yshift=3pt]{pic cs:YR'})  [line width=0.2mm] to ([yshift=3pt]{pic cs:ZL'});
  
  \draw ([yshift=7pt]{pic cs:XR''})  edge[line width=0.2mm, out=45,in=135,->] ([yshift=7pt]{pic cs:ZL''});
  \draw [->] ([yshift=3pt]{pic cs:YR''})  [line width=0.2mm] to ([yshift=3pt]{pic cs:ZL''});
\end{tikzpicture}

\end{center}
where the new value of $Z$ is computed, in each row, as the product of the value for $X$ and the (new) value for $Y$. 
The probability distribution has changed: now  $P_{T_{Y=1}}(Z=2)$ is $\frac{1}{4}$. Furthermore, the function $\F_Y$ is now omitted from $T_{Y=1}$ (otherwise the assignments would not be compatible anymore with the laws). Correspondingly, the arrow from $X$ to $Y$ has been omitted from the causal graph.   
\end{example}

\subsection{Language $\CO$}

The language $\CO$ is for the description of events; later we incorporate it in a language for the discussion of probabilities of $\CO$ formulas. For any fixed signature, the formulas of $\CO$ are defined by the following BNF grammar:
\[
\alpha ::= Y=y \mid Y\neq y \mid  \alpha\land\alpha  \mid \alpha\supset\alpha  \mid  \SET X = \SET x \cf \alpha,
\]
where $\SET{X}\cup\{Y\}\subseteq \dom$, $y\in \ran(Y)$, and $\SET x\in \ran(\SET X)$.
Formulae of the forms $Y = y$ and $Y\neq y$ are called \textbf{literals}.
The semantics for $\CO$ is given by the following clauses:
\begin{align*}
&T\models Y=y &&\text{iff}&& s(Y)=y \text{ for all } s\in T^-.\\
&T\models Y\neq y  &&\text{iff}&& s(Y)\neq y \text{ for all }s\in T^-.\\
&T\models \alpha\land \beta  &&\text{iff}&& T\models \alpha \text{ and } T\models \beta.\\
&T\models \alpha \supset \beta &&\text{iff} && T^\alpha \models \beta.\\
&T\models \SET X=\SET x \cf \psi &&\text{iff}&& T_{\SET X=\SET x} \models \psi \text{ or } \SET X=\SET x \text{ is inconsistent}.
\end{align*}
We will reserve the letters $\alpha,\beta$ to denote $\CO$ formulas.

We can introduce more logical operators as useful abbreviations. $\top$ stands for $X=x \cf X=x$, and 
$\bot$ stands for $X=x \cf X\neq x$. 
$\neg\alpha$ (\emph{dual negation}) stands for $\alpha\supset \bot$. This is not a classical (contradictory) negation; it is easy to see that its semantics is:
\begin{itemize}
\item $(T^-,\F)\models \neg\alpha$ iff, for every $s\in T^-$, $(\{s\},\F)\not\models\alpha$.
\end{itemize}
Thus, it is not the case, in general, that $T\models\alpha$ or $T\models \neg\alpha$. Note that $X\neq x$ is semantically equivalent to $\neg (X=x)$, and $X = x$ is semantically equivalent to $\neg (X\neq x)$.
In previous works $\lor$ (\emph{tensor disjunction}) was taken as a primitive operator, but here we define 
 $\alpha\lor\beta$ as $\neg(\neg\alpha \land \neg\beta)$.
Its semantic clause can be described as follows:
\begin{itemize}
\item $T\models \alpha\lor \beta$ iff there are $T_1,T_2\leq T$ s.t. $ T_1^-\cup T_2^- = T^-$,      $T_1\models \alpha$ and  $T_2\models \beta$.
\end{itemize}
In contrast with the statement above, the formula $\alpha\lor\neg\alpha$ \emph{is} valid. Furthermore, $\alpha\equiv\beta$ abbreviates $(\alpha \supset \beta) \land (\beta\supset \alpha)$. 
Notice that this formula does not state that $\alpha$ and $\beta$ are logically equivalent, but only that they are satisfied by the same assignments in the specific causal multiteam at hand.

All the operators discussed here (primitive and defined) behave classically over causal multiteams containing exactly one assignment.

A causal multiteam $(T^-,\F)$ is \textbf{empty} (resp. \textbf{nonempty}) if the multiteam $T^-$ is.
All the logics $\mathcal{L}$ considered in the paper have
the \textbf{empty team property}: if $T$ is empty, then $T\models\alpha$ for any $\alpha\in\mathcal{L}$ (and any $\F$ of the same signature). Furthermore, $\CO$ is \textbf{flat}: $(T^-,\F)\models \alpha$ iff, for all $s\in T^-$, $(\{s\},\F)\models\alpha$.

\subsection{Language $\PCO$}

Our main object of study is the probabilistic language $\PCO$. Besides literals, it allows for \textbf{probabilistic atoms}:
\[
\ \Pr(\alpha) \geq \epsilon \ | \ \Pr(\alpha) > \epsilon \ | \ \Pr(\alpha) \geq \Pr(\beta) \ | \ \Pr(\alpha) > \Pr(\beta)
\]
where $\alpha,\beta\in\CO$ and $\epsilon \in [0,1]\cap \mathbb{Q}$. The first two are called \textbf{evaluation atoms}, and the latter two \textbf{comparison atoms}. 
Probabilistic atoms 
 together with literals of $\CO$ are called \textbf{atomic formulas}.
The probabilistic language $\PCO$ is then given by the following grammar:
\[
\varphi::= \eta \mid \varphi \land \varphi \mid  \varphi \sqcup \varphi \mid \alpha\supset \varphi \mid \SET X = \SET x\cf \varphi,
\]
where $\SET X\subseteq \dom$, $\SET x \in \ran(\SET X)$, $\eta$ is an atomic formula, 
and $\alpha$ is a $\CO$ formula.
Note that the antecedents of $\supset$ and the arguments of probability operators are $\CO$ formulas.
%
Semantics for the additional operators are given below:
\begin{align*}
&T\models \psi\sqcup \chi &&\text{iff}&& T\models \psi \text{ or } T\models \chi\\
&T\models \Pr(\alpha)\vartriangleright \epsilon  &&\text{iff}&& T^-=\emptyset \text{ or } P_T(\alpha)\vartriangleright \epsilon\\
&T\models \Pr(\alpha)\vartriangleright \Pr(\beta) &&\text{iff}&& T^-=\emptyset \text{ or } P_T(\alpha)\vartriangleright P_T(\beta)
\end{align*}
where $\vartriangleright\in\{\geq,>\}$ and $P_T(\alpha)$ is a shorthand for $P_{T^-}((T^\alpha)^-)$.

As usual, for a set of formulas $\Gamma$, we write $T\models \Gamma$ if $T$ satisfies each of the formulas in $\Gamma$. For $\Gamma \cup \{\varphi\}\subseteq \PCO$, we write $\Gamma\models_\sigma\varphi$ if $T\models \Gamma$ implies $T\models\varphi$, for all causal teams $T$ of signature $\sigma$. $\models_\sigma\varphi$ abbreviates $\emptyset \models_\sigma \varphi$. We will always assume that some signature is fixed, and omit the subscripts. 

The language $\PCO$ still has the empty team property 
but it is not flat.\footnote{It also lacks weaker properties considered in the team semantics literature, such as \emph{downward closure} and \emph{union closure}.}

The abbreviations $\top,\bot$ can be used freely in $\PCO$, while $\neg,\lor$ and $\equiv$ can be applied only to $\CO$ arguments. The definability of the dual negation in $\CO$ allows us to introduce more useful abbreviations:
\begin{align*}
\Pr(\alpha) \leq  \epsilon &\dfn\Pr(\neg\alpha) \geq 1 - \epsilon \\ 
\Pr(\alpha) < \epsilon &\dfn \Pr(\neg\alpha) > 1 - \epsilon\\
\Pr(\alpha) = \epsilon &\dfn \Pr(\alpha) \geq  \epsilon \land\Pr(\alpha) \leq  \epsilon\\
\Pr(\alpha) \neq  \epsilon &\dfn \Pr(\alpha) >  \epsilon \sqcup  \Pr(\alpha) <  \epsilon.
\end{align*}
Furthermore, the $\supset$ operator 
enables us to express some statements involving conditional probabilities, defined as follows (where  $\vartriangleright \hspace{3pt} \in \{\geq, >\}$): 
\begin{align*}
&T\models \Pr(\alpha\mid\gamma) \vartriangleright \epsilon  &&\hspace{-5pt}\text{iff}  &&\hspace{-5pt}(T^\gamma)^- = \emptyset \text{ or } P_{T^\gamma}(\alpha) \vartriangleright \epsilon.\\
&T\models \Pr(\alpha\mid\gamma)\vartriangleright \Pr(\beta\mid\gamma) &&\hspace{-5pt}\text{iff} &&\hspace{-5pt}(T^\gamma)^- = \emptyset \text{ or } P_{T^\gamma}(\alpha) \vartriangleright P_{T^\gamma}(\beta).
\end{align*}
It was observed in \cite{BarSan2023} that $\Pr(\alpha\mid\gamma) \vartriangleright \epsilon$ and $\Pr(\alpha\mid\gamma)\vartriangleright \Pr(\beta\mid\gamma)$ can be defined by $\gamma \supset\Pr(\alpha) \vartriangleright \epsilon$ and $\gamma \supset\Pr(\alpha) \vartriangleright \Pr(\beta)$, respectively.

The \emph{weak contradictory negation} $\varphi^C$ of a formula $\varphi$ is inductively definable in $\PCO$; this is an operator that behaves exactly as a contradictory negation, except on empty causal multiteams. 
We list the definitory clauses together with the values produced by the negation of defined formulas.

\begin{itemize}
\item $(\Pr(\alpha)\geq \epsilon)^C$ is $\Pr(\alpha) < \epsilon$ (and vice versa)  
\item $(\Pr(\alpha) > \epsilon)^C$ is $\Pr(\alpha)\leq \epsilon$ (and vice versa)
\item $(\Pr(\alpha) = \epsilon)^C$ is $\Pr(\alpha)\neq \epsilon$ (and vice versa)
\item $(\Pr(\alpha)\geq \Pr(\beta))^C$ is $\Pr(\beta) > \Pr(\alpha)$  (and vice versa)
\item $(\bot)^C$ is $\top$ (and vice versa)
\item $(X=x)^C$ is $\Pr(X=x) <1$
\item $(X\neq x)^C$ is $\Pr(X\neq x) <1$
\item $(\psi \land \chi)^C$ is $\psi^C \sqcup \chi^C$
\item $(\psi \sqcup \chi)^C$ is $\psi^C \land \chi^C$
\item $(\alpha \supset \chi)^C$ is $\Pr(\alpha)>0 \land \alpha \supset \chi^C$
\item $(\SET X = \SET x \cf \chi)^C$ is $\SET X = \SET x \cf \chi^C$.
\end{itemize}

\noindent In the clause for $\supset$, the conjunct $\Pr(\alpha)>0$  (whose intuitive interpretation is ``if $T$ is nonempty, then $T^\alpha$ is nonempty'') is added to insure that $(\alpha \supset \chi)^C$ is not satisfied by $T$ in case ($T$ is nonempty and) $T^\alpha$ is empty.\footnote{Whereas $\Pr(\alpha)>0$ could be replaced with $(\neg\alpha)^C$, the use of probability atoms in $(X=x)^C$ and $(X\neq x)^C$ seems essential.} 

We emphasise that, since $\CO$ formulas are $\PCO$ formulas, the weak contradictory negation can also be applied to them; however, the contradictory negation of a $\CO$ formula will typically \emph{not} be itself a $\CO$ formula.
The meaning of the weak contradictory negation is as follows.

\begin{theorem}\label{thm: contradictory negation}
For every $\varphi\in\PCO_{\sigma}$ and nonempty causal multiteam $T = (T^-,\F)$ of signature $\sigma$,
$T\models \varphi^C \Leftrightarrow T\not\models \varphi$.
\end{theorem}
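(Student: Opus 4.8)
The plan is to prove the statement by structural induction on $\varphi$, with the induction hypothesis stated uniformly for \emph{all} nonempty causal multiteams of signature $\sigma$. This uniformity is essential, since the inductive steps apply the hypothesis not to $T$ itself but to derived multiteams such as $T^\alpha$ and $T_{\SET X = \SET x}$. Throughout I would exploit the fact that on a nonempty $T$ the semantic clauses for the evaluation and comparison atoms reduce to plain arithmetical comparisons between the reals $P_T(\alpha)$, $P_T(\beta)$, and $\epsilon$.

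For the base cases I would first record the identity $P_T(\neg\alpha) = 1 - P_T(\alpha)$, which holds because $\CO$ is flat and hence, on each singleton $(\{s\},\F)$, exactly one of $\alpha,\neg\alpha$ is satisfied; thus the assignments of $T^-$ satisfying $\alpha$ and those satisfying $\neg\alpha$ partition $T^-$. Unfolding the abbreviations (for instance $\Pr(\alpha) < \epsilon \dfn \Pr(\neg\alpha) > 1-\epsilon$) then turns each clause into a pair of complementary real comparisons: e.g.\ $T \models \Pr(\alpha) < \epsilon$ iff $P_T(\neg\alpha) > 1-\epsilon$ iff $P_T(\alpha) < \epsilon$ iff $T \not\models \Pr(\alpha) \geq \epsilon$. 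The comparison atoms are handled directly by the fact that $\geq$ and $>$ are mutually complementary on $\mathbb{R}$; the literal case reduces $X=x$ to $P_T(X=x)=1$ and its negation $\Pr(X=x)<1$ to $P_T(X=x)<1$; and $\bot$ is settled by the nonempty behaviour of $\top,\bot$.

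The propositional steps for $\land$ and $\sqcup$ are immediate De Morgan computations from the induction hypothesis, using that the clauses for $\land$ and $\sqcup$ coincide with classical conjunction and disjunction of satisfaction. The delicate case --- which I expect to be the main obstacle --- is $\alpha \supset \chi$, precisely because $T^\alpha$ may be empty. I would split on this. If $T^\alpha = \emptyset$, then $T \models \alpha\supset\chi$ holds vacuously by the empty team property, while simultaneously $P_T(\alpha)=0$ makes the conjunct $\Pr(\alpha)>0$ of $(\alpha\supset\chi)^C = \Pr(\alpha)>0 \land (\alpha\supset\chi^C)$ fail, so $T\not\models(\alpha\supset\chi)^C$, matching the claim. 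If $T^\alpha\neq\emptyset$, then $P_T(\alpha)>0$, the extra conjunct is satisfied, and $(\alpha\supset\chi)^C$ reduces to $\alpha\supset\chi^C$; since $T^\alpha$ is a nonempty causal multiteam of signature $\sigma$, the induction hypothesis applied to $\chi$ at $T^\alpha$ gives $T^\alpha\models\chi^C \Leftrightarrow T^\alpha\not\models\chi$, i.e.\ $T\models\alpha\supset\chi^C \Leftrightarrow T\not\models\alpha\supset\chi$. Keeping these two emptiness regimes aligned is exactly the role the added conjunct $\Pr(\alpha)>0$ plays, and is the crux of the argument.

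Finally, for the counterfactual $\SET X = \SET x \cf \chi$ I would note that, since $\SET X \subseteq \dom$ is a set of distinct variables, the antecedent is always consistent, so $T \models \SET X = \SET x \cf \chi$ iff $T_{\SET X = \SET x} \models \chi$. Because an intervention replaces each $s \in T^-$ by the single assignment $s^\F_{\SET X = \SET x}$, it preserves nonemptiness, so $T_{\SET X = \SET x}$ is again a nonempty causal multiteam of signature $\sigma$; applying the induction hypothesis to $\chi$ there yields $T\models \SET X=\SET x\cf\chi^C \Leftrightarrow T\not\models \SET X=\SET x\cf\chi$, which closes the induction since $(\SET X=\SET x\cf\chi)^C = \SET X = \SET x \cf \chi^C$.
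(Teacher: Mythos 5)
Your proposal is correct and follows essentially the same route as the paper: structural induction with the induction hypothesis quantified over all nonempty multiteams of the signature, with the $\supset$ case handled by splitting on whether $T^\alpha$ is empty and using the extra conjunct $\Pr(\alpha)>0$ to keep the two regimes aligned (the paper presents only this case, calling the rest trivial). Your additional observations — the identity $P_T(\neg\alpha)=1-P_T(\alpha)$ via flatness for the base cases, and the preservation of nonemptiness under interventions for the $\cf$ case — are exactly the details the paper leaves implicit.
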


\begin{proof}
The proof proceeds by induction on the structure of formulas 
 $\varphi$.
 We show the only non-trivial case of $\supset$.


Suppose $T\models 
\Pr(\alpha)>0  \land \alpha \supset \chi^C$. Thus $T^\alpha\models \chi^C$. 
Since $T$ is nonempty and $T\models \Pr(\alpha)>0$, we conclude that $T^\alpha$ is nonempty as well. Now by applying the induction hypothesis on $\chi$, we obtain $T^\alpha\not \models \chi$. Thus, $T\not\models \alpha\supset\chi$.

For the converse, assume $T\not\models \alpha\supset\chi$. Then $T^\alpha\not \models \chi$, which (by the empty team property) entails that $T^\alpha$ is nonempty, and thus $T\models \Pr(\alpha)>0$. Moreover, applying the induction hypothesis to $\chi$ yields $T^\alpha \models \chi^C$, and thus $T\models \alpha \supset \chi^C$. 
\end{proof}

Using the weak contradictory negation, we can define an operator that behaves exactly as the material conditional: 
\begin{itemize}
\item $\psi\rightarrow \chi$ stands for $\psi^C \sqcup \chi$.
\end{itemize}
Indeed, $T\models \psi\rightarrow \chi$ iff $T$ is empty or $T\not\models\psi$ or $T\models\chi$. However, since $\PCO$ has the empty multiteam property, ``$T$ is empty'' entails $T\models\chi$; thus, for $\PCO$, $\rightarrow$ really is the material conditional:

\begin{itemize}
\item $\psi\rightarrow \chi$ iff $T\not\models\psi$ or $T\models\chi$.
\end{itemize}
Similarly, we let $\psi\leftrightarrow\chi$ denote $(\psi\rightarrow \chi)\land(\chi\rightarrow \psi)$.

Note that $\alpha\rightarrow\beta$  
and  $\alpha\supset\beta$ are not in general equivalent even if $\alpha,\beta$ are $\CO$ formulas. Consider for example a causal multiteam $T$ with two assignments $s= \{(X,0),(Y,0)\}$ and $t= \{(X,1),(Y,1)\}$. Clearly $T\models X=0 \rightarrow Y=1$ (since $T\not\models X=0$), while $T\not\models X=0 \supset Y=1$ (since $T^{X=0}\not\models Y=1$).
However, the entailment from $\alpha\supset\psi$ to $\alpha\rightarrow \psi$ always holds, provided both formulas are in $\PCO$ (i.e., provided $\alpha\in\CO$). Indeed, suppose $T\models\alpha\supset\psi$ and $T\models\alpha$. From the former we get $T^\alpha \models \psi$. From the latter we get $T=T^\alpha$. Thus, $T\models\psi$. 
The opposite direction does not preserve truth, but it does preserve validity: if $\models \alpha\rightarrow \psi$, then $\models \alpha\supset\psi$. Indeed, the former tells us that any causal multiteam that satisfies $\alpha$ also satisfies $\psi$. Thus, in particular, for any $T$, $T^\alpha\models \psi$, and thus $T\models \alpha\supset \psi$.

Similar considerations as above apply to the pair of operators $\equiv$ and $\leftrightarrow$. 
Futher differences in the proof-theoretical behaviour of these (and other) pairs of operators are illustrated by the axioms T1 and T2 presented in Section \ref{sec: Axioms}.






\section{The axiom system}\label{sec: axiom system}

We present a formal deduction system with infinitary rules for $\PCO$ and show it to be strongly complete over recursive causal multiteams. 
 We follow the approach of \cite{RasOgnMar2004}, which proved a similar result for a language with probabilities and conditional probabilities. Our result adds to the picture comparison atoms, counterfactuals, and pre-intervention observations (``Pearl's counterfactuals'').

\subsection{Further notation}\label{subs: further notation}


The formulation of some of the axioms -- in particular, those involving reasoning with counterfactuals -- will involve some additional abbreviations. For example, we will write $\SET X \neq \SET x$ for a disjunction $X_1\neq x_1 \sqcup\dots\sqcup X_n\neq x_n$.



There will be an axiom (C11) that characterizes recursivity as done in \cite{Hal2000}.
For it, we need to define the atom $\aff{X}{Y}$ (``$X$ causally affects $Y$'') by the formula: 
\begin{equation*}
\hspace{-50pt}\bigvee_{\substack{\SET Z\subseteq Dom \\ x\neq x'\in \ran(X)\\ y\neq y'\in \ran(Y) \\ \SET z \in \ran(\SET Z)}} [((\SET Z=\SET z \land X=x)\cf Y=y) 
\end{equation*}
\vspace{-35pt}
\begin{equation*}
\hspace{60pt}\land ((\SET Z=\SET z  \land X=x')\cf Y=y')].
\end{equation*}

\vspace{12pt}

We will also need a formula (from \cite{BarSan2020}) characterizing the stricter notion of direct cause ($X$ is a direct cause of $Y$ iff $X\in PA_Y$), which is expressible by a $\PCO$ formula $\varphi_{DC(X,Y)}$ defined as:
\begin{equation*}
\hspace{-50pt}\bigvee_{\substack{x\neq x'\in \ran(X)\\ y\neq y'\in \ran(Y) \\ \SET w \in \ran(\SET W_{XY})}} [((\SET W_{XY}=\SET w \land X=x)\cf Y=y) 
\end{equation*}
\vspace{-30pt}
\begin{equation*}
\hspace{60pt}\land ((\SET W_{XY}=\SET w  \land X=x')\cf Y=y')].
\end{equation*}

\vspace{10pt}

\noindent where $\SET W_{XY}$ stands for $\dom\setminus\{X,Y\}$.


Fourth, some axioms describe specific properties of exogenous or endogenous variables, which can be again characterized in $\PCO$. Similarly as before, here  
 $\SET W_{V}$ stands for $\dom\setminus\{V\}$. 
Then we can express the fact that a variable $Y$ is endogenous by the following formula:

\vspace{-5pt}

\[
\varphi_{End(Y)}: \bigsqcup_{X\in \SET W_Y} \varphi_{DC(X,Y)}
\]

\vspace{-5pt}

\noindent and its contradictory negation $(\varphi_{End(Y)})^C$ will express that $Y$ is exogenous. 



Finally, for each function component $\F$, $\Phi^\F$ is a formula 
 that characterizes the fact that a causal team has function component $\F$. 
In detail, 
\[
\Phi^\F: \bigwedge_{V\in End(\F)} \eta_\sigma(V) \land \bigwedge_{V\notin End(\mathcal F)  
} \xi_\sigma(V)
\]
where
\[
\eta_\sigma(V): \bigwedge_{\SET w\in \ran(\SET W_V)}(\SET W_V = \SET w \cf V = \F_V(\SET w))
\]
and
\[
\xi_\sigma(V): \bigwedge_{\substack{\SET w\in \ran(\SET W_V) \\ v \in \ran(V)}} V=v \supset (\SET W_V=\SET w \cf V=v).
\]
A nonempty causal multiteam $T=(T^-,\G)$ satisfies $\Phi^\F$ iff $\G=\F$.\footnote{Save for some inessential differences, this is is the content of Theorem 3.4 from \cite{BarYan2022},}


\subsection{Axioms and rules}\label{sec: Axioms}

We present a few axiom schemes and rules for $\PCO$, roughly divided in six groups. Each axiom and rule is restricted to formulas of a fixed signature $\sigma$, so that actually we have a distinct axiom system for each signature. As usual, $\alpha$ and $\beta$ are restricted to be $\CO$ formulas.



\vspace{10pt}

\textbf{Tautologies}

\noindent T1. All instances of classical propositional tautologies in 

\hspace{8pt} $\land,\sqcup,\rightarrow, ^C, \top,\bot$.

\noindent T2. All $\CO$ instances of classical propositional tautologies  

\nopagebreak
\hspace{8pt} in $\land, \lor,\supset, \neg,\top,\bot$.







\vspace{3pt}

\noindent Rule MP. \LARGE$\frac{\psi \hspace{10pt} \psi\rightarrow\chi}{\chi}$\normalsize

\noindent Rule Rep. \LARGE$\frac{\vdash \varphi \hspace{15pt}  \vdash \theta \leftrightarrow \theta'}{\vdash \varphi[\theta'/\theta]}$\normalsize (provided $\varphi[\theta'/\theta]$ is well-formed)

\vspace{20pt}




\vspace{10pt}

\textbf{Probabilities}

\noindent P1. $\alpha \leftrightarrow \Pr(\alpha)=1$. 




\noindent P2. $\Pr(\alpha)\geq 0$.

\noindent P3. $(\Pr(\alpha)=\delta \land \Pr(\beta)=\epsilon \land \Pr(\alpha\land\beta)=0 )\rightarrow \Pr(\alpha\lor\beta)= \delta+\epsilon$

\hspace{8pt}(when $\delta+\epsilon \leq 1$).

\noindent P3b. $\Pr(\alpha) \geq \epsilon \land \Pr(\alpha\land\beta)=0 \rightarrow \Pr(\beta)\leq 1-\epsilon$.




\noindent P4. $\Pr(\alpha)\leq\epsilon \rightarrow \Pr(\alpha)< \delta$ (if $\delta > \epsilon$).

\noindent P5. $\Pr(\alpha)<\epsilon \rightarrow \Pr(\alpha)\leq\epsilon$.

\noindent P6. 
$\Pr(\alpha\equiv\beta)=1 \rightarrow (\Pr(\alpha)=\epsilon\rightarrow \Pr(\beta)=\epsilon)$. 

\noindent P6b. $\Pr(\alpha\supset\beta)=1 \rightarrow (\Pr(\alpha)=\epsilon\rightarrow \Pr(\beta)\geq\epsilon)$. 






\vspace{5pt}

\noindent Rule $\bot^\omega$. \LARGE$\frac{\psi \rightarrow \Pr(\alpha)\neq \epsilon, \forall \epsilon\in [0,1]\cap\mathbb Q}{\psi \rightarrow \bot}$\normalsize

\vspace{10pt}

\textbf{Comparison}

\noindent CP1. $(\Pr(\alpha)=\delta \land \Pr(\beta)=\epsilon)\rightarrow \Pr(\alpha)\geq \Pr(\beta)$, (if $\delta\geq \epsilon$).


\noindent CP2. $(\Pr(\alpha)=\delta \land \Pr(\beta)=\epsilon)\rightarrow \Pr(\alpha)> \Pr(\beta)$, (if $\delta > \epsilon$).




\vspace{10pt}

\textbf{Observations}

\noindent O1. $\Pr(\alpha)=0 \rightarrow (\alpha\supset\psi)$.


\noindent O1b. $(\alpha\supset \bot) \rightarrow \Pr(\alpha)=0$.

\noindent O2. $(\Pr(\alpha)=\delta \land \Pr(\alpha\land\beta)=\epsilon) \rightarrow (\alpha \supset \Pr(\beta)=\frac{\epsilon}{\delta})$

\hspace{8pt}(when $\delta\neq 0$).

\noindent O3. $(\alpha \supset \Pr(\beta)=\epsilon)\rightarrow (\Pr(\alpha)=\delta \leftrightarrow \Pr(\alpha\land\beta)=\epsilon\cdot\delta)$ 

\hspace{8pt}(when $\epsilon\neq 0$).









\noindent O4. $(\alpha \supset \psi) \rightarrow (\alpha \rightarrow \psi)$.




\noindent  
O5$_\land$. $\alpha \supset (\psi\land\chi)  \leftrightarrow (\alpha \supset \psi)\land (\alpha \supset \chi)$. 

\noindent O5$_\sqcup$. $\alpha \supset (\psi\sqcup\chi)   \leftrightarrow  (\alpha \supset \psi)\sqcup (\alpha \supset \chi)$. 

\noindent O5$_\supset$. $\alpha \supset (\beta\supset\chi)   \leftrightarrow  (\alpha \land \beta)\supset \chi$.

\vspace{5pt}

\noindent Rule Mon$_\supset$. \LARGE$\frac{\vdash \psi\rightarrow \chi}{\vdash (\alpha \supset \psi) \rightarrow (\alpha \supset \chi)}$\normalsize 

\vspace{5pt}

\noindent Rule $\rightarrow$to$\supset$. \LARGE$\frac{\vdash \alpha \rightarrow \psi}{\vdash \alpha \supset\psi}$\normalsize 

\vspace{5pt}

\noindent Rule $\supset^\omega$. \LARGE$\frac{\psi \rightarrow (\Pr(\alpha\land\beta)=\delta\epsilon \leftrightarrow \Pr(\alpha)=\epsilon), \forall \epsilon\in (0,1]\cap\mathbb Q}{\psi \rightarrow (\alpha \supset \Pr(\beta) = \delta)}$\normalsize

\vspace{5pt}




\vspace{10pt}

\textbf{Literals}

\noindent A1. $\SET Y=\SET y \rightarrow \SET Y\neq \SET y'$ (when $\SET y\neq \SET y'$).

\noindent A2. $X\neq x \leftrightarrow (X = x \supset \bot)$.

\noindent A3. $\bigvee_{\SET y\in \ran(\SET Y)} \SET Y = \SET y$.


\vspace{10pt}

\textbf{Counterfactuals}

\noindent C1. $(\SET X = \SET x \cf (\psi\land\chi)) \leftrightarrow ((\SET X = \SET x \cf \psi) \land (\SET X = \SET x \cf \chi))$.

\noindent C2. $(\SET X = \SET x \cf (\psi\sqcup\chi)) \leftrightarrow ((\SET X = \SET x \cf \psi) \sqcup (\SET X = \SET x \cf \chi))$.

\noindent C3. $(\SET X = \SET x \cf (\alpha\!\supset\!\chi))  \leftrightarrow  ((\SET X = \SET x \cf \alpha) \! \supset \! (\SET X = \SET x \cf \chi))$.

\noindent C4. $(\SET X = \SET x \! \cf \! (\SET Y = \SET y \! \cf \! \chi)) \rightarrow ((\SET X' = \SET x' \land \SET Y = \SET y) \! \cf \! \chi)$ 

\hspace{8pt} (where $\SET X' = \SET X \setminus \SET Y$ and $\SET x' = \SET x \setminus \SET y$; and provided $\SET X = \SET x$

\hspace{8pt} is consistent).



\noindent C4b. $((\SET X = \SET x \land \SET Y = \SET y) \! \cf \! \chi) \rightarrow (\SET X = \SET x \! \cf \! (\SET Y = \SET y \! \cf \! \chi))$.




\noindent C5. $(\SET X = \SET x \cf \bot)\rightarrow \psi$ (when $\SET X = \SET x$ is consistent).

\noindent C6. $(\SET X = \SET x \land Y = y) \cf Y = y$. 


\noindent C7. $(\SET X = \SET x \land \gamma) \rightarrow (\SET X = \SET x \cf \gamma)$

\hspace{14pt} (where $\gamma\in\PCO$ without occurrences of $\cf$).




\noindent C8. $(\SET X = \SET x \cf \Pr(\alpha)\vartriangleright\epsilon) \leftrightarrow \Pr(\SET X = \SET x \cf \alpha)\vartriangleright\epsilon$.  

\hspace{16pt}(where $\vartriangleright=\geq$ or $>$).

\noindent C8b. $(\SET X = \SET x \cf \Pr(\alpha)\vartriangleright \Pr(\beta)) \leftrightarrow \Pr(\SET X = \SET x \cf \alpha)\vartriangleright$ 

\hspace{16pt} $\Pr(\SET X = \SET x \cf \beta)$ \hspace{10pt}(where $\vartriangleright=\geq$ or $>$).

\noindent C9. $\varphi_{End(Y)}\rightarrow (\SET W_Y = \SET w \cf \bigsqcup_{y\in \ran(Y)} Y=y)$.

\noindent C10. $(\varphi_{End(Y)})^C \rightarrow (Y=y \supset (\SET W_V=\SET w \cf Y=y))$ 

\hspace{16pt}(when $\gamma$ has no occurrences of $\cf$).

\noindent C11. 
$(X_1 \leadsto X_2 \land \dots \land X_{n-1} \leadsto X_n)\rightarrow (X_n \leadsto X_1)^C$ 

\hspace{20pt} (for $n>1$).



\vspace{5pt}

\noindent Rule Mon$_\cf$. \LARGE$\frac{\vdash \psi\rightarrow \chi   }{\vdash (\SET X = \SET x \cf \psi) \rightarrow (\SET X = \SET x \cf \chi)}$\normalsize 

\vspace{5pt}

\noindent We will refer to this list of axioms and rules as the 
 \textbf{deduction system}, and write $\Gamma \vdash \varphi$ if there is a countable sequence of $\PCO$ formulas $\varphi_1,\dots,\varphi_\kappa=\varphi$ (enumerated by ordinals $\leq\kappa$) where each formula in the list is either an axiom, a formula from $\Gamma$, or it follows from earlier formulas in the list by one of the rules. The sequence itself is called a \textbf{proof}. 

 We write $\vdash \varphi$ for $\emptyset \vdash \varphi$; if it holds, we say that $\varphi$ is a \textbf{theorem}.  Notice that some of the rules (Rep, Mon$_\supset$, Mon$_\cf$, $\rightarrow$to$\supset$) can only be applied to theorems, since they preserve validity but not truth.













\subsection{Remarks on the axioms and rules}



Rules MP, $\bot^\omega$, $\supset^\omega$ and axioms P1-2-3-4-5 and O1-2-3 are essentially adapted from the paper \cite{RasOgnMar2004}.
Our restriction $\delta+\epsilon\leq 1$ in axiom P3 is imposed by the syntax (we do not allow numbers greater than $1$ as symbols). The additional axiom P3b guarantees that, in fact, axiom scheme P3 is always applicable, in the sense that, if an instance of it is not admitted as an axiom, then the premises of said instance are contradictory.\footnote{It seems to us that an axiom analogous to P3b should be added also to the system in \cite{RasOgnMar2004}.}
Axiom P6 derives from \cite{RasOgnMar2004}, but in our case the correct formulation requires the interaction of the two conditionals $\supset$ (used to define $\equiv$) and $\rightarrow$; notice that $\Pr(\alpha\leftrightarrow\beta)=1$ is not a $\PCO$ formula, and that the analogous formulation $(\alpha\leftrightarrow\beta)\rightarrow (\Pr(\alpha)=\epsilon\rightarrow \Pr(\beta)=\epsilon)$ is \emph{not} valid. The variant P6b is our addition. 
These adaptations are due both to differences in the syntax (\cite{RasOgnMar2004} has an explicit conditional probability operator, while we talk of conditional probabilities only indirectly, by means of the  selective implication; and we have distinct logical operators at the level of events vs. the level of probabilities) and in the semantics (in particular, we differ in the treatment of truth over empty models). 

Our 
 Rule Mon$\supset$  allows omitting the axioms 8, 11 and 12 from \cite{RasOgnMar2004}, which are here proved in Lemma \ref{lemma: 8 11 12}. 
 
 Analogues of CP1-2 appear, for example, in \cite{OgnPerRas2008}, and in earlier literature.

Axioms C6, C7 and C11 take the same roles as the principles of \emph{Effectiveness}, \emph{Composition} and \emph{Recursivity} from \cite{GalPea1998}. 
The current, more intuitive form of axiom C7 was introduced in \cite{BarYan2022}. 
Halpern \cite{Hal2000} noticed that $\cf$ distributes over boolean operators, and formulated analogues of C1 and C2. The validity of C3-4-4b was pointed out in \cite{BarSan2020}, and the importance of C5 in \cite{BarYan2022}.

\section{Soundness}\label{sec: PCO soundness}
As already explained, many of the axioms and rules come from the literature and they are easily seen to be correct also in our semantic framework. The soundness of T1 follows from the fact that it involves essentially classical operators. The soundness of T2 is a consequence of the flatness of $\CO$; this point is explained in \cite{BarSan2020}, Appendix C (in particular, Lemmas C.1 and C.2). We prove in detail the soundness of the particularly complex or unusual axioms C7-9-10. 

\begin{proposition}[Soundness of axiom C7]
Let $\gamma$ be a $\PCO$ formula without  occurrences of $\cf$. Then 
 $T\models(\SET X = \SET x \land \gamma) \rightarrow (\SET X = \SET x \cf \gamma)$ for every $T$ of the appropriate signature.
\end{proposition}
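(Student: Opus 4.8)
The plan is to unfold the material conditional and exploit the fact that intervening with $do(\SET X = \SET x)$ on a multiteam that \emph{already} satisfies $\SET X = \SET x$ leaves the multiteam component unchanged, so that a $\cf$-free formula cannot tell the intervened team apart from the original. Since (as established in the excerpt) $\rightarrow$ is the genuine material conditional on $\PCO$, it suffices to assume $T \models \SET X = \SET x$ and $T \models \gamma$ and derive $T \models \SET X = \SET x \cf \gamma$. If $T^- = \emptyset$ the conclusion is immediate by the empty team property, so I assume $T^- \neq \emptyset$; then some $s \in T^-$ has $s(\SET X) = \SET x$, which forces $\SET X = \SET x$ to be consistent and the intervention to be defined. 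It therefore remains to show $T_{\SET X = \SET x} \models \gamma$.

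The first key step is the structural fact $T^-_{\SET X = \SET x} = T^-$. Because we work over recursive causal multiteams, $G_T$ is acyclic, so I fix a topological order of $\dom$ and prove, for each $s \in T^-$, that $s^\F_{\SET X = \SET x} = s$ by induction along that order. For $V = X_i \in \SET X$ the intervention sets $s^\F_{\SET X = \SET x}(V) = x_i$, which equals $s(V)$ since $T \models \SET X = \SET x$; for exogenous $V \notin \SET X$ the value $s(V)$ is copied verbatim; and for endogenous $V \notin \SET X$ the recomputation gives $\F_V(s^\F_{\SET X = \SET x}(\PA_V))$, where every parent precedes $V$ and hence, by the induction hypothesis, still carries its original value $s(\PA_V)$. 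Since dummy arguments are irrelevant, the compatibility constraint $\F_V(s(\SET W_V)) = s(V)$ then yields $s^\F_{\SET X = \SET x}(V) = s(V)$. Thus $s^\F_{\SET X = \SET x} = s$ for all $s \in T^-$, so $T_{\SET X = \SET x} = (T^-, \F_{\upharpoonright(\SET V \setminus \SET X)})$ agrees with $T = (T^-, \F)$ on the multiteam component and differs only in its function component.

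The second step is to observe that the truth of a $\cf$-free $\PCO$ formula depends only on the multiteam component. The only clause in the entire semantics that ever consults $\F$ is the one for $\cf$; when $\cf$ does not occur, $\F$ is simply never invoked. Concretely, for a $\cf$-free $\CO$ formula the satisfaction of $\Pr(\alpha)$-atoms reduces, via flatness, to a condition on individual assignments that makes no reference to $\F$, so each $(T^\alpha)^-$ and each quantity $P_T(\alpha), P_T(\beta)$ is fixed by $T^-$ alone; a routine induction over $\land$, $\sqcup$ and $\supset$ (whose antecedents are $\cf$-free $\CO$ formulas, and whose observations only reshape $T^-$) extends this to all $\cf$-free $\gamma \in \PCO$. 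Hence $(T^-, \F) \models \gamma$ iff $(T^-, \F_{\upharpoonright(\SET V \setminus \SET X)}) \models \gamma$, and combining this with $T \models \gamma$ gives $T_{\SET X = \SET x} \models \gamma$, i.e.\ $T \models \SET X = \SET x \cf \gamma$, as required.

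I expect the main obstacle to be the inductive argument of the second step: one must invoke recursivity (acyclicity of $G_T$) to license the well-founded induction, and take care that the recomputation at an endogenous variable appeals only to its parents, precisely so that the induction hypothesis is available where it is needed. The third step, although it is exactly where the $\cf$-freeness hypothesis on $\gamma$ becomes indispensable, is a straightforward induction once the flatness of $\CO$ is in hand.
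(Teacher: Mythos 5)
Your proof is correct and follows essentially the same route as the paper's (sketched) argument: first establish $(T_{\SET X=\SET x})^-=T^-$ by a well-founded induction along the causal graph, then show by induction on $\gamma$ that $\cf$-free formulas are insensitive to the function component. You merely fill in the details the paper leaves implicit (the topological-order induction, the consistency of $\SET X=\SET x$, and the empty-team case).
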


\begin{proof}(Sketch)
Since $T=(T^-,\F)\models \SET X = \SET x$, it is easy to see (by induction on an appropriate notion of distance in the causal graph) that $(T_{\SET X = \SET x})^- = T^-$. Since $\gamma$ has no occurrences of $\cf$ it can be proved (by induction on $\gamma$) that $(T^-,\G)\models \gamma$ iff $T\models \gamma$. In particular, since $(T_{\SET X = \SET x})^- = T^-$ and $T\models \gamma$, we have $T_{\SET X = \SET x} \models \gamma$, and thus $T\models \SET X = \SET x \cf \gamma$. 
\end{proof}

The remainging soundness proofs require some lemmas.

\begin{lemma}\label{lemma: endogenous variables have a direct cause}
 Let $T$ be a causal multiteam. Suppose variable $Y$ is endogenous in $T$. Then there is a variable $X\in \SET W_Y$ and values $x\neq x'\in Ran(X)$, $y\neq y'\in Ran(Y)$ and $\SET w\in Ran(\SET W_{XY})$ such that   \[
T\models(\SET W_{XY}=\SET w \land X=x)\cf Y=y
\]
and
\[
T\models(\SET W_{XY}=\SET w \land X=x')\cf Y=y',
\]
so that $T\models \varphi_{DC(X,Y)}$.
\end{lemma}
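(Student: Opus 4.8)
The key ingredient is that, by the definition of a causal multiteam, the function $\F_Y$ associated to the endogenous variable $Y$ is \emph{non-constant}. The plan is to read the conclusion $T\models\varphi_{DC(X,Y)}$ as the assertion that $Y$ has at least one parent, and to extract the required witnesses $X,x,x',y,y',\SET w$ directly from $\F_Y$. Since these witnesses depend only on $\F_Y$ and not on $T^-$, the empty case is immediate from the empty team property; so I may concentrate on verifying the two counterfactuals, which is a genuine computation only when $T$ is nonempty.

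First I would use non-constancy to pick two tuples $\SET u,\SET u'\in\ran(\SET W_Y)$ with $\F_Y(\SET u)\neq\F_Y(\SET u')$. These may differ in several coordinates, so the real work is an interpolation (``one coordinate at a time'') argument: I would build a finite path $\SET u=\SET v_0,\SET v_1,\dots,\SET v_k=\SET u'$ in which consecutive tuples differ in exactly one coordinate. As $\F_Y(\SET v_0)\neq\F_Y(\SET v_k)$, some consecutive pair $\SET v_i,\SET v_{i+1}$ must satisfy $\F_Y(\SET v_i)\neq\F_Y(\SET v_{i+1})$. Taking $X$ to be the coordinate in which they differ, $x$ and $x'$ the two values of $X$ in $\SET v_i,\SET v_{i+1}$, $\SET w$ the common values of the remaining variables $\SET W_{XY}$, and $y\dfn\F_Y(\SET v_i)$, $y'\dfn\F_Y(\SET v_{i+1})$, I obtain $x\neq x'$, $y\neq y'$, and $X\in\SET W_Y$ as required.

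It then remains to check the two counterfactuals. Since the antecedent $\SET W_{XY}=\SET w\land X=x$ assigns a single value to every variable of $\dom\setminus\{Y\}$, it is a consistent intervention fixing all of $\SET W_Y$ to the tuple $\SET v_i$. Under $do(\SET W_{XY}=\SET w\land X=x)$ the surviving function component is exactly $\F_{\upharpoonright\{Y\}}$, so $Y$ is the unique endogenous variable in the resulting team and every assignment recomputes its value as $\F_Y(\SET v_i)=y$ (the dummy/parent distinction is harmless, since $\F_Y$ reads only the $\PA_Y$-coordinates of the fully specified tuple, and no rows are deleted by an intervention). Hence $T_{\SET W_{XY}=\SET w,\,X=x}\models Y=y$, giving $T\models(\SET W_{XY}=\SET w\land X=x)\cf Y=y$; the symmetric computation with $x',y'$ gives the second counterfactual. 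Together these are exactly the disjunct of $\varphi_{DC(X,Y)}$ indexed by $x,x',y,y',\SET w$, so $T\models\varphi_{DC(X,Y)}$.

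I expect the interpolation step to be the main obstacle: non-constancy only yields two inputs with distinct outputs, and turning this into a \emph{single} flipped coordinate (with all other variables held fixed at one common $\SET w$) is precisely what forces $Y$ to have a direct cause. The remaining delicate point is bookkeeping in the intervention semantics --- confirming that fixing all of $\SET W_Y$ leaves $Y$ endogenous and recomputes it through $\F_Y$, rather than affecting the multiteam in any other way.
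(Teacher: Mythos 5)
Your proof is correct and follows essentially the same route as the paper's: both extract two inputs of $\F_Y$ with distinct outputs from non-constancy and then run a one-coordinate-at-a-time interpolation to locate a single flipped variable $X$ that changes the value of $Y$. The only cosmetic difference is that you work directly with $\F_Y$ and verify the counterfactual semantics explicitly at the end, whereas the paper phrases the whole walk in terms of counterfactual satisfaction statements.
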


\begin{proof}
Since $Y$ is endogenous, it has an associated function $\F_V$ that is not constant. Then, there are at least two distinct interventions on $Y$ that produce distinct values for $Y$; say, up to some reordering there are two tuples of distinct variables $\SET Z\SET X$ such that $\SET Z\SET X = \SET W_Y$ and values $\SET z\in \ran(\SET Z), \SET x\neq \SET x'\in \ran(\SET X)$, $y\neq y'\in \ran(Y)$ such that   
 \[
T\models(\SET Z=\SET z \land \SET X=\SET x)\cf Y=y
\]
and
\[
T\models(\SET Z=\SET z \land \SET X=\SET x')\cf Y=y' \hspace{20pt}  (*)
\]
hold. Let us write more explicitly $\SET X = X_1\dots X_n$, $\SET x=x_1\dots x_n$, $\SET x'=x_1'\dots  x_n'$. For uniformity of notation let us rename the rightmost variable in $\SET Z$ as $X_0$, and its value as $x_0$. Now, there must be an $i\in \{1,\dots,n\}$ such that 
 \[
T\models(\SET Z=\SET z \land X_1\dots X_{i-1}=x_1'\dots x_{i-1}' \land X_i\dots X_n = x_i\dots x_n)
\]

\vspace{-15pt}

\[
\hspace{-132pt} \cf Y=y
\]
while
\[
T\models(\SET Z=\SET z \land X_1\dots X_{i}=x_1'\dots x_{i}' \land X_{i+1}\dots X_n = x_{i+1}\dots x_n)
\]

\vspace{-15pt}

\[
\hspace{-130pt} \cf Y=y^*
\]
for some $y^*\neq y$. Indeed, if it were not so, we would obtain at the step $i=n$ that $(\SET Z=\SET z \land \SET X=\SET x')\cf Y=y$, contradicting (*). But then, the variable $X$ we are looking for is $X_i$, where $i$ is the first index where $y^*$ is obtained.
\end{proof}

\begin{lemma}\label{lemma: DC and End}
Let $T = (T^-,\F)$ be a causal multiteam of signature $\sigma$, and consider the versions of $\varphi_{DC(X,Y)}$ and $\varphi_{End(V)}$ for signature $\sigma$. Then:
\begin{enumerate}
    \item $T \models \varphi_{DC(X,Y)}$ iff $Y$ is endogenous and $X$ is a non-dummy argument of $\F_Y$.
    \item $T \models \varphi_{End(Y)}$ iff  $Y$ is endogenous.
\end{enumerate}
\end{lemma}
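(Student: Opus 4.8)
The plan is to reduce both items to a semantic analysis of the ``complete'' interventions that occur in $\varphi_{DC(X,Y)}$, namely interventions of the form $do(\SET W_{XY}=\SET w\land X=x)$, which fix every variable of $\dom$ except $Y$ (recall $\SET W_{XY}=\dom\setminus\{X,Y\}$). Note first that $\varphi_{DC(X,Y)}$ is in fact a $\CO$ formula: every counterfactual occurring in it has a literal consequent, and $\land,\lor$ stay within $\CO$; hence it is flat, so for nonempty $T$ we may reason singleton-by-singleton. Throughout I assume $T$ nonempty, since for an empty team the left-hand sides hold vacuously while the right-hand sides need not, so nonemptiness is implicitly required (as for the analogous fact about $\Phi^\F$). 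The key observation I would isolate concerns a singleton $(\{s\},\F)$ and such a complete intervention: if $Y$ is endogenous, the intervention recomputes $Y$ to the single value $\F_Y(\SET w,x)$ (its parents all lie in $\dom\setminus\{Y\}$ and are all fixed), whereas if $Y$ is exogenous it leaves $Y=s(Y)$ untouched. Consequently $(\{s\},\F)\models(\SET W_{XY}=\SET w\land X=x)\cf Y=y$ iff $\F_Y(\SET w,x)=y$ in the endogenous case, and iff $s(Y)=y$ in the exogenous case.

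For Part 1, direction ($\Leftarrow$): assuming $Y$ endogenous and $X\in\PA_Y$, the definition of non-dummy argument yields $\SET w$ and $x\neq x'$ with $\F_Y(\SET w,x)\neq\F_Y(\SET w,x')$; setting $y\dfn\F_Y(\SET w,x)$ and $y'\dfn\F_Y(\SET w,x')$ (so $y\neq y'$), the observation shows that every singleton of $T^-$ satisfies the disjunct indexed by $(x,x',y,y',\SET w)$, whence $T\models\varphi_{DC(X,Y)}$ by flatness. For direction ($\Rightarrow$): by flatness fix some $s\in T^-$; then $(\{s\},\F)$ satisfies the tensor disjunction, and on a singleton this forces $(\{s\},\F)$ to satisfy one disjunct, indexed by some $(x,x',y,y',\SET w)$ with $y\neq y'$. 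If $Y$ were exogenous, the observation would give $s(Y)=y$ and $s(Y)=y'$ simultaneously, contradicting $y\neq y'$; hence $Y$ is endogenous, and then $\F_Y(\SET w,x)=y\neq y'=\F_Y(\SET w,x')$, so $\F_Y$ depends on $X$, i.e. $X\in\PA_Y$.

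For Part 2, I would first unfold the $\sqcup$-semantics: $T\models\varphi_{End(Y)}=\bigsqcup_{X\in\SET W_Y}\varphi_{DC(X,Y)}$ iff $T\models\varphi_{DC(X,Y)}$ for some $X\in\SET W_Y$. The direction ($\Rightarrow$) is then immediate from Part 1 (any such $X$ forces $Y$ endogenous). The direction ($\Leftarrow$) is exactly Lemma~\ref{lemma: endogenous variables have a direct cause}, which supplies, for endogenous $Y$, a witness $X\in\SET W_Y$ with $T\models\varphi_{DC(X,Y)}$.

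The main obstacle is establishing the key observation together with the endogenous/exogenous dichotomy: one must check that a complete intervention collapses the post-intervention value of an endogenous $Y$ to a single function value (so the counterfactual becomes a statement purely about $\F_Y$), while for exogenous $Y$ it cannot force two distinct values at once. This last point is precisely what rules out the exogenous case in the ($\Rightarrow$) direction of Part 1, and hence drives the whole argument. The nonemptiness assumption should be flagged explicitly, since the stated biconditionals fail on empty teams.
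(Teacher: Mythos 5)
Your proof is correct and follows essentially the same route as the paper's: both arguments hinge on the observation that a complete intervention $do(\SET W_{XY}=\SET w\land X=x)$ forces $Y=\F_Y(\SET w,x)$ when $Y$ is endogenous and leaves $Y$ untouched when it is exogenous, and both obtain Part 2 ($\Leftarrow$) from Lemma~\ref{lemma: endogenous variables have a direct cause}; your explicit reduction to singletons via flatness of $\CO$ is only a presentational variant of the paper's team-level reasoning. Your remark that nonemptiness of $T$ is tacitly required (the paper's proof also uses it silently) is a fair observation about the statement rather than a defect of your argument.
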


\begin{proof}
   1) Suppose $T \models \varphi_{DC(X,Y)}$.  Then, by the definition of $\varphi_{DC(X,Y)}$ there are $x\neq x'\in \ran(X), y\neq y' \in \ran(Y), \SET w\in Ran(\SET W_Y)$ such that 
\[
T\models(\SET W_{XY}=\SET w \land X=x)\cf Y=y
\]
\[
T\models(\SET W_{XY}=\SET w \land X=x')\cf Y=y'.
\]   
Since $y\neq y'$, this means that some values in the $Y$-column of $T$ can be changed by some intervention on variables distinct from $Y$. Thus $Y$ is not an exogenous variable and it has a corresponding function $\F_Y$. Now we must have 
\[
T\models(\SET W_{XY}=\SET w \land X=x)\cf Y=\F_Y(\SET wx)
\]
\[
T\models(\SET W_{XY}=\SET w \land X=x')\cf Y=\F_Y(\SET wx'),
\]   
from which it follows that $\F_Y(\SET wx) = y \neq y' = \F_Y(\SET wx')$. Thus $X$ is not a dummy argument of $Y$. 

Vice versa, if $Y$ is endogenous and $X$ is a non-dummy argument of $\F_Y$, then there are $\SET w\in \ran(\SET W_Y)$ and $x\neq x'\in \ran(X)$ such that $\F_Y(\SET wx) \neq \F_Y(\SET wx')$. Also, by definition of intervention,  
\[
T\models(\SET W_{XY}=\SET w \land X=x)\cf Y=\F_Y(\SET wx)
\]
\[
T\models(\SET W_{XY}=\SET w \land X=x')\cf Y=\F_Y(\SET wx')
\]   
Since $\F_Y(\SET wx) \neq \F_Y(\SET wx')$, these, together, entail that $T \models \varphi_{DC(X,Y)}$.

2) Suppose $T \models \varphi_{End(Y)}$. Then, for some $X\neq Y$, $T \models \varphi_{DC(X,Y)}$. So, by 1., in particular $Y$ is endogenous.

Suppose $Y$ is endogenous. Then, by Lemma \ref{lemma: endogenous variables have a direct cause}, the function $\F_V$ has at least one non-dummy argument $X$. Then, by 1., 
 $T\models \varphi_{DC(X,Y)}$. Thus $T \models \varphi_{End(Y)}$.
\end{proof}

\begin{proposition}[Soundness of axiom C9]
For any $T$ of appropriate signature, 
    $$T\models\varphi_{End(Y)}\rightarrow (\SET W_Y = \SET w \cf \bigsqcup_{y\in \ran(Y)} Y=y).$$
\end{proposition}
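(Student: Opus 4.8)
The plan is to unfold the material conditional $\rightarrow$ and reduce to the substantive case. By the semantics of $\rightarrow$ the implication holds vacuously whenever $T\not\models\varphi_{End(Y)}$, and it holds by the empty team property whenever $T^-=\emptyset$. So I would fix a nonempty $T=(T^-,\F)$ with $T\models\varphi_{End(Y)}$ and apply part~2 of Lemma~\ref{lemma: DC and End} to conclude that $Y$ is endogenous in $T$. It then remains to show $T\models \SET W_Y=\SET w\cf\bigsqcup_{y\in\ran(Y)}Y=y$.

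The heart of the argument is to compute the post-intervention multiteam $T_{\SET W_Y=\SET w}$. First I would note that $\SET W_Y=\SET w$ is consistent, since it fixes each of the distinct variables of $\dom\setminus\{Y\}$ to a single value; hence the counterfactual must genuinely be evaluated on $T_{\SET W_Y=\SET w}$. The key observation is that $Y$ is the \emph{only} variable untouched by the intervention, and since $Y$ is endogenous its value in each $s^\F_{\SET W_Y=\SET w}$ is recomputed by the third clause of the definition of intervention as $\F_Y(s^\F_{\SET W_Y=\SET w}(\PA_Y))$. Because $\PA_Y\subseteq\SET W_Y$, every parent of $Y$ is set by the intervention to its prescribed value in $\SET w$; therefore $s^\F_{\SET W_Y=\SET w}(\PA_Y)$ is one and the same tuple for every $s\in T^-$, and $Y$ takes a single team-wide value $y^{*}=\F_Y(\SET w)$ (reading off the $\PA_Y$-entries of $\SET w$).

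From here the conclusion is immediate. Since every assignment of $T_{\SET W_Y=\SET w}$ sends $Y$ to $y^{*}$, the literal clause gives $T_{\SET W_Y=\SET w}\models Y=y^{*}$; as $y^{*}\in\ran(Y)$, this disjunct witnesses $T_{\SET W_Y=\SET w}\models\bigsqcup_{y\in\ran(Y)}Y=y$ via the semantics of $\sqcup$, and the counterfactual clause then yields the desired $T\models \SET W_Y=\SET w\cf\bigsqcup_{y\in\ran(Y)}Y=y$. The proof is essentially routine once the reduction via Lemma~\ref{lemma: DC and End} is in place; the only step requiring care is the central observation that intervening on all variables except the endogenous $Y$ forces $Y$ to a single value across the whole team, which must be extracted explicitly from the recursive definition of $s^\F_{\SET W_Y=\SET w}$ by noting that the recomputation of $Y$ depends only on parent values, all of which are uniformly fixed by the intervention.
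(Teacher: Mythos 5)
Your proposal is correct and follows essentially the same route as the paper's proof: reduce to the case where $T\models\varphi_{End(Y)}$ holds, invoke Lemma~\ref{lemma: DC and End}(2) to get that $Y$ is endogenous, and then observe that intervening on all of $\SET W_Y$ forces every assignment of $T_{\SET W_Y=\SET w}$ to give $Y$ the single value $\F_Y(\SET w)$, which witnesses the disjunction. The paper states this last step more tersely (``all assignments are identical save for their $Key$''), whereas you justify it explicitly from the recursive definition of $s^\F_{\SET W_Y=\SET w}$ and the inclusion $\PA_Y\subseteq\SET W_Y$; this is a welcome elaboration rather than a different argument.
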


\begin{proof} 
Suppose $T=(T^-,\F)\models \varphi_{End(Y)}$. 
Then, by Lemma \ref{lemma: DC and End}, 2.,  $Y$ is an endogenous variable. But then (for any $\SET w \in \SET W_Y$) all assignments $s\in T_{\SET W_Y = \SET w}^-$ are identical (save for their $Key$), i.e. they are all such that $s(\SET W_Y)= \SET w$ and $s(Y) = \F_Y(\SET w)$.
Thus  $T_{\SET W_Y = \SET w}\models Y = \F_Y(\SET w)$. Then $T_{\SET W_Y = \SET w}\models \bigsqcup_{y\in \ran(Y)} Y=y$, and finally $T\models \SET W_Y = \SET w \cf \bigsqcup_{y\in \ran(Y)} Y=y$. 
\end{proof}

\begin{proposition}[Soundness of axiom C10]
For any $T$ of appropriate signature, 
    $$T\models(\varphi_{End(Y)})^C \rightarrow (Y=y \supset (\SET W_V=\SET w \cf Y=y)).$$
\end{proposition}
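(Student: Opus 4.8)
The plan is to peel off the three connectives $\rightarrow$, $\supset$ and $\cf$ one at a time and reduce the claim to a single observation about how an intervention acts on an exogenous variable. Since $\rightarrow$ is the material conditional and both antecedent and consequent enjoy the empty team property, I may restrict attention to a nonempty $T = (T^-,\F)$ with $T \models (\varphi_{End(Y)})^C$, and aim to show $T \models Y = y \supset (\SET W_Y = \SET w \cf Y = y)$. First I would read off the antecedent: as $T$ is nonempty, Theorem~\ref{thm: contradictory negation} gives $T \not\models \varphi_{End(Y)}$, and Lemma~\ref{lemma: DC and End}(2) then tells us precisely that $Y$ is exogenous in $T$.

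Next I would unfold the selective implication. By its semantic clause, $T \models Y = y \supset (\SET W_Y = \SET w \cf Y = y)$ holds iff $T^{Y=y} \models \SET W_Y = \SET w \cf Y = y$. If $T^{Y=y}$ is empty the empty team property settles the matter, so I assume it is nonempty. The intervention formula $\SET W_Y = \SET w$ is consistent, since $\SET W_Y = \dom \setminus \{Y\}$ lists pairwise distinct variables; hence, unfolding the counterfactual, it remains to verify $(T^{Y=y})_{\SET W_Y = \SET w} \models Y = y$, i.e. that every assignment of the intervened multiteam maps $Y$ to $y$. Note that $T^{Y=y}$ is a causal sub-multiteam of $T$, so it carries the same function component $\F$ and $Y$ is still exogenous in it.

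The heart of the argument — and the only place exogeneity is used — is the behaviour of the $Y$-column under $do(\SET W_Y = \SET w)$. Since $Y \notin \SET W_Y$ and $Y$ is exogenous, the recursive definition of the post-intervention assignment $s^{\F}_{\SET W_Y = \SET w}$ falls into its second case at $V = Y$, yielding $s^{\F}_{\SET W_Y = \SET w}(Y) = s(Y)$ for every $s$; in words, intervening on all variables other than $Y$ leaves $Y$ untouched. As $T^{Y=y}$ consists exactly of those $s$ with $s(Y) = y$, each intervened assignment still sends $Y$ to $y$, so $(T^{Y=y})_{\SET W_Y = \SET w} \models Y = y$ and we are done. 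I expect the main (indeed essentially the only) obstacle to be recognising that this invariance of $Y$ follows immediately from the second clause of the definition of $s^{\F}_{\SET X = \SET x}$; the rest is routine unfolding of semantic clauses together with the empty team property.
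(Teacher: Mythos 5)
Your proof is correct and follows essentially the same route as the paper's: obtain exogeneity of $Y$ from $(\varphi_{End(Y)})^C$ via Lemma \ref{lemma: DC and End}(2), then observe that an intervention on $\SET W_Y$ leaves the $Y$-column of $T^{Y=y}$ untouched, so the counterfactual consequent holds. You are merely more explicit than the paper about the supporting details (the empty-team cases, consistency of $\SET W_Y = \SET w$, and the appeal to the second clause of the recursive definition of $s^{\F}_{\SET X=\SET x}$), all of which is sound.
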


\begin{proof}
Suppose $T\models(\varphi_{End(Y)})^C$. 
Then, by Lemma \ref{lemma: DC and End}, 2., $Y$ is exogenous. 
Since $Y$ is exogenous, in each assignment the value of $Y$ is not affected by interventions on $\SET W_Y$. Thus, in particular, since $T^{Y=y}\models Y=y$, we have $T^{Y=y}\models \SET W_Y = \SET w\cf Y=y$, and thus $T\models Y=y \supset (\SET W_Y = \SET w\cf Y=y)$. 
\end{proof}

\section{Completeness}\label{sec: PCO completeness}

We establish that the deduction system is \textbf{strongly complete} for $\PCO$, i.e., that for all $\Gamma\cup\{\varphi\} \subseteq \PCO$,
$
\Gamma \models \varphi \Rightarrow \Gamma \vdash \varphi.
$

\subsection{Derived rules}

First, let us justify four derived rules that we will be often using implicitly.

\begin{theorem}[Deduction] \label{thm: deduction theorem}
 If $\Gamma\cup\{\varphi\} \vdash \psi$, then $\Gamma \vdash \varphi \rightarrow \psi$.
\end{theorem}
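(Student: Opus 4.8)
The plan is to proceed by transfinite induction on the length of a fixed proof of $\psi$ from $\Gamma \cup \{\varphi\}$. Writing this proof as $\varphi_1, \dots, \varphi_\kappa = \psi$ (enumerated by ordinals $\leq \kappa$), I would establish for every $\lambda \leq \kappa$ the stronger claim $\Gamma \vdash \varphi \rightarrow \varphi_\lambda$, of which the case $\lambda = \kappa$ is the desired conclusion. There is no distinguished limit case: whatever the ordinal $\lambda$, the formula $\varphi_\lambda$ is justified as an axiom, a member of $\Gamma$, equal to $\varphi$, or obtained by one of the rules from formulas at earlier positions. Throughout I rely on the fact that T1 supplies every classical propositional tautology in the outer connectives $\land, \sqcup, \rightarrow, {}^C$, so arbitrary $\PCO$ formulas may be substituted for propositional variables and treated as atoms; together with MP this gives full classical propositional reasoning at the level of $\rightarrow$.

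First I would dispatch the easy cases. If $\varphi_\lambda$ is an axiom, a member of $\Gamma$, or is obtained by one of the validity-only rules (Rep, Mon$_\supset$, Mon$_\cf$, $\rightarrow$to$\supset$) -- which by their side condition apply only to theorems and hence yield theorems -- then $\Gamma \vdash \varphi_\lambda$, and the tautology $\varphi_\lambda \rightarrow (\varphi \rightarrow \varphi_\lambda)$ together with MP gives $\Gamma \vdash \varphi \rightarrow \varphi_\lambda$. If $\varphi_\lambda$ is $\varphi$ itself, then $\Gamma \vdash \varphi \rightarrow \varphi_\lambda$ is just the tautology $\varphi \rightarrow \varphi$. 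If $\varphi_\lambda$ follows by MP from earlier $\varphi_\mu$ and $\varphi_\mu \rightarrow \varphi_\lambda$, the induction hypothesis gives $\Gamma \vdash \varphi \rightarrow \varphi_\mu$ and $\Gamma \vdash \varphi \rightarrow (\varphi_\mu \rightarrow \varphi_\lambda)$, and two applications of MP through the tautology $(\varphi \rightarrow \varphi_\mu) \rightarrow ((\varphi \rightarrow (\varphi_\mu \rightarrow \varphi_\lambda)) \rightarrow (\varphi \rightarrow \varphi_\lambda))$ finish the step.

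The crux is the two infinitary rules. Suppose $\varphi_\lambda = \psi_0 \rightarrow \bot$ is obtained by $\bot^\omega$ from the premises $\psi_0 \rightarrow \Pr(\alpha) \neq \epsilon$, all appearing at earlier positions as $\epsilon$ ranges over $[0,1] \cap \mathbb Q$. By the induction hypothesis $\Gamma \vdash \varphi \rightarrow (\psi_0 \rightarrow \Pr(\alpha) \neq \epsilon)$ for every such $\epsilon$. Using the propositional tautology $(\varphi \rightarrow (\psi_0 \rightarrow \theta)) \rightarrow ((\varphi \land \psi_0) \rightarrow \theta)$ I absorb $\varphi$ into the antecedent, obtaining $\Gamma \vdash (\varphi \land \psi_0) \rightarrow \Pr(\alpha) \neq \epsilon$ for all $\epsilon$. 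Since $\varphi \land \psi_0 \in \PCO$, I may now apply $\bot^\omega$ with its premise schema instantiated at $\psi := \varphi \land \psi_0$, deriving $\Gamma \vdash (\varphi \land \psi_0) \rightarrow \bot$; the converse tautology then moves $\varphi$ back out to give $\Gamma \vdash \varphi \rightarrow (\psi_0 \rightarrow \bot)$. The case of $\supset^\omega$ is identical, with $\theta$ the biconditional $\Pr(\alpha \land \beta) = \delta\epsilon \leftrightarrow \Pr(\alpha) = \epsilon$ and conclusion $\alpha \supset \Pr(\beta) = \delta$.

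I expect the main obstacle to be exactly this interaction with the infinitary rules, together with confirming that the validity-only rules cannot break the theorem. The absorption trick works only because the schematic premise variable $\psi$ in $\bot^\omega$ and $\supset^\omega$ ranges over all of $\PCO$, so it tolerates being replaced by $\varphi \land \psi_0$; this uniformity is what lets the deduction theorem coexist with countably branching rules. The validity-only rules are harmless precisely because their side condition forces their conclusions to be theorems, so they never carry an assumption from $\Gamma$ across -- which is essential, since those rules preserve validity but not truth and would otherwise invalidate the deduction theorem.
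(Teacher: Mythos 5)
Your proof is correct and follows essentially the same route as the paper: transfinite induction on the proof length, dispatching the validity-only rules by noting they yield theorems, and handling the two infinitary rules by absorbing $\varphi$ into the antecedent via a propositional tautology so that the rule can be re-applied with premise schema $\varphi\land\psi_0$ (the paper spells this out for $\supset^\omega$ and cites the analogous argument for $\bot^\omega$). No gaps.
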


\begin{proof}
We proceed by transfinite induction on the length of a proof of $\Gamma\cup\{\varphi\} \vdash \psi$.

\begin{itemize}
\item The base cases for $\psi\in\Gamma\cup \{\varphi\}$ or $\psi$ being an axiom are treated as in any introduction to logic. The same goes for the case when $\psi$ is obtained via Rule MP. For the case that $\psi$ is obtained via Rule $\bot^\omega$, see \cite{RasOgnMar2004}, theorem 4.



\item $\psi = \theta\rightarrow (\alpha \supset \Pr(\beta) = \delta)$ is obtained by applying Rule $\supset^\omega$. Thus, for each $\epsilon\in (0,1]\cap \mathbb Q$,
\begin{align*}
& \Gamma \cup \{\varphi\} \vdash \theta\rightarrow (\Pr(\alpha\land\beta)=\delta\epsilon \leftrightarrow \Pr(\alpha)=\epsilon) \\
 & \Gamma  \vdash \varphi \rightarrow (\theta\rightarrow (\Pr(\alpha\land\beta)=\delta\epsilon \leftrightarrow \Pr(\alpha)=\epsilon)) \tag{i.h.}\\
 & \Gamma  \vdash (\varphi \land \theta)\rightarrow (\Pr(\alpha\land\beta)=\delta\epsilon \leftrightarrow \Pr(\alpha)=\epsilon)) \tag{T1+Rule MP}\\
 \end{align*}
 Then,
\begin{align*} 
 & \Gamma  \vdash (\varphi \land \theta)\rightarrow (\alpha \supset \Pr(\beta) = \delta) \tag{Rule $\supset^\omega$}\\
 & \Gamma  \vdash \varphi \rightarrow  (\theta \rightarrow  (\alpha \supset \Pr(\beta) = \delta)). \tag{T1+Rule MP}\\
\end{align*}

\item Case $\psi$ is obtained via rule Rep, Mon$_\cf$, Mon$_\supset$ or $\rightarrow$to$\supset$. In each of these cases, since the rule is only applicable to theorems, it yields a theorem; i.e., $\vdash \psi$. On the other hand, by T1, $\vdash \psi\rightarrow(\varphi\rightarrow\psi)$. Thus, by Rule MP, $\vdash \varphi\rightarrow\psi$. Thus $\Gamma\vdash \varphi\rightarrow\psi$.
\end{itemize}
\end{proof}

\begin{theorem}[Deduction theorem for $\supset$]\label{thm: deduction theorem for supset}
Let $\alpha\in \CO$. If $\alpha\vdash \eta$, then $\vdash \alpha \supset \eta$. If furthermore $\eta\in \CO$, then $\vdash \Pr(\alpha)= \epsilon \rightarrow \Pr(\eta)\geq \epsilon$.
\end{theorem}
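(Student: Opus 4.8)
The plan is to chain together the ordinary Deduction Theorem (Theorem~\ref{thm: deduction theorem}), the validity-preserving Rule $\rightarrow$to$\supset$, and the probabilistic axioms P1 and P6b. For the first claim, suppose $\alpha\vdash\eta$ with $\alpha\in\CO$. Applying the Deduction Theorem with $\Gamma=\emptyset$ immediately yields $\vdash\alpha\rightarrow\eta$. Since $\alpha\in\CO$, Rule $\rightarrow$to$\supset$ is applicable and converts this into $\vdash\alpha\supset\eta$, which is the first assertion.

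For the second claim, assume in addition that $\eta\in\CO$. Then $\alpha\supset\eta$ is itself a $\CO$ formula (the $\CO$ grammar is closed under $\supset$), so $\Pr(\alpha\supset\eta)$ is a well-formed probabilistic atom and axiom P1 applies to it, giving $\vdash (\alpha\supset\eta)\leftrightarrow \Pr(\alpha\supset\eta)=1$. Extracting the left-to-right conjunct via a propositional tautology (T1) together with Rule MP, and combining with the already-derived $\vdash\alpha\supset\eta$, a further application of MP yields $\vdash\Pr(\alpha\supset\eta)=1$. Finally, instantiating axiom P6b with $\beta:=\eta$ gives $\vdash \Pr(\alpha\supset\eta)=1 \rightarrow (\Pr(\alpha)=\epsilon\rightarrow\Pr(\eta)\geq\epsilon)$, so one last application of MP produces $\vdash \Pr(\alpha)=\epsilon\rightarrow\Pr(\eta)\geq\epsilon$.

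The step that deserves care, rather than a genuine obstacle, is the transition between the two readings of $\alpha\supset\eta$: in the first claim it is produced by Rule $\rightarrow$to$\supset$ as a $\PCO$ formula, whereas P1 and P6b treat it as a $\CO$ formula sitting inside the scope of $\Pr$. This is exactly where the extra hypothesis $\eta\in\CO$ is used, since only then is $\alpha\supset\eta$ a $\CO$ formula and $\Pr(\alpha\supset\eta)$ well-formed; because $\CO\subseteq\PCO$ and the two occurrences are literally the same string of symbols, the derivation of $\alpha\supset\eta$ is legitimately reusable in the $\Pr$ context. I would not expect any further difficulty, as the remaining manipulations are propositional bookkeeping handled by T1 and MP.
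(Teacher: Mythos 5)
Your proposal is correct and follows essentially the same route as the paper's own proof: the Deduction Theorem plus Rule $\rightarrow$to$\supset$ for the first claim, then P1 (applied to the $\CO$ formula $\alpha\supset\eta$) and P6b with Rule MP for the second. The extra care you take about $\alpha\supset\eta$ being a $\CO$ formula when $\eta\in\CO$ is exactly the point the paper leaves implicit.
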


\begin{proof}
Suppose $\alpha\vdash \eta$. By the deduction theorem (\ref{thm: deduction theorem}), $\vdash \alpha \rightarrow \eta$. By Rule $\rightarrow$to$\supset$, $\vdash \alpha \supset \eta$. 

Now. if $\eta\in\CO$, by P1 and Rule MP, $\vdash\Pr(\alpha \supset \eta)=1$. By P6b and Rule MP, $\vdash \Pr(\alpha)= \epsilon \rightarrow \Pr(\eta)\geq \epsilon$.    
\end{proof}


\begin{lemma}\label{lemma: P8* and P8**}
Whenever $\alpha,\beta\in\CO$:
\begin{enumerate}
    \item $\vdash \Pr(\alpha\land\beta)=1\rightarrow\Pr(\alpha)=1$.
    \item $\vdash (\Pr(\alpha)=1 \land \Pr(\beta)=1) \rightarrow \Pr(\alpha\land\beta)=1$
\end{enumerate}
\end{lemma}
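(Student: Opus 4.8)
The plan is to treat both clauses as essentially \emph{propositional} consequences, using axiom P1 ($\alpha\leftrightarrow\Pr(\alpha)=1$) as a two-way bridge between the probabilistic atom $\Pr(\gamma)=1$ and the bare $\CO$-formula $\gamma$. Once P1 lets us trade ``$\Pr(\gamma)=1$'' for ``$\gamma$'' and back, all that remains is tautological reasoning in the material conditional $\rightarrow$, which is available through axiom scheme T1 together with Rule MP. The observation that makes this work is that $\alpha$, $\beta$ and $\alpha\land\beta$ are all $\CO$-formulas, hence $\PCO$-formulas, so that P1 applies to each of them and the relevant Boolean tautologies are legitimate T1 instances.

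For clause 1, I would first extract from the P1 instance $(\alpha\land\beta)\leftrightarrow\Pr(\alpha\land\beta)=1$ the conjunct $\vdash\Pr(\alpha\land\beta)=1\rightarrow(\alpha\land\beta)$ (by T1 and MP). Next, since $(p\land q)\rightarrow p$ is a classical tautology, T1 yields $\vdash(\alpha\land\beta)\rightarrow\alpha$. Finally the P1 instance for $\alpha$ gives $\vdash\alpha\rightarrow\Pr(\alpha)=1$. Chaining these three material implications by hypothetical syllogism (again a T1 tautology plus MP) produces $\vdash\Pr(\alpha\land\beta)=1\rightarrow\Pr(\alpha)=1$. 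Alternatively, one could obtain this clause directly from Theorem~\ref{thm: deduction theorem for supset}: since $\alpha\land\beta\vdash\alpha$, it gives $\vdash\Pr(\alpha\land\beta)=\epsilon\rightarrow\Pr(\alpha)\geq\epsilon$; specialising $\epsilon=1$ and combining with $\Pr(\alpha)\leq1$, which is just the P2 instance $\Pr(\neg\alpha)\geq0$ read through the definitional abbreviation, again yields the claim.

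For clause 2, I would run the same bridge in the opposite order. From the P1 instances for $\alpha$ and $\beta$ I extract $\vdash\Pr(\alpha)=1\rightarrow\alpha$ and $\vdash\Pr(\beta)=1\rightarrow\beta$. The conjunction-introduction tautology $p\rightarrow(q\rightarrow(p\land q))$ gives, via T1, $\vdash\alpha\rightarrow(\beta\rightarrow(\alpha\land\beta))$, while the P1 instance for $\alpha\land\beta$ gives $\vdash(\alpha\land\beta)\rightarrow\Pr(\alpha\land\beta)=1$. Assembling these by purely propositional manipulation in $\land,\rightarrow$ (T1 plus MP, and, if convenient, the Deduction Theorem~\ref{thm: deduction theorem} to discharge the two assumptions) delivers $\vdash(\Pr(\alpha)=1\land\Pr(\beta)=1)\rightarrow\Pr(\alpha\land\beta)=1$.

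There is no deep obstacle here; the only point requiring care is the two-layer bookkeeping of connectives — distinguishing the $\CO$-level conjunction inside the scope of $\Pr(\cdot)$ from the $\PCO$-level $\land$ and $\rightarrow$ used to combine probabilistic atoms — and checking that each tautology invoked is genuinely a T1 instance (rather than, say, a T2 instance involving the non-material $\supset$). Because $\CO$-conjunction and $\PCO$-conjunction coincide syntactically and semantically on $\CO$-arguments, $\alpha\land\beta$ is unambiguous and P1 applies to it verbatim, so this bookkeeping goes through without incident.
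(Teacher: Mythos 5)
Your proposal is correct and follows essentially the same route as the paper: both clauses are handled by using axiom P1 to pass between $\Pr(\gamma)=1$ and the bare $\CO$-formula $\gamma$ (for $\gamma$ among $\alpha$, $\beta$, $\alpha\land\beta$) and then closing the gap with T1 tautologies and Rule MP. Your version is, if anything, slightly more careful than the paper's (which assumes the antecedent as a hypothesis and implicitly appeals to the deduction theorem), but the underlying argument is identical.
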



\begin{proof}
   1) Suppose $\vdash \Pr(\alpha\land\beta)=1$. Then, by P1, $\vdash \alpha \land \beta$. Then, by T1, $\vdash \alpha$. Again by P1, $\vdash \Pr(\alpha)=1$.

   2) Suppose $\vdash \Pr(\alpha)=1 \land \Pr(\beta)=1$. Then, by T1, $\vdash \Pr(\alpha)=1$ and $\vdash \Pr(\beta)=1$. By P1, $\vdash \alpha$ and $\vdash \beta$. Then $\vdash \alpha \land \beta$ by T1. Finally, $\vdash\Pr(\alpha \land \beta)=1$ by P1. 
\end{proof}

\begin{lemma}[Modus ponens for $\supset$]\label{lemma: MP for supset}
Let $\alpha\in\CO_{\sigma}$ and $\Gamma\cup\{\psi\}\subseteq\PCO_{\sigma}$. Then,
\begin{enumerate}
\item If $\Gamma\vdash \alpha\supset\psi$ and $\Gamma\vdash\alpha$, 
then $\Gamma\vdash\psi$.

\item
if $\Gamma\vdash \Pr(\alpha\supset\beta)=1$ and $\Gamma \vdash \Pr(\alpha)=1$, then $\Gamma \vdash \Pr(\beta)=1$.

\end{enumerate}
\end{lemma}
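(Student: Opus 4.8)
The plan is to prove the two parts of Lemma~\ref{lemma: MP for supset} by reducing them to principles already established in the excerpt, namely the observation axioms O4 and O1, the probability axioms P1 and P6b, and Lemma~\ref{lemma: P8* and P8**}. Both statements are ``modus ponens'' style facts for the selective implication $\supset$, mirroring ordinary modus ponens for the material conditional $\rightarrow$; the key tools are the entailment from $\supset$ to $\rightarrow$ (axiom O4) at the level of truth, and the analogue P6b at the level of probabilities.

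For part 1, I would first apply axiom O4, which gives $\vdash (\alpha\supset\psi)\rightarrow(\alpha\rightarrow\psi)$. From the hypothesis $\Gamma\vdash\alpha\supset\psi$ and Rule~MP, I obtain $\Gamma\vdash\alpha\rightarrow\psi$. Since $\rightarrow$ is the genuine material conditional in $\PCO$, one more application of Rule~MP to this together with the second hypothesis $\Gamma\vdash\alpha$ yields $\Gamma\vdash\psi$. The only subtlety is that $\alpha\rightarrow\psi$ uses the defined material conditional $\alpha^C\sqcup\psi$, but Rule~MP is stated directly for $\rightarrow$, so no expansion is needed.

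For part 2, I would work at the probabilistic level. The hypotheses are $\Gamma\vdash\Pr(\alpha\supset\beta)=1$ and $\Gamma\vdash\Pr(\alpha)=1$. Here I note that $\alpha\supset\beta$ is itself a $\CO$ formula (the antecedent and consequent are both $\CO$), so $\Pr(\alpha\supset\beta)$ is a well-formed probabilistic atom. Axiom P6b with $\epsilon=1$ gives $\vdash\Pr(\alpha\supset\beta)=1\rightarrow(\Pr(\alpha)=1\rightarrow\Pr(\beta)\geq 1)$; applying Rule~MP twice with the two hypotheses yields $\Gamma\vdash\Pr(\beta)\geq 1$. I then need to upgrade $\geq 1$ to $=1$: since $\Pr(\beta)\leq 1$ is provable (it is the abbreviation $\Pr(\neg\beta)\geq 0$, an instance of P2) and $\Pr(\beta)=1$ abbreviates $\Pr(\beta)\geq 1\land\Pr(\beta)\leq 1$, combining these via T1 and Rule~MP gives $\Gamma\vdash\Pr(\beta)=1$.

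The main obstacle I anticipate is the bookkeeping around the defined symbols rather than any deep difficulty: making sure $\alpha\supset\beta$ counts as a $\CO$ formula so that P6b applies with $\Pr(\alpha\supset\beta)$, and correctly handling the abbreviation-unfolding for $\Pr(\beta)=1$ and $\Pr(\beta)\leq 1$ so that the final $\geq$-to-$=$ step goes through cleanly. Once these definitional points are checked, both parts are short derivations driven by O4, P6b, P2, and propositional reasoning (T1) plus Rule~MP.
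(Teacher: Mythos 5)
Your proof is correct. Part 1 is exactly the paper's argument: axiom O4 plus Rule MP to pass from $\alpha\supset\psi$ to $\alpha\rightarrow\psi$, then Rule MP again. For part 2 you diverge slightly: the paper obtains it from part 1 ``via a few applications of axiom P1'' — that is, it descends to the level of events ($\Pr(\alpha\supset\beta)=1$ and $\Pr(\alpha)=1$ yield $\alpha\supset\beta$ and $\alpha$ by P1, part 1 yields $\beta$, and P1 again yields $\Pr(\beta)=1$) — whereas you stay entirely at the level of probability atoms, instantiating P6b at $\epsilon=1$ and then upgrading $\Pr(\beta)\geq 1$ to $\Pr(\beta)=1$ via the P2 instance $\Pr(\neg\beta)\geq 0$ (which is the abbreviation for $\Pr(\beta)\leq 1$) and T1. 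Both routes are short and sound; the paper's is arguably more economical since it reuses part 1 and only needs P1, while yours avoids the detour through event-level reasoning and makes the $\geq$-to-$=$ bookkeeping explicit, which the paper leaves implicit. Your definitional checks (that $\alpha\supset\beta\in\CO$ so the probability atom is well-formed, and that Rule MP applies directly to the defined $\rightarrow$) are the right points to verify and they do go through.
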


\begin{proof}
1) Assume $\Gamma\vdash\alpha$ and $\Gamma\vdash \alpha\supset\psi$. From the latter we get, by axiom O4 and Rule MP, that $\Gamma\vdash \alpha\rightarrow\psi$. Thus, by Rule MP again, $\Gamma\vdash\psi$.



2) Part 2) can be obtained from part 1) via a few applications of axiom P1.
\end{proof}

\begin{proposition}[Contraposition] \label{prop: contraposition}
Suppose $\Gamma,\chi^C \vdash \psi^C$. Then $\Gamma,\psi \vdash \chi$.
\end{proposition}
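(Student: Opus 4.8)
The plan is to derive $\Gamma,\psi \vdash \chi$ from the hypothesis $\Gamma,\chi^C \vdash \psi^C$ by combining the Deduction Theorem (\ref{thm: deduction theorem}) with the basic propositional behaviour of the weak contradictory negation $^C$, which is captured by axiom scheme T1 (classical tautologies in $\land,\sqcup,\rightarrow,{}^C,\top,\bot$). The key semantic fact backing this is Theorem \ref{thm: contradictory negation}: on nonempty teams $^C$ behaves like a genuine contradictory negation, so contraposition should be a purely propositional manoeuvre once the hypothesis is in conditional form.

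First I would apply the Deduction Theorem to the hypothesis $\Gamma,\chi^C \vdash \psi^C$, obtaining $\Gamma \vdash \chi^C \rightarrow \psi^C$. The central step is then to pass from $\chi^C \rightarrow \psi^C$ to $\psi \rightarrow \chi$. This is the contrapositive at the level of $\rightarrow$ and $^C$, and I expect it to be an instance of (or derivable from) a classical propositional tautology available through T1 — specifically the scheme $(\chi^C \rightarrow \psi^C) \rightarrow (\psi \rightarrow \chi)$, using that $^C$ is an involution in the relevant propositional sense so that the double negations $\psi^{CC}$ and $\chi^{CC}$ collapse back to $\psi$ and $\chi$. Having this tautology as a theorem, one application of Rule MP yields $\Gamma \vdash \psi \rightarrow \chi$. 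Finally, adding $\psi$ to the assumptions and applying Rule MP once more gives $\Gamma,\psi \vdash \chi$, as desired.

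The delicate point — and the main obstacle — is justifying the contraposition/involution step cleanly within T1. Axiom T1 licenses all instances of classical propositional tautologies built from $\land,\sqcup,\rightarrow,{}^C,\top,\bot$, treating $^C$ as classical negation at the propositional level; I would need to confirm that $\psi$ and $\chi$ (which are themselves compound $\PCO$ formulas, not propositional atoms) can be uniformly substituted for propositional variables in such a tautology, so that $(\chi^C \rightarrow \psi^C)\rightarrow(\psi\rightarrow\chi)$ counts as an admissible instance. Since T1 is stated as a schema over arbitrary $\PCO$ formulas in those connectives, this substitution should be legitimate, and the only care required is that the double negation of $\psi$ collapses to $\psi$ at the propositional level of the tautology rather than requiring a separate derivation that $\psi^{CC} \leftrightarrow \psi$. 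I would therefore phrase the tautology directly as $(\chi^C \rightarrow \psi^C) \rightarrow (\psi \rightarrow \chi)$ so that no explicit involution lemma is needed, keeping the proof to the three moves above: Deduction Theorem, one MP against a T1-tautology, and a final MP with $\psi$.
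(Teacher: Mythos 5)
Your proof is correct and follows essentially the same route as the paper's: both arguments reduce contraposition to the Deduction Theorem plus classical propositional reasoning under T1, treating ${}^C$ as classical negation (the paper merely orders the moves differently, first deriving $\Gamma,\chi^C,\psi\vdash\bot$ and then discharging $\chi^C$, whereas you contrapose via the tautology $(\chi^C\rightarrow\psi^C)\rightarrow(\psi\rightarrow\chi)$ directly). Your worry about substituting compound formulas into T1 tautology schemes is unfounded in exactly the way you suspect: the paper's own proof relies on the same kind of instances, e.g.\ $(\psi\land\psi^C)\rightarrow\bot$ and $(\chi^C\rightarrow\bot)\rightarrow\chi$.
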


\begin{proof}
Assume $\Gamma,\chi^C \vdash \psi^C$. Then by classical logic (axiom T1+Rule MP) $\Gamma,\chi^C, \psi \vdash \bot$. Thus, by the deduction theorem (\ref{thm: deduction theorem}) $\Gamma,\psi \vdash \chi^C \rightarrow \bot$. Thus, again by classical logic, $\Gamma,\psi\vdash \chi$.
\end{proof}



With the help of these additional proof methods, it is not difficult to prove that our proof system is strongly complete for $\CO$ formulas. This result will be used later in order to replace a very long proof-theoretical argument with a brief semantic proof.

\begin{theorem}[$\CO$-completeness]\label{thm: CO completeness}
Let $\sigma$ be a signature and $\Gamma\cup\{\alpha\}\subseteq \CO$. Then $\Gamma\models\alpha$ implies $\Gamma\vdash \alpha$, where the latter is witnessed by a finite derivation.    
\end{theorem}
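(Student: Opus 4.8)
The plan is to exploit that $\CO$ is \emph{flat} and that the signature $\sigma$ is finite, so that the relevant semantic space collapses to a finite set of single-assignment causal multiteams over which every operator behaves classically. First I would reduce multiteam entailment to single-assignment entailment: by flatness and the empty team property, $T=(T^-,\F)$ satisfies a $\CO$ formula iff every singleton $(\{s\},\F)$ with $s\in T^-$ does, so $\Gamma\models\alpha$ holds iff $(\{s\},\F)\models\Gamma$ implies $(\{s\},\F)\models\alpha$ for every \emph{state} $(\{s\},\F)$ (single-assignment causal multiteam of signature $\sigma$). Since $\dom$ and all ranges are finite there are only finitely many states $\sigma_1,\dots,\sigma_N$, and this yields compactness for free: for each $\sigma_i$ with $\sigma_i\not\models\alpha$ pick some $\gamma_i\in\Gamma$ with $\sigma_i\not\models\gamma_i$ (it exists because $\Gamma\models\alpha$), and set $\Gamma_0=\{\gamma_i\}$; then $\Gamma_0$ is finite and still $\Gamma_0\models\alpha$. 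Writing $\beta\dfn\bigwedge\Gamma_0\in\CO$, it suffices to produce a finite derivation of $\beta\vdash\alpha$, since $\Gamma_0\vdash\beta$ is immediate by T1.

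Next I would describe each state by a \emph{characteristic} $\CO$ formula: for $\sigma_i=(\{s\},\F)$ put $\chi_i\dfn\delta_s\land\Phi^{\F}$, where $\delta_s\dfn\bigwedge_{V\in\dom}V=s(V)$ and $\Phi^{\F}$ is the function-characterizing formula of Section~\ref{subs: further notation}; note that both conjuncts lie in $\CO$. The two facts I need are: (L1) $\vdash\bigvee_i\chi_i$, i.e. every causal multiteam is provably a tensor-union of fully specified states; and (L2, \emph{per-state determination}) for every $\gamma\in\CO$ and every state $\sigma_i$, $\chi_i\vdash\gamma$ if $\sigma_i\models\gamma$ and $\chi_i\vdash\neg\gamma$ if $\sigma_i\not\models\gamma$, proved by induction on $\gamma$.

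Granting (L1) and (L2), the derivation closes by classical manipulations available through T1, T2 and the $\supset$-machinery. For each $i$ one gets $\beta\land\chi_i\vdash\alpha$: if $\sigma_i\models\beta$ then $\sigma_i\models\alpha$, so $\chi_i\vdash\alpha$ by (L2) and hence $\beta\land\chi_i\vdash\alpha$; otherwise $\sigma_i\models\neg\beta$ (single states are classical), so $\chi_i\vdash\neg\beta$ and then $\beta\land\chi_i\vdash\bot\vdash\alpha$ via Modus Ponens for $\supset$ (Lemma~\ref{lemma: MP for supset}) and ex falso. By the deduction theorem for $\supset$ (Theorem~\ref{thm: deduction theorem for supset}) this gives $\vdash(\beta\land\chi_i)\supset\alpha$, which O5$_\supset$ rewrites as $\vdash\beta\supset(\chi_i\supset\alpha)$; combining over $i$ by O5$_\land$ yields $\vdash\beta\supset\bigwedge_i(\chi_i\supset\alpha)$. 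From (L1) we have $\vdash\bigvee_i\chi_i$, hence $\vdash\beta\rightarrow\bigvee_i\chi_i$ and, by Rule $\rightarrow$to$\supset$, $\vdash\beta\supset\bigvee_i\chi_i$. Writing $D\dfn\bigwedge_i(\chi_i\supset\alpha)\land\bigvee_i\chi_i$, merging the last two facts (O5$_\land$) gives $\vdash\beta\supset D$, while $\vdash D\supset\alpha$ is an instance of T2; O4 turns the latter into $\vdash D\rightarrow\alpha$, and Rule Mon$_\supset$ together with Rule MP then deliver $\vdash\beta\supset\alpha$. Finally $\beta\vdash\alpha$ by Lemma~\ref{lemma: MP for supset}, so $\Gamma\vdash\alpha$ by a finite derivation.

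The heart of the argument, and the main obstacle, is establishing (L1) and the counterfactual clause of (L2). For (L2) the induction is routine for literals (axioms A1--A3) and for $\land,\supset$ (the O5 axioms and Lemma~\ref{lemma: MP for supset}), but the case $\gamma=\SET X=\SET x\cf\psi$ requires turning the \emph{syntactic} record of $\F$ inside $\Phi^{\F}$ into the actual post-intervention state: I would derive from $\chi_i$ that $\SET X=\SET x\cf\delta_{s'}$ and $\SET X = \SET x \cf \Phi^{\F_{\SET X=\SET x}}$ hold, where $s'=s^{\F}_{\SET X=\SET x}$ is the recursively computed post-intervention assignment, using the conjuncts of $\Phi^{\F}$ together with C1--C4, C6--C7 and Rule Mon$_\cf$, and then reduce to $\psi$ at the new state by the induction hypothesis. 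For (L1) I would begin from the instance of A3 with $\SET Y=\SET W$ (which tensor-covers all assignments), refine each assignment-disjunct by its forced function component through the $\Phi^{\F}$-characterization, and appeal to the recursivity axiom C11 to guarantee that only genuinely recursive $\F$ admit a consistent $\chi_i$; this acyclicity constraint is exactly what makes the finite enumeration of states exhaustive at the proof-theoretic level. The delicate point throughout is that dual negation is not classical over arbitrary multiteams, so every equivalence must be funnelled through single-assignment (hence classical) behaviour, which is precisely what the characteristic formulas $\chi_i$ secure.
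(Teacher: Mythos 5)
Your proposal takes a genuinely different route from the paper's. The paper discharges this theorem essentially by citation: it invokes the strong completeness of the finitary natural-deduction system for $\CO$ from \cite{BarYan2022} and merely verifies, axiom by axiom and rule by rule, that that system is derivable here in finitely many steps (\textsf{ValDef}/\textsf{ValUnq} from A3/A1, the classical rules from T2, $\cf$\textsf{ExFalso} and $\neg\hspace{-4pt}\cf$\textsf{E} from C1--C6 and Mon$_\cf$, and so on). You instead reconstruct a completeness argument from scratch: flatness reduces entailment to the finitely many single-assignment states, which buys finiteness of $\Gamma_0$ and hence a finite derivation, and the characteristic formulas $\chi_i=\delta_s\land\Phi^{\F}$ support a per-state determination lemma that is reassembled through T1/T2, the O5 axioms, O4 and Lemma~\ref{lemma: MP for supset}. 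The reduction to $\Gamma_0$ and the final assembly are correct as written. What your approach buys is self-containedness; what it costs is that the two lemmas you defer --- (L1) and the counterfactual clause of (L2) --- are precisely the content of the theorem the paper imports, and that is where all the work lives.

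Two concrete thin spots. First, (L1) as you describe it starts from a false premise: an assignment-disjunct $\delta_s$ of A3 does \emph{not} have a ``forced function component,'' since distinct causal multiteams can share an assignment while differing in their laws. What is actually needed is a provable tensor-disjunction over all candidate function tables: derive $\bigvee_{v\in \ran(V)}(\SET W_V=\SET w\cf V=v)$ for every $V$ and $\SET w$ (pushing $\cf$ through the defined $\lor$ via C1, C3, C5 and the derived $\neg\hspace{-4pt}\cf$\textsf{E}), distribute $\land$ over $\lor$ by T2, identify constant tables with exogeneity (so that $\xi_\sigma(V)$ rather than $\eta_\sigma(V)$ is the relevant conjunct, given $\delta_s$ and A1/A2), and eliminate cyclic tables with C11. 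Second, in the $\cf$ clause of (L2), deriving $\chi_i\vdash\SET X=\SET x\cf\delta_{s'}$ is nontrivial even for an exogenous $V\notin\SET X$: the conjunct $\xi_\sigma(V)$ of $\Phi^{\F}$ only speaks of the \emph{total} intervention on $\SET W_V$, so recovering the partial intervention $do(\SET X=\SET x)$ requires a case split on the post-intervention values of the remaining variables combined with C4/C4b/C7 under Mon$_\cf$; for endogenous variables one must additionally unroll the recursion along a topological order of the causal graph, which is licensed only once recursivity has been secured. None of this is unworkable --- it is essentially Halpern's completeness argument for recursive models --- but as written these steps are placeholders for the proof the paper chooses to cite rather than derivations in the system.
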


\begin{proof}
The paper \cite{BarYan2022} proves the strong completeness of a (finitary) system of axioms and (natural deduction style) rules for $\CO$ formulas.\footnote{In that paper, $\lor$ and $\neg$ are part of the official syntax, while $\alpha\supset \beta$ abbreviates $\neg\alpha\lor\beta$.}
We will show that the axioms and rules of said system are derivable in finitely many steps in our system, from which the statement immediately follows.

The axioms \textsf{ValDef} and \textsf{ValUnq} of \cite{BarYan2022} are special cases of our axioms A3 and A1, respectively. 

The rules for $\land,\lor,\neg$ in \cite{BarYan2022} are classical, so they follow easily from instances of T2. More precisely, rules without discharged assumptions (from $\gamma_1,\dots,\gamma_n$ infer $\alpha$) correspond to instances of T2 of the form  $\vdash (\gamma_1\land\dots\land\gamma_n)\supset \alpha$; 
thus, by Lemma \ref{lemma: MP for supset}, $\gamma_1\land\dots\land\gamma_n\vdash\alpha$, and by T2 again $\gamma_1,\dots,\gamma_n\vdash \alpha$. 

Rules for $\lor$ and $\neg$ with discharged assumptions are dealt with by turning the discharged assumptions into antecedents of $\supset$. For example, the elimination rule for $\lor$ (from $\Gamma\vdash \beta\lor\gamma$, $\Gamma,\beta\vdash\alpha$ and $\Gamma,\gamma\vdash\alpha$ conclude $\Gamma\vdash \alpha$) is derived by observing that $\vdash [(\beta\lor\gamma)\land(\beta\supset\alpha)\land(\gamma\supset\alpha)] \supset \alpha$ as an instance of T2; and then proceeding as in the previous case.

Let us turn to the axioms and rules for $\cf$. The axiom $\cf$\textsf{ExFalso}, $(\SET Y = \SET y \land X=x \land X=x') \cf \bot$ (when $x\neq x'$), is derivable in our system as follows. By axiom C6 we have (*): $\vdash(\SET Y = \SET y \land X=x \land X=x') \cf X=x$ and (**): $\vdash(\SET Y = \SET y \land X=x \land X=x') \cf X=x'$. By A1, $\vdash X=x \rightarrow X\neq x'$ and thus $\vdash X=x \rightarrow \neg X = x'$ by axiom A2. By (*) and Mon$_\cf$, then, $\vdash(\SET Y = \SET y \land X=x \land X=x')  \cf \neg X=x'$. Together with (**), by C1, we get $\vdash(\SET Y = \SET y \land X=x \land X=x')  \cf (X=x' \land\neg X=x')$.  By T2 we have that $\vdash (X=x' \land\neg X=x')\supset \bot$, and then by O4 $\vdash (X=x' \land\neg X=x')\rightarrow \bot$. Thus, by by Mon$_\cf$ again we obtain $\vdash(\SET Y = \SET y \land X=x \land X=x')  \cf\bot$.

The rule $\neg\hspace{-4pt}\cf$\textsf{E} (from $\neg (\SET X = \SET x\cf \alpha)$ infer $\SET X = \SET x\cf\neg\alpha$) is derivable as follows. $\neg (\SET X = \SET x\cf \alpha)$ abbreviates $(\SET X = \SET x\cf \alpha)\supset \bot$. By axioms O1b and O1 we obtain $(\SET X = \SET x\cf \alpha)\supset (\SET X = \SET x\cf\bot)$. By C3, we get $\SET X = \SET x\cf (\alpha\supset \bot)$, which can be abbreviated as $\SET X = \SET x\cf \neg\alpha$. 

Rule $\cf$\textsf{RPL}$_A$ corresponds in a straightforward way to a special case of our rule Rep,   and rule $\cf$\textsf{RPL}$_C$ corresponds to Mon$_\cf$. The remaining axioms and rules for $\cf$ correspond to our axioms C1-C4-C4b-C5-C6-C7-C11.
\end{proof}

\subsection{Consequences of the axioms}

Let us prove a few important consequences of the axioms.

\begin{lemma}\label{lemma: reverse axioms}
Let $\alpha\in\CO$ and $\delta,\epsilon\in [0,1]\cap\mathbb Q$.
\begin{enumerate}
\item $\vdash \Pr(\alpha)\geq\delta \rightarrow \Pr(\alpha)> \epsilon$ (if $\epsilon < \delta$).

\item $\vdash \Pr(\alpha)>\epsilon \rightarrow \Pr(\alpha)\geq\epsilon$.
 \end{enumerate}
\end{lemma}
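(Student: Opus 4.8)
The plan is to derive both statements as ``mirror images'' of axioms P4 and P5, exploiting the fact that the abbreviations $\Pr(\gamma)\leq\epsilon \dfn \Pr(\neg\gamma)\geq 1-\epsilon$ and $\Pr(\gamma)< \epsilon \dfn \Pr(\neg\gamma)> 1-\epsilon$ turn an upper bound on $\Pr(\gamma)$ into a lower bound on $\Pr(\neg\gamma)$. Instantiating P4 and P5 at the $\CO$ formula $\neg\alpha$ will therefore, once the abbreviations are unfolded, already be statements about $\Pr(\neg\neg\alpha)$ phrased with $\geq$ and $>$; a single application of Rule Rep to strip the double negation then yields the two claims.

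First I would record the double-negation equivalence $\vdash \neg\neg\alpha \leftrightarrow \alpha$. Since $\alpha\equiv\neg\neg\alpha$ unfolds to $(\alpha\supset\neg\neg\alpha)\land(\neg\neg\alpha\supset\alpha)$, it is a $\CO$-instance of a classical propositional tautology in $\land,\supset,\neg$, hence a theorem by T2. Extracting the two conjuncts (a T1 tautology plus Rule MP) gives $\vdash \alpha\supset\neg\neg\alpha$ and $\vdash\neg\neg\alpha\supset\alpha$; applying axiom O4 and Rule MP to each converts $\supset$ into $\rightarrow$, yielding $\vdash\alpha\rightarrow\neg\neg\alpha$ and $\vdash\neg\neg\alpha\rightarrow\alpha$, which combine (T1 plus MP) into $\vdash\neg\neg\alpha\leftrightarrow\alpha$.

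For part 1, I would instantiate P4 at $\neg\alpha$ with the bounds $1-\delta$ and $1-\epsilon$; the side condition $1-\epsilon>1-\delta$ holds precisely because $\epsilon<\delta$. This gives $\vdash \Pr(\neg\alpha)\leq 1-\delta \rightarrow \Pr(\neg\alpha)< 1-\epsilon$. Unfolding the abbreviations and simplifying $1-(1-\delta)=\delta$ and $1-(1-\epsilon)=\epsilon$, this formula is \emph{literally} $\vdash \Pr(\neg\neg\alpha)\geq\delta \rightarrow \Pr(\neg\neg\alpha)>\epsilon$, so no extra deductive step is needed at this point. Rewriting $\neg\neg\alpha$ to $\alpha$ by Rule Rep (using $\vdash\neg\neg\alpha\leftrightarrow\alpha$) then produces $\vdash \Pr(\alpha)\geq\delta\rightarrow\Pr(\alpha)>\epsilon$. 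Part 2 is entirely analogous: instantiating P5 at $\neg\alpha$ with bound $1-\epsilon$ gives, after unfolding, $\vdash \Pr(\neg\neg\alpha)>\epsilon\rightarrow\Pr(\neg\neg\alpha)\geq\epsilon$, and one application of Rep yields $\vdash\Pr(\alpha)>\epsilon\rightarrow\Pr(\alpha)\geq\epsilon$.

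The only genuinely delicate point is the bookkeeping of the abbreviations: one must check that instantiating P4 and P5 at $\neg\alpha$ and unfolding the definitions of $\leq$ and $<$ really reproduces the double-negated atoms with exactly the intended rational bounds, so that Rep has precisely the subformula $\neg\neg\alpha$ to rewrite and that all rationals involved stay in $[0,1]\cap\mathbb{Q}$. Everything else is routine propositional manipulation (T1, T2, MP) together with the single structural axiom O4 used to pass from $\supset$ to $\rightarrow$.
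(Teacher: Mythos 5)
Your proof is correct, but it reaches the two claims by a different mechanism than the paper. The paper invokes its contraposition principle (Proposition \ref{prop: contraposition}) together with the deduction theorem: since $(\Pr(\alpha)>\epsilon)^C$ is $\Pr(\alpha)\leq\epsilon$ and $(\Pr(\alpha)\geq\delta)^C$ is $\Pr(\alpha)<\delta$, each claim reduces to its $^C$-contrapositive, which is \emph{verbatim} an instance of P4 (resp.\ P5) at $\alpha$ itself. You instead instantiate P4 and P5 at $\neg\alpha$, let the definitional duality of the abbreviations $\Pr(\gamma)\leq\epsilon \dfn \Pr(\neg\gamma)\geq 1-\epsilon$ and $\Pr(\gamma)<\epsilon \dfn \Pr(\neg\gamma)>1-\epsilon$ flip the inequalities back to $\geq$ and $>$, and then strip the resulting $\neg\neg\alpha$ with Rule Rep via $\vdash\neg\neg\alpha\leftrightarrow\alpha$ (itself obtained from T2, T1, O4 and MP). Both arguments exploit exactly the same underlying duality between upper and lower probability bounds; the paper's route packages it in the $^C$ operator and needs the contraposition meta-rule plus the deduction theorem, while yours is a direct axiom instantiation at the cost of the extra double-negation bookkeeping and one application of Rep. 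Your side-condition check ($1-\epsilon>1-\delta$ iff $\epsilon<\delta$) and the observation that $1-\delta,1-\epsilon$ remain in $[0,1]\cap\mathbb{Q}$ are the only delicate points, and you handle both correctly.
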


\begin{proof}
1) By contraposition (Proposition \ref{prop: contraposition}) and the deduction theorem (\ref{thm: deduction theorem}) it suffices to prove $(\Pr(\alpha)> \epsilon)^C \rightarrow (\Pr(\alpha)\geq\delta)^C$, i.e. $\Pr(\alpha) \leq \epsilon \rightarrow \Pr(\alpha) < \delta$. But, since $\epsilon < \delta$, this is guaranteed by axiom P4.

2) By contraposition (Proposition \ref{prop: contraposition}) and the deduction theorem (\ref{thm: deduction theorem}) it suffices to prove $(\Pr(\alpha) \geq \epsilon)^C \rightarrow (\Pr(\alpha)>\epsilon)^C$, i.e. $\Pr(\alpha) < \epsilon \rightarrow \Pr(\alpha) \leq \epsilon$. But this is an instance of axiom P5.
\end{proof}

\begin{lemma}\label{lemma: probability facts}
Let $\alpha\in\CO$ and $\delta,\epsilon\in [0,1]\cap\mathbb Q$.
\begin{enumerate}
\item $\vdash \Pr(\alpha)\geq \delta \rightarrow \Pr(\alpha)\geq \epsilon$ (when $\epsilon < \delta$).
\item $\vdash \Pr(\alpha)\leq \delta \rightarrow \Pr(\alpha)\leq \epsilon$ (when $\epsilon > \delta$).
\item $\vdash \Pr(\alpha) = \delta \rightarrow \Pr(\alpha)\neq \epsilon$ (when $\epsilon \neq \delta$).
\end{enumerate}
\end{lemma}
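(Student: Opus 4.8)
The plan is to reduce all three parts to the two strict-to-nonstrict and nonstrict-to-strict implications already secured in Lemma~\ref{lemma: reverse axioms}, using only propositional reasoning (instances of T1 together with Rule MP) to chain implications, eliminate conjuncts, and introduce disjuncts. Parts~1 and~2 are single chains; part~3 splits into the cases $\epsilon<\delta$ and $\epsilon>\delta$. Throughout, the key move is to remember that $\Pr(\alpha)\leq\epsilon$, $\Pr(\alpha)<\epsilon$, $\Pr(\alpha)=\delta$ and $\Pr(\alpha)\neq\epsilon$ are mere abbreviations: $\Pr(\alpha)\leq\epsilon$ \emph{is} the formula $\Pr(\neg\alpha)\geq 1-\epsilon$, $\Pr(\alpha)<\epsilon$ \emph{is} $\Pr(\neg\alpha)>1-\epsilon$, $\Pr(\alpha)=\delta$ \emph{is} $\Pr(\alpha)\geq\delta\land\Pr(\alpha)\leq\delta$, and $\Pr(\alpha)\neq\epsilon$ \emph{is} $\Pr(\alpha)>\epsilon\sqcup\Pr(\alpha)<\epsilon$. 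Since $\neg\alpha\in\CO$ whenever $\alpha\in\CO$, every lemma about $\Pr(\alpha)$ can be reinstantiated with $\neg\alpha$ in place of $\alpha$.

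For part~1, assuming $\epsilon<\delta$, I would combine the two clauses of Lemma~\ref{lemma: reverse axioms}: clause~1 gives $\vdash\Pr(\alpha)\geq\delta\rightarrow\Pr(\alpha)>\epsilon$ (applicable precisely because $\epsilon<\delta$), and clause~2 gives $\vdash\Pr(\alpha)>\epsilon\rightarrow\Pr(\alpha)\geq\epsilon$; transitivity of $\rightarrow$ (a classical tautology in $\rightarrow$, hence an instance of T1) together with Rule MP yields $\vdash\Pr(\alpha)\geq\delta\rightarrow\Pr(\alpha)\geq\epsilon$. For part~2, assuming $\epsilon>\delta$, I would simply unfold the abbreviations: the claim $\Pr(\alpha)\leq\delta\rightarrow\Pr(\alpha)\leq\epsilon$ is literally $\Pr(\neg\alpha)\geq 1-\delta\rightarrow\Pr(\neg\alpha)\geq 1-\epsilon$. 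Since $\epsilon>\delta$ gives $1-\epsilon<1-\delta$, this is exactly part~1 instantiated with $\neg\alpha$, $1-\delta$, $1-\epsilon$ in the roles of $\alpha$, $\delta$, $\epsilon$; no further work is needed.

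For part~3, assuming $\epsilon\neq\delta$, I would argue by cases. If $\epsilon<\delta$: from $\Pr(\alpha)=\delta$ conjunction elimination (T1) gives $\Pr(\alpha)\geq\delta$, and clause~1 of Lemma~\ref{lemma: reverse axioms} gives $\Pr(\alpha)>\epsilon$; a disjunction-introduction tautology $\Pr(\alpha)>\epsilon\rightarrow(\Pr(\alpha)>\epsilon\sqcup\Pr(\alpha)<\epsilon)$ (instance of T1) then delivers $\Pr(\alpha)\neq\epsilon$. If $\epsilon>\delta$: conjunction elimination gives $\Pr(\alpha)\leq\delta$, i.e.\ $\Pr(\neg\alpha)\geq 1-\delta$; since $1-\epsilon<1-\delta$, clause~1 applied to $\neg\alpha$ gives $\Pr(\neg\alpha)>1-\epsilon$, which is $\Pr(\alpha)<\epsilon$, and disjunction introduction again yields $\Pr(\alpha)\neq\epsilon$. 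Assembling the two cases under the hypothesis $\epsilon\neq\delta$ (one of the two strict inequalities must hold) completes the derivation. The only point demanding care—and the nearest thing to an obstacle—is the bookkeeping of abbreviations and side conditions: one must verify that the reflected thresholds $1-\delta,1-\epsilon$ remain in $[0,1]\cap\Q$ and that the inequality $\epsilon>\delta$ correctly translates into $1-\epsilon<1-\delta$ so that the side condition of Lemma~\ref{lemma: reverse axioms} is met; there is no genuinely hard step beyond this symbol-pushing.
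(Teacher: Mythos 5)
Your proof is correct and, for part~1 and the overall shape of part~3, coincides with the paper's: chain the two clauses of Lemma~\ref{lemma: reverse axioms} by transitivity, then for part~3 split on $\epsilon<\delta$ versus $\epsilon>\delta$ and finish by disjunction introduction. The only divergence is in part~2 and the $\epsilon>\delta$ subcase of part~3: where the paper applies axioms P4 and P5 directly to $\alpha$, you unfold the abbreviations $\Pr(\alpha)\leq\delta$ and $\Pr(\alpha)<\epsilon$ into statements about $\Pr(\neg\alpha)$ and re-instantiate part~1 (respectively Lemma~\ref{lemma: reverse axioms}, 1.) with $\neg\alpha$ and the reflected thresholds. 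Since Lemma~\ref{lemma: reverse axioms}, 1.\ is itself obtained from P4 by contraposition, the two routes are equivalent in substance, and your checks of the side conditions ($1-\epsilon<1-\delta$ iff $\epsilon>\delta$, with the reflected thresholds remaining in $[0,1]\cap\mathbb{Q}$, and $\neg\alpha\in\CO$) are accurate.
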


\begin{proof}
1) By Lemma \ref{lemma: reverse axioms}, 1., we get  $\vdash \Pr(\alpha)\geq \delta \rightarrow \Pr(\alpha)> \epsilon$. By Lemma \ref{lemma: reverse axioms}, 2., we get $\vdash\Pr(\alpha)> \epsilon\rightarrow \Pr(\alpha)\geq \epsilon$. We can then combine the two statements by transitivity (guaranteed by axiom T1 + Rule MP).

2) This proof is analogous, using axioms P4-5 instead of Lemma \ref{lemma: reverse axioms}.

3) Remember that $\Pr(\alpha) = \delta$ abbreviates $\Pr(\alpha) \geq \delta \land \Pr(\alpha) \leq \delta$. Thus from $\vdash \Pr(\alpha) = \delta$ we get $\vdash \Pr(\alpha) \geq \delta$, $\vdash \Pr(\alpha) \leq \delta$ by classical logic. Now we have two cases. If $\delta<\epsilon$, by axiom P4 we obtain $\Pr(\alpha) < \epsilon$, from which $\Pr(\alpha) \neq \epsilon$, i.e. $\Pr(\alpha) > \epsilon \sqcup \Pr(\alpha) < \epsilon$, follows by axiom T1. If instead $\delta>\epsilon$, we obtain $\Pr(\alpha) > \epsilon$ by Lemma \ref{lemma: reverse axioms},1., and proceed to the conclusion in the same way.
\end{proof}

The following correspond to axioms 8, 11 and 12 from \cite{RasOgnMar2004}, which are entailed by our Rule Mon$_\supset$.

\begin{lemma}\label{lemma: 8 11 12}
Let $\alpha,\beta\in\CO$ and $\delta,\epsilon\in [0,1]\cap\mathbb Q$.
\begin{enumerate}
\item $\vdash (\alpha\supset \Pr(\beta)=\delta) \rightarrow (\alpha\supset \Pr(\beta)\neq \epsilon)$ when $\delta\neq\epsilon$.
\item $\vdash (\alpha\supset \Pr(\beta)=\delta) \rightarrow (\alpha\supset \Pr(\beta) < \epsilon)$ when $\delta <\epsilon$.
\item $\vdash (\alpha\supset \Pr(\beta)=\delta) \rightarrow (\alpha\supset \Pr(\beta)\geq \epsilon)$ when $\delta \geq\epsilon$.
\end{enumerate}
\end{lemma}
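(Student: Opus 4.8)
The plan is to reduce each of the three claims to the corresponding implication between bare probability atoms, and then to lift it into the selective-implication context by a single application of Rule Mon$_\supset$. Recall that Mon$_\supset$ transforms any \emph{theorem} $\vdash\psi\rightarrow\chi$ into the theorem $\vdash(\alpha\supset\psi)\rightarrow(\alpha\supset\chi)$. Hence, taking $\psi := \Pr(\beta)=\delta$ and letting $\chi$ be, respectively, $\Pr(\beta)\neq\epsilon$, then $\Pr(\beta)<\epsilon$, then $\Pr(\beta)\geq\epsilon$, it suffices to establish the three ``inner'' theorems $\vdash\Pr(\beta)=\delta\rightarrow\chi$ and apply Mon$_\supset$. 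I therefore only need to produce these inner implications under the stated side conditions on $\delta,\epsilon$.

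For part 1 there is nothing to prove beyond the lift: $\vdash\Pr(\beta)=\delta\rightarrow\Pr(\beta)\neq\epsilon$ (when $\delta\neq\epsilon$) is exactly Lemma \ref{lemma: probability facts}, part 3 (applied to $\beta$ in place of $\alpha$), so Mon$_\supset$ yields the claim at once. For part 3 (when $\delta\geq\epsilon$) I would unfold $\Pr(\beta)=\delta$ as $\Pr(\beta)\geq\delta\land\Pr(\beta)\leq\delta$ and extract the conjunct $\Pr(\beta)\geq\delta$ by T1; if $\epsilon<\delta$ I pass to $\Pr(\beta)\geq\epsilon$ via Lemma \ref{lemma: probability facts}, part 1, while if $\epsilon=\delta$ the conjunct already \emph{is} the goal. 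Composing with T1 and Rule MP gives $\vdash\Pr(\beta)=\delta\rightarrow\Pr(\beta)\geq\epsilon$, and Mon$_\supset$ finishes.

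For part 2 (when $\delta<\epsilon$) I would again extract the conjunct $\Pr(\beta)\leq\delta$ from $\Pr(\beta)=\delta$ by T1, and then apply axiom P4 with the roles of its two rational parameters interchanged: P4 delivers $\Pr(\beta)\leq\delta\rightarrow\Pr(\beta)<\epsilon$ precisely because its side condition becomes $\epsilon>\delta$, which is our hypothesis $\delta<\epsilon$. Chaining this with the conjunct-extraction via T1 and Rule MP produces $\vdash\Pr(\beta)=\delta\rightarrow\Pr(\beta)<\epsilon$, and a final Mon$_\supset$ delivers the statement.

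I do not anticipate a genuine obstacle here; the only point that demands attention is that Rule Mon$_\supset$ can be invoked solely on outright theorems, not on derivations carrying open premises. Consequently I must ensure that each inner implication is established as a genuine $\vdash$-theorem, which is the case since they rest only on the unconditional axioms P4/P5, on T1, and on Lemma \ref{lemma: probability facts}. All three parts then follow uniformly by the single lifting step.
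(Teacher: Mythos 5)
Your proposal is correct and follows essentially the same route as the paper's own proof: part 1 via Lemma \ref{lemma: probability facts}(3), part 2 via conjunct extraction and axiom P4 with the parameters interchanged, part 3 by case-splitting on $\delta=\epsilon$ versus $\delta>\epsilon$ using Lemma \ref{lemma: probability facts}(1), each followed by a single lift through Rule Mon$_\supset$. Your explicit remark that Mon$_\supset$ applies only to outright theorems is well taken and is respected here, exactly as in the paper.
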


\begin{proof}
1) This is an immediate consequence of Lemma \ref{lemma: probability facts},3., and Rule Mon$_\supset$.

2) If $\delta <\epsilon$, then $\vdash\Pr(\beta)\leq\delta \rightarrow \Pr(\beta) < \epsilon$ by axiom P4. Since $\Pr(\beta)=\delta$ abbreviates $\Pr(\beta)\geq\delta \land \Pr(\beta)\leq\delta$, we have $\vdash\Pr(\beta) = \delta \rightarrow \Pr(\beta) < \epsilon$ by classical logic (T1 + Rule MP). The thesis then follows by Rule Mon$_\supset$.

3) If $\delta = \epsilon$, then by classical logic $\Pr(\beta)\geq\delta \vdash \Pr(\beta)\geq\epsilon$, and $\Pr(\beta)=\delta \vdash\Pr(\beta)\geq\delta$. Thus $\Pr(\beta)=\delta \vdash\Pr(\beta)\geq\epsilon$.

If instead $\delta > \epsilon$, by Lemma \ref{lemma: probability facts}, 1., we obtain $\Pr(\beta)\geq\delta \vdash\Pr(\beta)\geq\epsilon$. By classical logic, then, $\Pr(\beta)=\delta \vdash\Pr(\beta)\geq\epsilon$. 

Finally, in both cases, we obtain the conclusion by Rule Mon$_\supset$.
\end{proof}

\begin{lemma}\label{lemma: axiom P8}
For all $\alpha,\beta\in\CO$, 
\[
\vdash(\Pr(\alpha\lor\beta)=\epsilon \land \Pr(\beta)=0)\rightarrow \Pr(\alpha)=\epsilon.
\]  
\end{lemma}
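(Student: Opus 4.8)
The plan is to reduce the statement to a single application of axiom P6, by showing that under the hypothesis $\Pr(\beta)=0$ the formulas $\alpha\lor\beta$ and $\alpha$ become equivalent ``with probability $1$''. The entire probabilistic content is then carried by P6, and the remaining work is purely about $\CO$ formulas and the selective implication $\supset$. Throughout I would work from the hypothesis set $\Gamma=\{\Pr(\alpha\lor\beta)=\epsilon,\ \Pr(\beta)=0\}$ and only at the very end discharge the two hypotheses using the deduction theorem (Theorem~\ref{thm: deduction theorem}) together with T1.

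First I would convert the probabilistic hypothesis $\Pr(\beta)=0$ into a genuine $\CO$ fact: axiom O1, instantiated with $\psi:=\bot$, gives $\Pr(\beta)=0\rightarrow(\beta\supset\bot)$, so that $\Gamma\vdash\neg\beta$ (recall $\neg\beta$ abbreviates $\beta\supset\bot$). Next I would produce the equivalence $(\alpha\lor\beta)\equiv\alpha$, that is, both of the $\CO$ conjuncts $(\alpha\lor\beta)\supset\alpha$ and $\alpha\supset(\alpha\lor\beta)$. The second conjunct is a propositional $\CO$-tautology, hence an instance of T2. For the first, disjunctive syllogism furnishes the $\CO$-tautology $(\neg\beta\land(\alpha\lor\beta))\supset\alpha$, again via T2; rewriting its antecedent with axiom O5$_\supset$ turns this into the theorem $\neg\beta\supset((\alpha\lor\beta)\supset\alpha)$. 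Since $\Gamma\vdash\neg\beta$ was already established, modus ponens for $\supset$ (Lemma~\ref{lemma: MP for supset}) yields $\Gamma\vdash(\alpha\lor\beta)\supset\alpha$. Conjoining the two conjuncts gives $\Gamma\vdash(\alpha\lor\beta)\equiv\alpha$, and axiom P1 upgrades this to $\Gamma\vdash\Pr((\alpha\lor\beta)\equiv\alpha)=1$.

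Finally I would invoke P6 under the substitution $\alpha\mapsto\alpha\lor\beta$, $\beta\mapsto\alpha$, obtaining $\Pr((\alpha\lor\beta)\equiv\alpha)=1\rightarrow(\Pr(\alpha\lor\beta)=\epsilon\rightarrow\Pr(\alpha)=\epsilon)$. Two applications of MP, using the already derived $\Pr((\alpha\lor\beta)\equiv\alpha)=1$ and the remaining hypothesis $\Pr(\alpha\lor\beta)=\epsilon$, deliver $\Gamma\vdash\Pr(\alpha)=\epsilon$; discharging the two hypotheses concludes the proof.

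I expect the main obstacle to be the middle step: recognising that $\Pr(\beta)=0$ should be converted into the $\CO$ fact $\neg\beta$ rather than kept as a probabilistic statement, and then correctly threading the $\supset$-manipulations — the O5$_\supset$ rewriting combined with modus ponens for $\supset$ — so as to transport the disjunctive-syllogism tautology into the probability-$1$ equivalence consumed by P6. An alternative route through P3 (decomposing $\Pr(\alpha\lor\beta)$ as $\Pr(\alpha)+\Pr(\beta)$ after first deriving $\Pr(\alpha\land\beta)=0$) appears to require casing over all rational values of $\Pr(\alpha)$ and an appeal to the infinitary rule $\bot^\omega$, so the P6 route is decidedly preferable.
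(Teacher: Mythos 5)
Your proposal is correct and follows essentially the same route as the paper's proof: convert $\Pr(\beta)=0$ into $\neg\beta$ via O1, derive the equivalence $(\alpha\lor\beta)\equiv\alpha$ from a T2 disjunctive-syllogism tautology using modus ponens for $\supset$, lift it to $\Pr((\alpha\lor\beta)\equiv\alpha)=1$ via P1, and conclude with P6 and the deduction theorem. The only (cosmetic) difference is that the paper performs the propositional reasoning inside $\Pr(\cdot)=1$ statements throughout, using part 2 of Lemma~\ref{lemma: MP for supset}, whereas you work at the level of $\CO$ formulas and apply P1 only at the end.
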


\begin{proof}
Write $\varphi$ for the antecedent $\Pr(\alpha\lor\beta)=\epsilon \land \Pr(\beta)=0$. we have:
\begin{align*}
1. \ \ \varphi & \vdash \Pr(\beta)=0 \rightarrow (\beta \supset \bot) \tag{Axiom O1}\\   
2. \ \ \varphi & \vdash \beta \supset \bot \tag{1., Rule MP}\\    
3. \ \ \varphi & \vdash \Pr(\beta \supset\bot)=1 \tag{2., P1 + Rule MP}\\
4. \ \ \varphi & \vdash \Pr([(\alpha\lor\beta) \land (\beta\supset \bot)]\equiv \alpha) = 1 \tag{Axiom T2+P1}\\    
5. \ \ \varphi & \vdash \Pr\Big((\beta\supset \bot) \supset \{[(\alpha\lor\beta) \land (\beta\supset \bot) \equiv \alpha] \} \\
 & \hspace{15pt} \supset \{ \alpha \lor \beta \equiv \alpha \}\Big) = 1 \tag{Axiom T2+P1}\\  
6. \ \ \varphi & \vdash \Pr(\alpha \lor \beta \equiv \alpha)=1 \tag{3.,4.,5.,  $2\times$Lemma \ref{lemma: MP for supset}}\\
7. \ \ \varphi & \vdash \Pr(\alpha \lor \beta) = \epsilon \rightarrow \Pr(\alpha)=\epsilon \tag{P6+Rule MP}\\ 
8.  \ \ \varphi & \vdash \Pr(\alpha)=\epsilon\tag{7., Rule MP}\\ 
9.  \ \ \phantom{\varphi} & \vdash \varphi \rightarrow  \Pr(\alpha)=\epsilon. \tag{8., Theorem \ref{thm: deduction theorem}}
\end{align*}
\end{proof}

\subsection{Maximal consistent sets of formulas}

We now move towards a Lindenbaum lemma. We say that a set $\Gamma$ of $\PCO$ formulas of a fixed signature $\sigma$ is \textbf{consistent} if $\bot$ is not derivable from it (or, equivalently, there is at least one formula that is not derivable from it). $\Gamma$ is maximally consistent if it is consistent and, if $\Gamma'\supseteq \Gamma$ and $\Gamma'$ is consistent, then $\Gamma' = \Gamma$.

\begin{lemma}[Lindenbaum]\label{lemma: Lindenbaum}
Every consistent set $\Gamma$ of 
$\PCO$ formulas of a fixed signature $\sigma$  can be extended to a maximally consistent set $\Delta$, which furthermore is closed under derivations (i.e. $\Delta\vdash \psi$ implies $\psi\in\Delta$).
\end{lemma}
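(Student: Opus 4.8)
The plan is to build $\Delta$ as the union of an increasing $\omega$-chain of consistent sets $\Gamma = \Gamma_0\subseteq\Gamma_1\subseteq\cdots$, obtained by deciding, one formula at a time, every member of $\PCO_\sigma$ (which is countable, since $\sigma$ is finite and $[0,1]\cap\mathbb{Q}$ is countable). Fix an enumeration $\varphi_0,\varphi_1,\dots$ of all formulas. At stage $n{+}1$ I would decide $\varphi_n$: if $\Gamma_n\cup\{\varphi_n\}$ is consistent, put $\varphi_n$ in, otherwise put in $\varphi_n^C$. That at least one choice is consistent is the usual excluded-middle argument: were both $\Gamma_n\cup\{\varphi_n\}$ and $\Gamma_n\cup\{\varphi_n^C\}$ inconsistent, the Deduction Theorem (\ref{thm: deduction theorem}) would give $\Gamma_n\vdash\varphi_n\rightarrow\bot$ and $\Gamma_n\vdash\varphi_n^C\rightarrow\bot$, whence $\Gamma_n\vdash\bot$ by T1, contradicting the consistency of $\Gamma_n$. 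Since each step adds only consistent material, every $\Gamma_n$ is consistent, and $\Delta := \bigcup_n\Gamma_n$ decides every formula, hence is maximal.

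The crux is that a union of consistent sets need not be consistent in the presence of the infinitary rules $\bot^\omega$ and $\supset^\omega$: $\Delta$ could contain every premise of such a rule without containing its conclusion. I would forestall this by arranging, at designated stages interleaved into the construction, that $\Delta$ becomes \emph{saturated}, i.e. closed under every rule. For $\supset^\omega$, whenever I reject a conclusion $\varphi_n = \psi\rightarrow(\alpha\supset\Pr(\beta)=\delta)$, I would simultaneously add the negation of one of its premises, namely $\big(\psi\rightarrow(\Pr(\alpha\land\beta)=\delta\epsilon_0\leftrightarrow\Pr(\alpha)=\epsilon_0)\big)^C$ for a suitable $\epsilon_0$. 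Such an $\epsilon_0$ exists while preserving consistency: if adding the premise-negation were inconsistent for every $\epsilon$, then (writing $\Gamma_n'=\Gamma_n\cup\{\varphi_n^C\}$) we would get $\Gamma_n'\vdash$ every premise, so $\supset^\omega$ would yield $\Gamma_n'\vdash\varphi_n$, contradicting the consistency of $\Gamma_n'$. For $\bot^\omega$ the conclusion $\psi\rightarrow\bot$ does not display the relevant $\alpha$, so rather than a premise-witness I would impose \emph{totality}: enumerate the $\CO$-formulas and, at designated stages, commit a definite value $\Pr(\alpha)=\epsilon_0$ into $\Delta$ for each $\alpha$. A consistent value exists by the same contrapositive use of $\bot^\omega$ with $\psi=\top$: if $\Gamma_n\cup\{\Pr(\alpha)=\epsilon\}$ were inconsistent for all $\epsilon$, then $\Gamma_n\vdash\Pr(\alpha)\neq\epsilon$ for all $\epsilon$, and $\bot^\omega$ would give $\Gamma_n\vdash\bot$.

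With these provisions I would verify that $\Delta$ satisfies: (a) for each $\varphi$, exactly one of $\varphi,\varphi^C$ lies in $\Delta$; (b) $\Delta$ is closed under all rules; (c) $\Delta$ contains all axioms; (d) $\bot\notin\Delta$. Item (b) is the heart of the matter. For a finitary rule, all (finitely many) premises together with a putative $c^C$ would lie in a common $\Gamma_N$, which would then be inconsistent — impossible. For the infinitary rules the premises need not lie in any single $\Gamma_N$, which is precisely the difficulty; here saturation is instead secured by the planted material: for $\supset^\omega$ the planted premise-negation clashes with the assumed premise, while for $\bot^\omega$ the committed value $\Pr(\alpha)=\epsilon_0$ clashes with the derived $\Pr(\alpha)\neq\epsilon_0$ (using that $(\psi\rightarrow\bot)^C$ is equivalent to $\psi$, so rejecting the conclusion forces $\psi\in\Delta$). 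Items (c) and (d) hold because adding a theorem, respectively refusing $\bot$, is forced by consistency.

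Finally I would close with the general observation that any set satisfying (a)--(d) is deductively closed and consistent: deductive closure follows by transfinite induction on the length of derivations — the axiom and assumption steps use (c) and membership, the rule steps use (b), and the infinitary steps (of possibly limit length) are exactly the ones handled by saturation — and then $\Delta\vdash\bot$ would force $\bot\in\Delta$, contradicting (d). The main obstacle is this interface between the countable-length proofs and the limit set: making the witnessing-and-totality bookkeeping precise enough that saturation, and hence consistency, genuinely transfers from the finite stages to $\Delta$. Everything else is the familiar Lindenbaum machinery.
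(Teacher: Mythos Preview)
Your proposal is correct and follows essentially the same route as the paper: an $\omega$-chain that interleaves (i) deciding each $\PCO_\sigma$ formula, (ii) planting a premise-negation whenever a conclusion of the shape $\psi\rightarrow(\alpha\supset\Pr(\beta)=\delta)$ is rejected, and (iii) committing a value $\Pr(\alpha)=\epsilon$ for every $\CO$-formula $\alpha$; followed by a transfinite induction on proof length showing that $\Delta$ is closed under derivations and hence consistent. The only cosmetic differences are that the paper takes $\Gamma_0=\Gamma^\vdash$ rather than $\Gamma$, and plants the witness $\psi\rightarrow(\cdot)^C$ rather than your $(\psi\rightarrow\cdot)^C$---inconsequential, since $\varphi_i^C$ already forces $\psi$ into the stage.
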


\begin{proof}
Let $\Gamma$ be a consistent set, and $\Gamma^\vdash \supset \Gamma$ the set of its formal consequences. We enumerate the formulas of $\PCO_{\sigma}$ as $\varphi_1,\dots,\varphi_n,\dots$ 
 and the formulas of $\CO_{\sigma}$ as $\alpha_1,\dots,\alpha_n,\dots$ using natural numbers as indexes.

We then define an increasing sequence of sets of $\PCO_{\sigma}$ formulas $\Gamma_i$ as follows:
\begin{enumerate}
\item $\Gamma_0 := \Gamma^\vdash$.
\item $\Gamma_{2i+1}$ is defined as follows:
	\begin{itemize}
	\item If $\Gamma_{2i}\cup\{\varphi_i\}$ is consistent, let $\Gamma_{2i+1}:=\Gamma_{2i}\cup\{\varphi_i\}$.  
	\item Otherwise, if $\varphi_i$ is of the form $\psi\rightarrow(\alpha\supset \Pr(\beta)=\epsilon)$, let $\Gamma_{2i+1}:=\Gamma_{2i} \cup \{\varphi_i^C, \psi\rightarrow (\Pr(\alpha\land\beta) = \delta\epsilon \leftrightarrow \Pr(\alpha)=\delta )^C \}$ for some $\delta>0$ that makes this set consistent.
	\item Otherwise, let $\Gamma_{2i+1}:=\Gamma_{2i} \cup \{\varphi_i^C\}$.
	\end{itemize}
\item 
Let $\Gamma_{2i+2} := \Gamma_{2i+1} \cup \{\Pr(\alpha_i) = \epsilon\}$ for some $\epsilon$ that makes this set consistent.
\end{enumerate}

We first prove, by induction below 
$\omega$, that all the $\Gamma_i$ are well-defined and consistent.
\begin{itemize}
\item $\Gamma_0$ is consistent since it is the set of consequences of a consistent set.

\item Notice that either $\Gamma_{2i} \cup \{\varphi_i\}$ or $\Gamma_{2i} \cup \{\varphi_i^C\}$ is consistent (otherwise by the deduction theorem we obtain $\Gamma_{2i}\vdash \varphi_i \leftrightarrow \varphi_i^C$, and thus by classical logic $\Gamma_{2i}$ is inconsistent). In case $\Gamma_{2i} \cup \{\varphi_i^C\}$ is consistent and $\varphi_i$ is $\psi\rightarrow(\alpha\supset \Pr(\beta)=\epsilon)$, we also need to prove that there is a $\delta>0$ such that $\Gamma_{2i} \cup \{\varphi_i^C, \psi\rightarrow (\Pr(\alpha\land\beta) = \delta\epsilon \leftrightarrow \Pr(\alpha)=\delta )^C \}$ is consistent. Suppose this is not the case, Then, for all $\delta\in (0,1]\cap\mathbb Q$, 
\begin{align*}
& \Gamma_{2i}, \varphi_i^C, \psi\rightarrow (\Pr(\alpha\land\beta) = \delta\epsilon \leftrightarrow \Pr(\alpha)=\delta )^C   \vdash  \bot \\
& \Gamma_{2i}, \varphi_i^C   \vdash (\psi\rightarrow (\Pr(\alpha\land\beta) = \delta\epsilon \leftrightarrow \Pr(\alpha)=\delta )^C)^C  \tag{deduction theorem + T1}\\
& \Gamma_{2i}, \varphi_i^C   \vdash \psi\rightarrow (\Pr(\alpha\land\beta) = \delta\epsilon \leftrightarrow \Pr(\alpha)=\delta ).  \tag{by T1: tautology $\neg(A\rightarrow \neg B)\rightarrow(A\rightarrow B)$}\\
\end{align*}
Thus, by Rule $\supset^\omega$, $\Gamma_{2i}, \varphi_i^C   \vdash \psi\rightarrow(\alpha\supset \Pr(\beta)=\epsilon)$, i.e. $\Gamma_{2i}, \varphi_i^C   \vdash \varphi_i$. Thus, by the deduction theorem, $\Gamma_{2i}   \vdash \varphi_i^C \rightarrow \varphi_i$, and finally by a tautology $\Gamma_{2i}   \vdash \varphi_i$, which contradicts the consistency of $\Gamma_{2i} \cup \{\varphi_i^C\}$.  

\item Suppose for the sake of contradiction that (for some $i<\omega$) there is no $\epsilon\in [0,1]\cap\mathbb Q$ such that $\Gamma_{2i+1}\cup\{\Pr(\alpha_i)=\epsilon\}$ is consistent. Now write $\Xi:= \Gamma_{2i+1}\setminus \Gamma_0$ (notice that $\Xi$ is a finite set.) Our assumption gives, for every $\epsilon\in [0,1]\cap\mathbb Q$, that $\Gamma_0,\Xi\vdash \Pr(\alpha_i)\neq \epsilon$ (since $\Pr(\alpha_i)\neq \epsilon$ is $(\Pr(\alpha_i)= \epsilon)^C$). Thus, by the deduction theorem and T1, $\Gamma_0\vdash (\bigwedge_{\theta\in\Xi} \theta)\rightarrow \Pr(\alpha_i)\neq \epsilon$. Thus, by Rule $\bot^\omega$, $\Gamma_0\vdash (\bigwedge_{\theta\in\Xi} \theta)\rightarrow \bot$. Then, by T1 and Rule MP, $\Gamma_0,\Xi\vdash \bot$, contradicting the inductive assumption that $\Gamma_{2i+1}$ is consistent.
\end{itemize}

Let $\Delta := \bigcup_{n\in\mathbb N} \Gamma_n$. We wish to prove that it is consistent. Since either $\varphi_i$ or $(\varphi_i)^C$ is added in $\Gamma_{2i+1}$, this will immediately entail that $\Delta$ is maximally consistent. 
 
 Notice that it cannot be the case that $\bot\in\Delta$, because otherwise we already have $\bot\in\Gamma_i$ for some $i$, thus contradicting the fact that $\Gamma_i$ is consistent.
Thus, if we prove that $\Delta\vdash\psi$ implies $\psi\in\Delta$, we can conclude that $\Delta\not\vdash\bot$, i.e. $\Delta$ is consistent.

We prove it by transfinite induction on the length of a proof $\Delta\vdash\psi$. First, notice that if the last rule applied is finitary (where by inductive assumption all premises are in $\Delta$) then there is an $i$ such that all premises are in $\Gamma_i$ (thus $\Gamma_i\vdash\psi$), and furthermore either $\psi\in\Gamma_i$ or $\psi^C\in\Gamma_i$. 
Since $\Gamma_i$ is consistent, it must be $\psi\in\Gamma_i$. Thus $\psi\in\Delta$. We next consider what happens if $\psi$ is obtained via an infinitary rule.

Suppose $\psi$ is obtained via Rule $\bot^\omega$. Thus, $\psi$ is $\chi\rightarrow\bot$, and the premises (from $\Delta$) are all the formulas  $\theta^\epsilon: \chi\rightarrow \Pr(\alpha)\neq \epsilon$ for $\epsilon \in [0,1]\cap\mathbb Q$ and for some specific $\alpha\in\CO$. By the inductive hypothesis $\theta^\epsilon\in\Delta$. Now by step 3 of the construction there must be an $\epsilon$ and an index $i$ such that $\Pr(\alpha)=\epsilon\in\Gamma_i$. Since $\chi\rightarrow \Pr(\alpha)=\epsilon$ can be derived from $\Pr(\alpha)=\epsilon$ by T1 and Rule MP (so, without using infinitary rules) we can conclude as above that $\chi\rightarrow \Pr(\alpha)=\epsilon \in\Delta$. But since $\chi\rightarrow \bot$ is derivable from this formula and $\theta^\epsilon$ via T1 and Rule MP, we conclude in the same way that $\chi\rightarrow \bot \in \Delta$, as needed. 

Suppose $\psi$ is obtained via Rule $\supset^\omega$. Then $\psi$ is of the form $\chi \rightarrow (\alpha \supset \Pr(\beta) = \epsilon)$, and the premises are all the formulas $\theta^\delta:\chi\rightarrow (\Pr(\alpha\land\beta)=\delta\epsilon \leftrightarrow \Pr(\alpha)=\delta)$ for $\delta\in (0,1]\cap\mathbb Q$, which by the inductive assumption are in $\Delta$. If we assume for the sake of contradiction that $\chi \rightarrow (\alpha \supset \Pr(\beta) = \epsilon)\notin \Delta$, then its contradictory negation $\chi \land (\alpha \supset \Pr(\beta) = \epsilon)^C$ is in $\Delta$ by maximality; and, by step 2 of the construction, there are an $i\in\mathbb N$ and a $\delta\in (0,1]\cap\mathbb Q$ such that $\chi\rightarrow (\Pr(\alpha\land\beta) = \delta\epsilon \leftrightarrow \Pr(\alpha)=\delta )^C\in\Gamma_i$. Since from this and $\theta^\delta$ one derives via T1 and Rule MP that $\chi\rightarrow \bot$, then by consistency of the stages of the construction there must be a $j$ such that $\chi\rightarrow\bot \in \Gamma_j$ and $\chi \land (\alpha \supset \Pr(\beta) = \delta)^C \in \Gamma_j$. But then, by the same line of reasoning, there is a $k$ such that $\bot\in\Gamma_k$, contradicting the consistency of $\Gamma_k$.
\end{proof}

We then prove some properties of maximal consistent sets of formulas.

\begin{lemma}\label{lemma: properties of MCS}
Let $\Gamma$ be a maximal consistent set of formulas. Then the following hold.

\begin{enumerate}
\item $\Gamma$ contains all theorems.
\item $\psi\land\chi\in\Gamma$  iff $\psi\in \Gamma$ and $\chi\in\Gamma$.
\item There is exactly one $\epsilon\in[0,1]\cap \mathbb Q$ such that $\Pr(\alpha)=\epsilon\in\Gamma$.
\item If $\Pr(\alpha)\geq \delta\in\Gamma$, then there is an $\epsilon\geq\delta$ such that $\Pr(\alpha)=\epsilon\in\Gamma$.
\item If $\Pr(\alpha)\leq \delta\in\Gamma$, then there is an $\epsilon\leq\delta$ such that $\Pr(\alpha)=\epsilon\in\Gamma$.
\item If $\Pr(\alpha)>0\in \Gamma$, then there is a unique $\epsilon\in [0,1]\cap \mathbb Q$ such that $\alpha\supset \Pr(\beta)=\epsilon\in\Gamma$.
\item If $\alpha\supset \Pr(\beta)\geq \delta\in\Gamma$, then there is an $\epsilon\geq\delta$ such that $\alpha\supset \Pr(\beta)=\epsilon\in\Gamma$.
\end{enumerate}
\end{lemma}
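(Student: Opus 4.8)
The plan is to derive all seven items from two preliminary facts about an arbitrary maximal consistent $\Gamma$, which I would establish first. Fact (a): $\Gamma$ is closed under derivation, i.e.\ $\Gamma\vdash\psi$ implies $\psi\in\Gamma$. Indeed, if $\psi\notin\Gamma$, then $\Gamma\cup\{\psi\}$ is a proper extension and hence inconsistent by maximality, so $\Gamma,\psi\vdash\bot$ and by the deduction theorem (Theorem \ref{thm: deduction theorem}) $\Gamma\vdash\psi\rightarrow\bot$; combined with $\Gamma\vdash\psi$ via Rule MP this would give $\Gamma\vdash\bot$, contradicting consistency. Fact (b): for every formula $\chi$, exactly one of $\chi,\chi^C$ lies in $\Gamma$. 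Here T1 proves $\vdash\chi\land\chi^C\rightarrow\bot$ and $\vdash(\chi\rightarrow\bot)\leftrightarrow\chi^C$, so ``not both'' follows from consistency, and ``at least one'' follows from (a) and the maximality argument just given. Items 1 and 2 are then immediate from (a): a theorem satisfies $\Gamma\vdash\psi$ hence lies in $\Gamma$, and the $\land$-equivalence follows by applying (a) to the T1-tautologies $\psi\land\chi\rightarrow\psi$, $\psi\land\chi\rightarrow\chi$ and $\psi\rightarrow(\chi\rightarrow\psi\land\chi)$.

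The crucial step is existence in item 3, which is where the infinitary resources enter. Suppose toward a contradiction that $\Pr(\alpha)=\epsilon\notin\Gamma$ for every $\epsilon\in[0,1]\cap\mathbb Q$. By (b), $\Pr(\alpha)\neq\epsilon=(\Pr(\alpha)=\epsilon)^C\in\Gamma$ for all such $\epsilon$, so $\Gamma\vdash\top\rightarrow\Pr(\alpha)\neq\epsilon$ for every $\epsilon$ (weakening via T1 and Rule MP). Rule $\bot^\omega$ then yields $\Gamma\vdash\top\rightarrow\bot$, and since $\top$ is a theorem, Rule MP gives $\Gamma\vdash\bot$, a contradiction. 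Uniqueness follows from Lemma \ref{lemma: probability facts}(3): were $\Pr(\alpha)=\epsilon$ and $\Pr(\alpha)=\epsilon'$ both in $\Gamma$ with $\epsilon\neq\epsilon'$, then $\Gamma\vdash\Pr(\alpha)\neq\epsilon'=(\Pr(\alpha)=\epsilon')^C$, so both $\Pr(\alpha)=\epsilon'$ and its $^C$ lie in $\Gamma$, contradicting (b).

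Items 4--7 then reduce to 3 and 6 using the directly-given negation clauses together with the monotonicity lemmas. For 4 and 5, item 3 supplies the unique $\epsilon$ with $\Pr(\alpha)=\epsilon\in\Gamma$, and I would rule out $\epsilon<\delta$ (resp.\ $\epsilon>\delta$) by deriving the $^C$ of the hypothesis: axiom P4 gives $\Pr(\alpha)<\delta=(\Pr(\alpha)\geq\delta)^C$, and Lemma \ref{lemma: reverse axioms}(1) gives $\Pr(\alpha)>\delta=(\Pr(\alpha)\leq\delta)^C$, each contradicting the hypothesis via (b). For 6, existence comes from axiom O2 rather than an $\omega$-rule: by 3 pick the unique values with $\Pr(\alpha)=\epsilon_0,\Pr(\alpha\land\beta)=\gamma_0\in\Gamma$; from $\Pr(\alpha)>0\in\Gamma$ and (b) one gets $\epsilon_0>0$, while P6b applied to the valid $\alpha\land\beta\supset\alpha$ (via P1) together with item 4 gives $\gamma_0\le\epsilon_0$, so $\gamma_0/\epsilon_0\in[0,1]\cap\mathbb Q$ and O2 places $\alpha\supset\Pr(\beta)=\gamma_0/\epsilon_0$ in $\Gamma$; uniqueness uses Lemma \ref{lemma: 8 11 12}(1) with the clause $(\alpha\supset\chi)^C=\Pr(\alpha)>0\land(\alpha\supset\chi^C)$ and (b). Item 7 splits on whether $\Pr(\alpha)>0\in\Gamma$: if so, item 6 yields a witness $\epsilon$ and $\epsilon\ge\delta$ is forced by Lemma \ref{lemma: 8 11 12}(2) (otherwise one derives $\alpha\supset\bot$ through O5$_\land$ and Rule Mon$_\supset$, hence $\Pr(\alpha)=0$ by O1b, contradicting $\Pr(\alpha)>0$); if not, then by (b), P2 and item 2 we have $\Pr(\alpha)=0\in\Gamma$, and axiom O1 directly places $\alpha\supset\Pr(\beta)=\delta$ in $\Gamma$ with $\delta\ge\delta$.

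The main obstacle I anticipate is the existence statements rather than the bookkeeping: item 3 carries all the genuinely infinitary content (Rule $\bot^\omega$), and item 6 is delicate because it requires the rational arithmetic $\gamma_0/\epsilon_0$ to land in $[0,1]\cap\mathbb Q$, which forces the auxiliary derivation of $\gamma_0\le\epsilon_0$ and $\epsilon_0>0$. Beyond that, the whole argument runs on the dichotomy (b), so the care is in ensuring each bound is packaged as ``the weak contradictory negation of the hypothesis becomes derivable'' — which is why it matters that $\Pr(\alpha)<\delta$, $\Pr(\alpha)>\delta$ and $\Pr(\alpha)\le 0$ are the $^C$'s of $\Pr(\alpha)\ge\delta$, $\Pr(\alpha)\le\delta$ and $\Pr(\alpha)>0$ by the very definition of $^C$, and that the conditional clause introduces the side condition $\Pr(\alpha)>0$.
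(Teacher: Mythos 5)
Your proof is correct and takes essentially the same route as the paper's: Rule $\bot^\omega$ applied with antecedent $\top$ for existence in item 3, axiom P4 and Lemma \ref{lemma: reverse axioms} for items 4--5, axiom O2 for existence in item 6, and Lemma \ref{lemma: 8 11 12} together with the definitional clause for $(\alpha\supset\chi)^C$ for uniqueness in item 6 and for item 7. The only notable addition is your explicit check that $\gamma_0\le\epsilon_0$ (via P6b on the tautology $(\alpha\land\beta)\supset\alpha$), ensuring $\gamma_0/\epsilon_0\in[0,1]\cap\mathbb Q$ before applying O2 --- a detail the paper's proof leaves implicit.
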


\begin{proof}

The proofs of 1. and 2. are routine.

3) Unicity: Lemma \ref{lemma: probability facts}, 3. with Rule MP entail that, if $\Pr(\alpha)=\epsilon\in\Gamma$, then $\Pr(\alpha)\neq\delta\in\Gamma$ whenever $\delta\neq\epsilon$. Since $\Gamma$ is consistent, then, $\Pr(\alpha)=\delta\notin\Gamma$.

Existence: suppose that, for all $\epsilon\in [0,1]\cap \mathbb Q$, $\Pr(\alpha)=\epsilon\notin\Gamma$. Then, by maximality of $\Gamma$, 
$\Pr(\alpha)\neq\epsilon\in\Gamma$; and by classical logic $\top \rightarrow \Pr(\alpha)\neq\epsilon\in\Gamma$. Thus, by Rule $\bot^\omega$, $\top\rightarrow\bot\in \Gamma$. Thus again by classical logic, $\bot\in\Gamma$, contradicting the consistency of $\Gamma$.

4) By 3. there is an $\epsilon$ such that $\Pr(\alpha)=\epsilon\in\Gamma$, from which $\Pr(\alpha)\leq\epsilon\in\Gamma$ follows by classical logic. Suppose for the sake of contradiction that $\epsilon < \delta$. Then by axiom P4 the latter implies $\Pr(\alpha)<\delta\in\Gamma$. Since by assumption also $\Pr(\alpha)\geq\delta\in\Gamma$, we contradict the consistency of $\Gamma$.

5) This is analogous to the proof of 4., using Lemma \ref{lemma: reverse axioms},1., instead of axiom P4.

6) Unicity: Suppose $\alpha\supset \Pr(\beta)=\delta\in\Gamma$ for some $\delta\neq\epsilon$.  By Lemma \ref{lemma: 8 11 12}, 1., we also have $\alpha\supset \Pr(\beta)\neq\delta\in\Gamma$, and thus, by axiom O5$_\land$,    $\alpha\supset (\Pr(\beta)=\delta \land \Pr(\beta)\neq\delta)\in\Gamma$. Since $(\Pr(\beta)=\delta \land \Pr(\beta)\neq\delta)\leftrightarrow\bot \in \Gamma$ by T1, we have $\alpha\supset \bot\in \Gamma$ by Rule Rep. By O1b we obtain $\Pr(\alpha) = 0\in \Gamma$, and since we assumed  $\Pr(\alpha) > 0\in \Gamma$, we contradict the consistency of $\Gamma$.

Existence: By 3., there is a unique $\delta$ such that $\Pr(\alpha)=\delta$, and a unique $\delta'$ such that $\Pr(\alpha \land \beta)=\delta'$. Now $\delta$ must be $>0$ (otherwise $\Gamma$ is inconsistent); 
 but then, by axiom O2, $\alpha\supset \Pr(\beta)=\frac{\delta'}{\delta}\in \Gamma$, as needed. 

7) By 6., either $\Pr(\alpha)>0\notin \Gamma$ or there is an $\epsilon$ such that $\alpha\supset \Pr(\beta)=\epsilon\in\Gamma$. In the former case, by the maximality of $\Gamma$ and axiom P2 we obtain $\Pr(\alpha)=0\in \Gamma$. Then by axiom O1 and modus ponens we obtain $\alpha\supset \Pr(\beta)=\delta$, as needed.

In the latter case, we can assume without loss of generality that $\Pr(\alpha)>0\in \Gamma$. Suppose for the sake of contradiction that $\epsilon<\delta$. Then by Lemma \ref{lemma: 8 11 12}, 2., we have $\alpha\supset \Pr(\beta)<\delta\in \Gamma$. Thus by classical logic $\Pr(\alpha)>0 \land \alpha\supset \Pr(\beta)<\delta\in \Gamma$, that is, $(\alpha\supset \Pr(\beta)\geq\delta)^C\in\Gamma$, contradicting the consistency of $\Gamma$.
\end{proof}

We formulate some analogous closure properties for formulas involving counterfactuals. 

\begin{lemma}\label{lemma: cf properties of MCS}
Let $\Gamma$ be a maximal consistent set of formulas. 
 Then the following hold.
\begin{enumerate}
\item If $\varphi_{End(Y)}\in\Gamma$, then for all $\SET w\in \ran(\SET W_Y)$ there is a unique $y\in \ran(Y)$ such that $``\SET W_Y = \SET w \cf Y=y''\in\Gamma$.
\item There is a unique $\epsilon\in [0,1]\cap\mathbb Q$ such that $``\SET X = \SET x \cf \Pr(\alpha)=\epsilon''\in \Gamma$.
\end{enumerate}
\end{lemma}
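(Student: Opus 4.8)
The plan is to prove both clauses by reducing counterfactual statements to the already-established facts about probabilities of $\CO$ formulas (Lemma~\ref{lemma: properties of MCS}), exploiting the axioms that let $\cf$ commute with the other connectives. Throughout I will use that, by maximality, $\Gamma$ enjoys a \emph{disjunction property}: if a finite disjunction $\bigsqcup_{i}\psi_i$ lies in $\Gamma$, then $\psi_i\in\Gamma$ for some $i$. Indeed, otherwise $\psi_i^C\in\Gamma$ for every $i$ by maximality, hence $\bigwedge_i\psi_i^C\in\Gamma$ by Lemma~\ref{lemma: properties of MCS}, 2.; but $\bigwedge_i\psi_i^C$ is $(\bigsqcup_i\psi_i)^C$, contradicting consistency.

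For clause 1 (existence): from $\varphi_{End(Y)}\in\Gamma$ and axiom C9 (a theorem, hence in $\Gamma$), Rule MP gives $\SET W_Y=\SET w\cf\bigsqcup_{y\in\ran(Y)}Y=y\in\Gamma$. Repeated use of axiom C2 together with Rule Rep pushes the counterfactual inside the (finite) disjunction, yielding $\bigsqcup_{y\in\ran(Y)}(\SET W_Y=\SET w\cf Y=y)\in\Gamma$, so the disjunction property delivers a witness $y$. For uniqueness, suppose $\SET W_Y=\SET w\cf Y=y$ and $\SET W_Y=\SET w\cf Y=y'$ both lie in $\Gamma$ with $y\neq y'$. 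By Lemma~\ref{lemma: properties of MCS}, 2., and axiom C1 their conjunction collapses to $\SET W_Y=\SET w\cf(Y=y\land Y=y')\in\Gamma$. Now $Y=y,Y=y'\vdash\bot$: apply A1 and Rule MP to obtain $Y\neq y'$, rewrite it as $Y=y'\supset\bot$ via A2, and conclude by Lemma~\ref{lemma: MP for supset}, 1.; so the deduction theorem (\ref{thm: deduction theorem}) gives $\vdash(Y=y\land Y=y')\rightarrow\bot$. Rule Mon$_\cf$ then yields $\vdash(\SET W_Y=\SET w\cf(Y=y\land Y=y'))\rightarrow(\SET W_Y=\SET w\cf\bot)$, whence $\SET W_Y=\SET w\cf\bot\in\Gamma$; since $\SET W_Y=\SET w$ is consistent, axiom C5 (instantiated with consequent $\bot$) forces $\bot\in\Gamma$, a contradiction.

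For clause 2 I would reduce to Lemma~\ref{lemma: properties of MCS}, 3., applied to the $\CO$ formula $\beta\dfn\SET X=\SET x\cf\alpha$, by establishing the membership equivalence $\SET X=\SET x\cf\Pr(\alpha)=\epsilon\in\Gamma$ iff $\Pr(\SET X=\SET x\cf\alpha)=\epsilon\in\Gamma$, uniformly in $\epsilon$. Unfolding $\Pr(\alpha)=\epsilon$ as $\Pr(\alpha)\geq\epsilon\land\Pr(\alpha)\leq\epsilon$ and applying axiom C1, the counterfactual splits over the conjunction. The $\geq$-conjunct is handled directly by axiom C8, giving $\Pr(\SET X=\SET x\cf\alpha)\geq\epsilon$. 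The $\leq$-conjunct, after rewriting $\Pr(\alpha)\leq\epsilon$ as $\Pr(\neg\alpha)\geq 1-\epsilon$ and applying C8, produces $\Pr(\SET X=\SET x\cf\neg\alpha)\geq 1-\epsilon$, and the task is to identify this with $\Pr(\neg(\SET X=\SET x\cf\alpha))\geq 1-\epsilon$, i.e.\ with $\Pr(\beta)\leq\epsilon$. Here I would use that, since $\SET X\subseteq\dom$ is a set of distinct variables, $\SET X=\SET x$ is consistent, so the $\CO$ formulas $\SET X=\SET x\cf\neg\alpha$ and $\neg(\SET X=\SET x\cf\alpha)$ are semantically equivalent; by $\CO$-completeness (Theorem~\ref{thm: CO completeness}) this yields $\vdash(\SET X=\SET x\cf\neg\alpha)\equiv\neg(\SET X=\SET x\cf\alpha)$, and then P1 together with P6 transfers this to provable equality of the two probabilities (combined with Lemma~\ref{lemma: properties of MCS}, 4., to pass between $\geq$ and $=$). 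Recombining the two conjuncts through C1 gives the equivalence, and Lemma~\ref{lemma: properties of MCS}, 3., finally supplies both existence and uniqueness of $\epsilon$.

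The main obstacle is exactly this identification of $\Pr(\SET X=\SET x\cf\neg\alpha)$ with $\Pr(\neg(\SET X=\SET x\cf\alpha))$, i.e.\ the commutation of $\cf$ with the dual negation at the level of $\CO$. A direct syntactic derivation is delicate: one direction amounts to the derived rule (from the proof of Theorem~\ref{thm: CO completeness}) taking $\neg(\SET X=\SET x\cf\alpha)$ to $\SET X=\SET x\cf\neg\alpha$, but assembling a clean $\equiv$ rather than a one-way $\rightarrow$, and then lifting it through $\Pr$, is awkward. The shortcut is to observe that the equivalence is a semantic validity (crucially relying on the consistency of $\SET X=\SET x$, guaranteed by $\SET X\subseteq\dom$) and to let $\CO$-completeness discharge it, after which P1 and P6 carry out the lift to probabilities.
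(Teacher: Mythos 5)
Your proof is correct. Clause 1 follows essentially the paper's own route: existence via axioms C9 and C2 plus maximality (your ``disjunction property'' is exactly the contrapositive the paper runs through $(\bigsqcup_{y}\dots)^C$), and uniqueness via C1, Rule Mon$_\cf$ and C5, differing only in how the contradiction $Y=y\land Y=y'\vdash\bot$ is assembled (you go through A1, A2 and modus ponens for $\supset$; the paper routes through $(Y=y)^C$). Clause 2 is where you genuinely diverge. The paper proves uniqueness directly (Rule Mon$_\cf$ applied to Lemma~\ref{lemma: probability facts}, 3.) and existence by a fresh appeal to the infinitary Rule $\bot^\omega$: assuming no $\epsilon$ works, it collects $\SET X=\SET x\cf\Pr(\alpha)\neq\epsilon$ for all $\epsilon$, pushes these through C8 to $\Pr(\SET X=\SET x\cf\alpha)\neq\epsilon$, and derives $\bot$. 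You instead establish, uniformly in $\epsilon$, the membership equivalence between $\SET X=\SET x\cf\Pr(\alpha)=\epsilon$ and $\Pr(\SET X=\SET x\cf\alpha)=\epsilon$, and then let Lemma~\ref{lemma: properties of MCS}, 3., applied to the $\CO$ formula $\SET X=\SET x\cf\alpha$, deliver existence and uniqueness in one stroke; this recycles the single use of $\bot^\omega$ already buried in that lemma rather than invoking the infinitary rule a second time. Notably, both routes must at some point identify $\Pr(\SET X=\SET x\cf\neg\alpha)$ with $\Pr(\neg(\SET X=\SET x\cf\alpha))$ --- the paper's ``by axioms C8 and T1 plus Rule MP'' does this silently, since $\Pr(\alpha)\neq\epsilon$ unfolds through $\neg\alpha$ --- and you make this step explicit and discharge it by semantic validity of the $\equiv$ plus Theorem~\ref{thm: CO completeness} and P1/P6, which is legitimate and arguably tidier than a syntactic derivation. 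The one caveat, which you correctly flag, is that clause 2 presupposes that $\SET X=\SET x$ is consistent (needed both for C5-style reasoning and for the semantic equivalence); this is guaranteed by the grammar's requirement $\SET X\subseteq\dom$ and is implicitly assumed by the paper as well.
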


\begin{proof}
1) Unicity: Suppose $\SET W_Y = \SET w \cf Y=y\in\Gamma$ and $\SET W_Y = \SET w \cf Y=y' \in\Gamma$ for some $y\neq y'$. From the latter, by axiom A1 and Rule Mon$_\cf$ we get $\SET W_Y = \SET w \cf Y\neq y\in\Gamma$. By axiom A2 and Rule Mon$_\cf$ we get $\SET W_Y = \SET w \cf \neg(Y =  y)\in\Gamma$. By axiom O4, T1 and Rule Mon$_\cf$ we get $\SET W_Y = \SET w \cf (Y =  y)^C \in\Gamma$. By axiom C1, then, we have $\SET W_Y = \SET w \cf (Y=y \land (Y =  y)^C)\in\Gamma$. Since by T1 we have $\vdash (Y=y\land (Y =  y)^C) \rightarrow \bot$, by Rule Mon$_\cf$ we get $\SET W_Y = \SET w \cf \bot \in\Gamma$. Finally by C5 we have $\bot\in\Gamma$, a contradiction.

 Existence: 
 if no $y$ satisfies the statement, we have $(\SET W_Y = \SET w \cf Y=y)^C \in \Gamma$ for all $y\in \ran(Y)$, i.e. $\SET W_Y = \SET w \cf \Pr(Y=y)<1 \in \Gamma$, and then, by T1 and Rule MP, $\bigwedge_{y\in \ran(Y)}\SET W_Y = \SET w \cf \Pr(Y=y)<1 \in \Gamma$. On the other hand, since $\varphi_{End(Y)}\in\Gamma$, by axiom C9, $\SET W_Y = \SET w \cf \bigsqcup_{y\in \ran(Y)} Y=y\in\Gamma$, and, by axiom C2, $\bigsqcup_{y\in \ran(Y)}(\SET W_Y = \SET w \cf  Y=y)\in \Gamma$. But since this is $(\bigwedge_{y\in \ran(Y)}\SET W_Y = \SET w \cf \Pr(Y=y)<1)^C$, we contradict the consistency of $\Gamma$.
 
2) Unicity: suppose $\SET X = \SET x \cf \Pr(\alpha)=\epsilon\in\Gamma$ and $\SET X = \SET x \cf \Pr(\alpha)=\epsilon' \in\Gamma$ for some $\epsilon\neq \epsilon'$. Since, by Lemma \ref{lemma: probability facts}, 3., $\Pr(\alpha)=\epsilon' \vdash \Pr(\alpha)\neq\epsilon$, by Rule Mon$_\cf$ we get $\SET X = \SET x \cf \Pr(\alpha)\neq\epsilon\in\Gamma$. Since this is the contradictory negation of $\SET X = \SET x \cf \Pr(\alpha)=\epsilon\in\Gamma$, we find a contradiction with the consistency of $\Gamma$.

Existence: suppose there is no such $\epsilon$. Thus, by maximality of $\Gamma$, for all $\epsilon\in[0,1]\cap\mathbb Q$ we have $\SET X = \SET x \cf \Pr(\alpha)\neq\epsilon \in \Gamma$. By axioms C8 and T1 plus Rule MP, then, $\Pr(\SET X = \SET x \cf \alpha)\neq\epsilon \in \Gamma$. By T1 and Rule MP, $\top\rightarrow\Pr(\SET X = \SET x \cf \alpha)\neq\epsilon \in \Gamma$. Thus, by Rule $\bot^\omega$, $\top\rightarrow\bot \in\Gamma$, and, by T1+Rule MP, $\bot\in\Gamma$, a contradiction.
\end{proof}

Remember that $\SET W$ denotes an injective listing of all the variables in $\dom$. We can assume that all assignments over $\dom$ are enumerated as $s_1,\dots,s_n$. For each $i=1,\dots,n$, we will denote as $\Al_i$ the formula $\SET W = s_i(\SET W)$.


\vspace{15pt}

\begin{lemma}\label{lemma: incompatibility of hat alphas} 
 \phantom{a}
\begin{enumerate}
    \item If $i\neq j$, $\vdash (\Al_i \land \Al_j) \supset \bot$.
    \item If $j\notin I$, 
    $\vdash \Pr\Big((\bigvee_{i\in I}\Al_i)\land\Al_j\Big) = 0$.
\end{enumerate}
    
\end{lemma}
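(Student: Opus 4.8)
The plan is to derive both facts from the single observation that distinct assignments disagree on the value of at least one variable, so that $\Al_i$ and $\Al_j$ are mutually exclusive whenever $i\neq j$. The quickest route passes through the $\CO$-completeness theorem (Theorem \ref{thm: CO completeness}), exactly as announced in the remark preceding it, reducing the syntactic task to a routine semantic check.

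For item 1, I would first note that $(\Al_i\land\Al_j)\supset\bot$ is a $\CO$ formula, since each $\Al_k$ is a conjunction of literals and $\bot$ lies in $\CO$. As $i\neq j$, the assignments $s_i,s_j$ differ on some variable, so no single assignment satisfies $\Al_i\land\Al_j$; hence for every $T$ the observed multiteam $T^{\Al_i\land\Al_j}$ is empty and, by the empty team property, satisfies $\bot$. Thus $\models(\Al_i\land\Al_j)\supset\bot$, and $\CO$-completeness yields $\vdash(\Al_i\land\Al_j)\supset\bot$. A purely syntactic derivation is also available: pick a variable $W_k$ with $s_i(W_k)\neq s_j(W_k)$, use axiom A1 to pass from $W_k=s_i(W_k)$ to $W_k\neq s_j(W_k)$, convert the latter via A2 into $W_k=s_j(W_k)\supset\bot$, apply modus ponens for $\supset$ (Lemma \ref{lemma: MP for supset}) against the conjunct $W_k=s_j(W_k)$ of $\Al_j$ to obtain $\bot$, and close with the deduction theorem for $\supset$ (Theorem \ref{thm: deduction theorem for supset}).

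For item 2, write $\alpha$ for $(\bigvee_{i\in I}\Al_i)\land\Al_j$, which is again a $\CO$ formula. Since $j\notin I$, each disjunct $\Al_i$ with $i\in I$ is incompatible with $\Al_j$ by the reasoning of item 1; and because the tensor disjunction is classical on singletons, no single assignment satisfies $\alpha$, so $\alpha\supset\bot$ is valid and hence derivable by $\CO$-completeness. I would then invoke axiom O1b in the form $(\alpha\supset\bot)\rightarrow\Pr(\alpha)=0$ together with Rule MP to conclude $\vdash\Pr(\alpha)=0$, which is precisely the claim.

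The only point requiring genuine care is semantic bookkeeping rather than deduction: one must check that the enumeration $s_1,\dots,s_n$ ranges over assignments with pairwise distinct value tuples (so that $i\neq j$ really forces disagreement on some $W_k$, the $Key$ variable being irrelevant to the formal language), and that satisfaction of $\bigvee_{i\in I}\Al_i$ by a single assignment reduces to ordinary disjunction. Granting these, both derivations are short, since the substantive work has already been discharged by Theorem \ref{thm: CO completeness}.
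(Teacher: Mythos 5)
Your proposal is correct, and your fallback syntactic derivation of item 1 is essentially the paper's own proof: the paper picks a variable $V$ on which $s_i$ and $s_j$ disagree, derives $\vdash(\Al_i\land\Al_j)\rightarrow\bot$ from A1 and T1 via Rule MP, and converts $\rightarrow$ to $\supset$ (using Rule $\rightarrow$to$\supset$ where you use Theorem \ref{thm: deduction theorem for supset}; both work since $\Al_i\land\Al_j\in\CO$). Your \emph{primary} route is genuinely different: you observe that $(\Al_i\land\Al_j)\supset\bot$ and $((\bigvee_{i\in I}\Al_i)\land\Al_j)\supset\bot$ are valid $\CO$ formulas (no single assignment can satisfy the antecedents, so the observed sub-multiteam is always empty) and invoke Theorem \ref{thm: CO completeness} to get derivability. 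This is legitimate and non-circular here --- $\CO$-completeness is established earlier and does not depend on this lemma --- and it is exactly the shortcut the paper itself advertises and deploys in Lemma \ref{lemma: C13}. What the semantic route buys is brevity and uniformity (item 2 needs no separate syntactic lifting); what the paper's purely syntactic route buys is independence from the completeness machinery and an explicit short derivation. The only divergence in item 2 is that the paper obtains $((\bigvee_{i\in I}\Al_i)\land\Al_j)\supset\bot$ syntactically from item 1 via T2 rather than by a fresh semantic check, but both then finish identically with axiom O1b. Your bookkeeping caveats (injectivity of the enumeration $s_1,\dots,s_n$, irrelevance of $Key$, classical behaviour of $\lor$ on singletons) are all sanctioned by the paper's conventions.
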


\begin{proof}
1) Since $s_i$ and $s_j$ are distinct, there is a variable $V$ and values $v\neq v'$ such that the conjunct $V=v$ occurs in $\Al_i$ and the conjunct $V=v'$ occurs in $\Al_j$. Now we have:
\begin{align*}
  & 1. \vdash (\Al_i\land\Al_j) \rightarrow (V=v \land V=v') \tag{T1}\\
  & 2. \vdash (V=v \land V=v') \rightarrow \bot \tag{A1 + T1 + Rule MP}\\
  & 3. \vdash (\Al_i\land\Al_j) \rightarrow \bot \tag{1., 2., T1 + Rule MP}\\
  & 4. \vdash  (\Al_i \land \Al_j) \supset \bot. \tag{4., Rule $\rightarrow$to$\supset$}
\end{align*}

2) From point 1. and T2 we obtain that $\vdash ((\bigvee_{i\in I}\Al_i)\land\Al_j) \supset \bot$. The conclusion follows from axiom O1b.
\end{proof}

\begin{lemma}\label{lemma: decomposed probabilities}
Let $\Gamma$ be a maximal consistent set of $\PCO$ formulas of signature $\sigma$, and $I\subseteq \{1,\dots,n\}$, where $n$ is the number of distinct assignments of signature $\sigma$. For each $i\in I$, write $\epsilon_i$ for the unique rational number in $[0,1]$ such that $\Pr(\Al_i)=\epsilon_i\in \Gamma$ (as given by Lemma \ref{lemma: properties of MCS}, 3.). Then, $$\Pr(\bigvee_{i\in I}\Al_i)= \sum_{i\in I}\epsilon_i\in\Gamma.$$ 
\end{lemma}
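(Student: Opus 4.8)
The plan is to argue by induction on the cardinality of $I$, transferring the arithmetic of probabilities into $\Gamma$ one disjunct at a time by means of the additivity axiom P3. For the base case, when $I=\emptyset$ the disjunction $\bigvee_{i\in I}\Al_i$ is $\bot$ and the empty sum is $0$; since $\vdash\bot\supset\bot$ (from the tautology $\bot\rightarrow\bot$ by Rule $\rightarrow$to$\supset$), axiom O1b gives $\vdash\Pr(\bot)=0$, so $\Pr(\bot)=0\in\Gamma$ by Lemma~\ref{lemma: properties of MCS}, 1. When $I=\{j\}$ the claim is just the defining property $\Pr(\Al_j)=\epsilon_j\in\Gamma$.

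For the inductive step I assume the statement for $I$ and pick $j\notin I$. Put $\alpha:=\bigvee_{i\in I}\Al_i$, $\beta:=\Al_j$, $\delta:=\sum_{i\in I}\epsilon_i$ and $\epsilon:=\epsilon_j$. The induction hypothesis gives $\Pr(\alpha)=\delta\in\Gamma$, the definition of $\epsilon_j$ gives $\Pr(\beta)=\epsilon\in\Gamma$, and Lemma~\ref{lemma: incompatibility of hat alphas}, 2 (applicable precisely because $j\notin I$) yields the theorem $\Pr(\alpha\land\beta)=0$, hence $\Pr(\alpha\land\beta)=0\in\Gamma$. Provided $\delta+\epsilon\le 1$, the corresponding instance of P3 is a genuine axiom, and P3 together with Rule MP (and conjunction closure, Lemma~\ref{lemma: properties of MCS}, 2) gives $\Pr(\alpha\lor\beta)=\delta+\epsilon\in\Gamma$.

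The delicate point, and what I expect to be the main obstacle, is justifying the side condition $\delta+\epsilon\le 1$ that legitimises this instance of P3; this is exactly the role of axiom P3b. From $\Pr(\alpha)\ge\delta\in\Gamma$ (a classical consequence of $\Pr(\alpha)=\delta$) and $\Pr(\alpha\land\beta)=0\in\Gamma$, axiom P3b and Rule MP give $\Pr(\beta)\le 1-\delta\in\Gamma$. Were $\delta+\epsilon>1$, i.e.\ $1-\delta<\epsilon$, then Lemma~\ref{lemma: reverse axioms}, 1 applied to $\Pr(\beta)\ge\epsilon\in\Gamma$ would give $\Pr(\beta)>1-\delta\in\Gamma$; but $\Pr(\beta)>1-\delta$ is the weak contradictory negation of $\Pr(\beta)\le 1-\delta$, so both a formula and its $^C$-negation would lie in $\Gamma$, and by Lemma~\ref{lemma: properties of MCS}, 2 together with T1 and Rule MP this forces $\bot\in\Gamma$, contradicting consistency. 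Hence $\delta+\epsilon\le 1$, as required, and the display of the previous paragraph applies.

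Finally I would reconcile the disjunction produced by P3 with the one in the statement. Since $(\bigvee_{i\in I}\Al_i)\lor\Al_j$ and $\bigvee_{i\in I\cup\{j\}}\Al_i$ are classically equivalent $\CO$ formulas (associativity and commutativity of $\lor$), their equivalence $\equiv$ is an instance of T2, so $\vdash\Pr\bigl((\bigvee_{i\in I}\Al_i)\lor\Al_j\equiv\bigvee_{i\in I\cup\{j\}}\Al_i\bigr)=1$ by P1; axiom P6 with Rule MP then rewrites the argument of $\Pr$, giving $\Pr(\bigvee_{i\in I\cup\{j\}}\Al_i)=\sum_{i\in I\cup\{j\}}\epsilon_i\in\Gamma$ and closing the induction. (Alternatively this rewriting can be carried out with Rule Rep.)
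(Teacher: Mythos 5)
Your proof is correct and follows essentially the same route as the paper's: repeated application of axiom P3 across the disjuncts, with Lemma~\ref{lemma: incompatibility of hat alphas} supplying the $\Pr(\cdot\land\cdot)=0$ premises and axiom P3b discharging the side condition $\delta+\epsilon\le 1$ at each step. The only differences are presentational — you derive the side condition by contradiction where the paper uses Lemma~\ref{lemma: properties of MCS}, items 5 and 3 (unicity) directly, and you make explicit the base cases and the $\lor$-reassociation step that the paper leaves implicit.
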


\begin{proof}
By Lemma \ref{lemma: incompatibility of hat alphas} we have that, if $k\notin J \subseteq I$, then $\vdash \Pr((\bigvee_{i\in J}\Al_i)\land \Al_{k})=0$.



Then, using the family of formulas just proved and the fact that 
 $\Pr(\Al_i)=\epsilon_i\in \Gamma$, for $i=1,\dots,n$, by repeated applications of axiom P3 we obtain $\Pr(\bigvee_{i\in I}\Al_i) = \sum_{i\in I}\epsilon_i\in\Gamma$. We must justify that each of these applications of P3 is permitted, i.e., for a fixed enumeration $i_1,\dots,i_l$ of $I$ and for each $m=1,\dots,l-1$ we must also show that that the sum of $\sum_{k=1\dots m}\epsilon_{i_k}$ and $\epsilon_{i_{m+1}}$ is $\leq 1$. 

We prove it by induction on $m$.  By the inductive assumption we have $\sum_{k=1\dots m}\epsilon_{i_k}\leq 1$ and $\Pr(\bigvee_{k = 1\dots m}\Al_{i_k}) = \sum_{k = 1\dots m}\epsilon_{i_k}\in\Gamma$. By Lemma \ref{lemma: incompatibility of hat alphas} we also have $\Pr((\bigvee_{k=1\dots m}\Al_i) \land \Al_{i_{m+1}}) = 0\in\Gamma$. Thus, by axiom P3b, $\Pr(\Al_{i_{m+1}})\leq 1- \sum_{k = 1\dots m}\epsilon_{i_k}\in\Gamma$. Thus, by lemma \ref{lemma: properties of MCS}, 5.,  there is a $\delta\leq 1- \sum_{k = 1\dots m}\epsilon_{i_k}$ such that $\Pr(\Al_{i_{m+1}})= \delta\in\Gamma$. 
Since furthermore $\Pr(\Al_{i_{m+1}})=\epsilon_{i_{m+1}}\in\Gamma$, by lemma \ref{lemma: properties of MCS}, 3. (unicity) we obtain $\delta = \epsilon_{i_{m+1}}$, and so $\sum_{k=1\dots m}\epsilon_{i_k} + \epsilon_{i_{m+1}} = \sum_{k=1\dots m}\epsilon_{i_k} + \delta \leq \sum_{k=1\dots m}\epsilon_{i_k} + 1- \sum_{k = 1\dots m}\epsilon_{i_k} =1$, as needed.
\end{proof}

\subsection{Canonical causal multiteams}

Let us now see how to build, out of a maximal consistent set of formulas $\Gamma$, a ``canonical'' causal multiteam $\T = (\T^-,\F)$. 
Let again $\SET W$ be an injective listing of all the variables in $\dom$, and let us assume that the assignments over $\dom$ are enumerated as $s_1,\dots,s_n$. We will denote as $\Al_i$ the formula $\SET W = s_i(\SET W)$. Now, by Lemma \ref{lemma: properties of MCS}, 3., for each $i$ there is exactly one $\epsilon_i\in[0,1]\cap \mathbb Q$ such that $\Pr(\hat\alpha_i)=\epsilon_i \in\Gamma$. Let $d$ be a common denominator of the $\epsilon_i$, so that $\epsilon_i = \frac{m_i}{d}$.

\begin{itemize}
\item Let $\T^-$ contain $m_i$ copies of the assignment $s_i$.
\item If $\varphi_{End(Y)}\notin\Gamma$, let $Y$ be an exogenous variable.
\item If $\varphi_{End(Y)}\in\Gamma$, let $Y$ be an endogenous variable, and let $\F_Y: \ran(\SET W_Y)\rightarrow \ran(Y)$ be the function that assigns to $\SET w\in \ran(\SET W_Y)$ the unique value $y\in \ran(Y)$ such that $\SET W_Y = \SET w \cf Y=y$ (whose existence is guaranteed by Lemma \ref{lemma: cf properties of MCS}, 1.)
\end{itemize}

The graph $G_\T$ of $\T$ is then obtained, as usual, by starting from the graph in which each endogenous variable has arrows incoming from all other variables; and then removing all the arrows that correspond to dummy arguments.

We prove some key properties of $\T$.

\begin{lemma}\label{lemma: properties of canonical T}
Let $\T$ be constructed from a maximal consistent set $\Gamma$, as above.
\begin{enumerate}
\item $\T$ is a causal multiteam.
\item $\T$ is recursive, i.e. $G_\T$ is acyclic.
\item $\sum_{i=1..n} \epsilon_i = 1$.
\item $|\T^-|= d$.
\item $P_\T(\hat \alpha_i)=\epsilon_i$.
\item For all $\beta \in \CO_{\sigma}$, $P_\T(\beta)=\sum_{i \mid (s_i,\F)\models \beta} \epsilon_i$.
\end{enumerate}
\end{lemma}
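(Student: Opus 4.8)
The plan is to prove the six claims of Lemma~\ref{lemma: properties of canonical T} in order, since each later claim leans on the earlier ones. Claims 1 and 2 are structural facts about $\T$; claims 3--5 pin down the cardinalities and single-assignment probabilities; and claim 6 is the real payoff, extending the probability computation from the ``atomic'' formulas $\Al_i$ to arbitrary $\CO$ formulas $\beta$.

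\textbf{Claims 1 and 2 (causal multiteam, recursivity).} For claim 1 I would check the two conditions of Definition~\ref{def: causal multiteam}. The multiteam component $\T^-$ is a finite multiset of assignments by construction, so only the compatibility constraint needs attention: for each endogenous $Y$ and each $s\in\T^-$, I must verify $\F_Y(s(\SET W_Y)) = s(Y)$. Here $s$ is a copy of some $s_i$ with $\Pr(\Al_i)=\epsilon_i\in\Gamma$ and $\epsilon_i>0$ (since $s_i$ appears, $m_i\geq 1$); since $\F_Y$ was defined so that $\SET W_Y = s_i(\SET W_Y)\cf Y=\F_Y(s_i(\SET W_Y))\in\Gamma$, I would use axiom C7 (applied to the $\cf$-free formula $\Al_i$) together with the fact that $\Pr(\Al_i)>0$ to force $s_i(Y)$ to equal that value; this is where the defining property of $\F_Y$ from Lemma~\ref{lemma: cf properties of MCS}.1 is cashed out. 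For claim 2 I would invoke axiom C11 (recursivity), which by construction of the graph $G_\T$ via the $\aff{X}{Y}$ atoms forbids cycles; a cyclic chain of causal-affection atoms would give, via C11 and the membership of the relevant counterfactual formulas in $\Gamma$, a derivation of $\bot$, contradicting consistency.

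\textbf{Claims 3--5 (cardinalities and atomic probabilities).} Claim 3 follows by applying Lemma~\ref{lemma: decomposed probabilities} with $I=\{1,\dots,n\}$: the disjunction $\bigvee_{i}\Al_i$ is (by axiom A3) equivalent to $\top$, so $\Pr(\top)=\sum_i\epsilon_i\in\Gamma$, and since $\Pr(\top)=1$ is a theorem (via P1), unicity in Lemma~\ref{lemma: properties of MCS}.3 gives $\sum_i\epsilon_i=1$. Claim 4 is then immediate arithmetic: $|\T^-|=\sum_i m_i = d\sum_i\epsilon_i = d$. Claim 5 records that the copies of $s_i$ number exactly $m_i$, so $P_\T(\Al_i)=\frac{m_i}{|\T^-|}=\frac{m_i}{d}=\epsilon_i$; here I must note that $s_i$ is the \emph{unique} assignment over $\dom$ satisfying $\Al_i$, so the count of rows satisfying $\Al_i$ really is $m_i$.

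\textbf{Claim 6 (probabilities of arbitrary $\CO$ formulas).} This is the main obstacle, bridging syntax and semantics. Since $\CO$ is flat, $(\T^\beta)^-$ consists exactly of those copies of assignments $s_i$ with $(\{s_i\},\F)\models\beta$, so semantically $P_\T(\beta)=\sum_{i\mid(s_i,\F)\models\beta}\epsilon_i$ \emph{once} I know the $\epsilon_i$ are the right weights --- which is claim 5. The substantive point is thus purely semantic bookkeeping: by flatness, $(\{s_i\},\F)\models\beta$ holds uniformly across all $m_i$ copies of $s_i$, so the rows contributing to $(\T^\beta)^-$ are precisely the blocks indexed by $\{i\mid(s_i,\F)\models\beta\}$, and summing their relative frequencies $\epsilon_i$ yields the claim. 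I would present this as: partition $\T^-$ into the blocks of copies of each $s_i$; observe each block lies entirely inside or entirely outside $(\T^\beta)^-$ by flatness; and conclude $P_\T(\beta)=\frac{1}{d}\sum_{i\mid(s_i,\F)\models\beta}m_i=\sum_{i\mid(s_i,\F)\models\beta}\epsilon_i$. The only delicate aspect is making sure that $\models$ here refers to evaluation in the canonical $\F$ (not to membership in $\Gamma$), so this claim is genuinely about the constructed semantic object and does \emph{not} yet assert agreement with $\Gamma$; that agreement is the content of a subsequent truth lemma.
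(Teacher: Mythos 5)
Your overall plan matches the paper's proof: claims 3--6 are handled exactly as the paper does (Lemma \ref{lemma: decomposed probabilities} with $I=\{1,\dots,n\}$ plus A3/P1 and unicity for claim 3, arithmetic for claims 4 and 5, and a purely semantic flatness/partition argument for claim 6, including your correct remark that claim 6 concerns the constructed model $\T$ and not yet its agreement with $\Gamma$). The places where your sketch is thinner than the paper are claims 1 and 2, and one of them hides a genuine step.

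For claim 2 you say $G_\T$ is constructed ``via the $\aff{X}{Y}$ atoms'', but it is not: $G_\T$ is defined from the non-dummy arguments of the functions $\F_Y$, which are read off from the counterfactuals $\SET W_Y = \SET w \cf Y=y$ in $\Gamma$. Before C11 can be applied to a putative cycle one must therefore \emph{prove} that $X\in\PA_Y$ forces $\aff{X}{Y}\in\Gamma$. The paper does this by showing contrapositively (using maximality, C9 and T1) that $\varphi_{DC(X,Y)}\notin\Gamma$ makes $X$ a dummy argument of $\F_Y$, and then noting $\varphi_{DC(X,Y)}\vdash\aff{X}{Y}$ by T2; your sketch treats this bridge as automatic, and without it the appeal to C11 does not connect to the graph at all. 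For claim 1, the phrase ``force $s_i(Y)$ to equal that value'' compresses the bulk of the paper's argument: one must convert $\Pr(\Al_i)=\epsilon_i>0$ together with $\SET W_V = s(\SET W_V)\cf V=\F_V(s(\SET W_V))\in\Gamma$ into the pair of facts $\Pr(\SET W_V = s(\SET W_V)\cf V\neq\F_V(s(\SET W_V)))\geq\epsilon_i\in\Gamma$ and $\Pr(\SET W_V = s(\SET W_V)\cf V\neq\F_V(s(\SET W_V)))=0\in\Gamma$, which in the paper requires a chain through A1, the $\supset$-deduction theorem (essentially C7), Rep, C5, Mon$_\supset$, O1b/O3, P1, P2 and P4. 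The ingredients you name are the right ones, so claim 1 is a matter of filling in a long derivation rather than a wrong idea; claim 2 is the one spot where a missing lemma needs to be stated and proved.
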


\begin{proof}
1) We need to verify that each assignment that has at least one copy in $\T^-$ is compatible with $\F$. 
If $V$ is an exogenous variable of $\F$ we do not need to verify anything. So let $V$ be endogenous. Suppose the compatibility constraint is not satisfied by a given assignment $s\in \T^-$: then $s(V)\neq\F_V(s(\SET W_V))$. By construction of $\T^-$,
\[
\Pr\big(\SET W_V = s(\SET W_V) \land V = s(V)\big)=\epsilon\in\Gamma, \text{ for some $\epsilon>0$.}
\]
On the other hand, by A1 and T1 we have
\[
\SET W_V = s(\SET W_V) \land V = s(V) \vdash \SET W_V = s(\SET W_V) \land V \neq \F_V(s(\SET W_V)).
\]
Thus, by Theorem \ref{thm: deduction theorem for supset} we obtain
\[
(*):  \Pr\big(\SET W_V = s(\SET W_V) \land V = \F_V(s(\SET W_V))\big)\geq\epsilon\in\Gamma.
\]


On the other hand, by the  construction of $\F$, we obtain that
$\SET W_V = s(\SET W_V) \cf V = \F_V(s(\SET W_V))\in \Gamma$.
By axiom schema T2, $V = \F_V(s(\SET W_V)) \equiv \big(\neg(V = \F_V(s(\SET W_V)))\supset \bot\big)\in \Gamma$, and by O4 and T1, 
\[
V = \F_V(s(\SET W_V)) \leftrightarrow (\neg(V = \F_V(s(\SET W_V)))\supset \bot)\in \Gamma.
\]
Thus $\SET W_V = s(\SET W_V) \cf  (\neg(V = \F_V(s(\SET W_V)))\supset \bot)\in \Gamma$ follow by Rule Rep. By axiom O3,
\begin{multline*}
(\SET W_V = s(\SET W_V) \cf  \neg(V = \F_V(s(\SET W_V))))\\
\supset (\SET W_V = s(\SET W_V) \cf\bot)\in \Gamma.
\end{multline*}
The consequent of the above entails $\bot$ by C5, so by Mon$_\supset$ we have $(\SET W_V = s(\SET W_V) \cf  \neg(V = \F_V(s(\SET W_V))))\supset \bot\in \Gamma$, i.e. $\neg(\SET W_V = s(\SET W_V) \cf  \neg(V = \F_V(s(\SET W_V))))\in \Gamma$. By A2 and Rep we get $\neg(\SET W_V = s(\SET W_V) \cf  V \neq  \F_V(s(\SET W_V)))\in \Gamma$. Then, by P1, 
$\Pr\big(\neg\big(\SET W_V = s(\SET W_V) \cf V \neq \F_V(s(\SET W_V))\big)\big)=1\in \Gamma$. By classical logic 
\[
\Pr\big(\neg(\SET W_V = s(\SET W_V) \cf V \neq \F_V(s(\SET W_V)))\big)\geq 1\in \Gamma,
\]
and thus, by definition of the abbreviation for $\leq$, 
\[
\Pr(\SET W_V = s(\SET W_V) \cf V \neq \F_V(s(\SET W_V)))\leq 0\in \Gamma.
\]
Thus, by P2, $\Pr(\SET W_V = s(\SET W_V) \cf V \neq \F_V(s(\SET W_V)))= 0\in \Gamma$. Thus, by  
 P4, $\Pr(\SET W_V = s(\SET W_V) \cf V \neq \F_V(s(\SET W_V))) < \epsilon\in \Gamma$, which contradicts (*).

2) First we prove that $X\in \PA_Y$ implies that $\varphi_{DC(X,Y)}\in \Gamma$. Suppose $\varphi_{DC(X,Y)}\notin \Gamma$. Then, since $\Gamma$ is closed under proofs, by axiom T1 it must also be the case that (**):
\[
\hspace{-80pt}  ((\SET W_{XY}=\SET w \land X=x)\cf Y=y) \land 
\]
\[
((\SET W_{XY}=\SET w  \land X=x')\cf Y=y')\notin\Gamma
\]
for every $\SET w$, $x\neq x'$, $y\neq y'$.

Notice that we can assume $Y$ is an endogenous variable; otherwise, automatically $X\notin \PA_Y = \emptyset$. Remember that then, by the construction of $\T$, we have $\varphi_{End(Y)}\in \Gamma$.

Since $Y$ is endogenous, by definition of $\T$, 
for all $\SET wx$ we have $(\SET W_{XY} = \SET w \land X =  x)\cf Y=\F_Y(\SET wx)\in\Gamma$. By the appropriate instances of (**), then, we must conclude that, for all $x'\neq x$ and $y'\neq \F_Y(\SET wx)$, $(\SET W_{XY} = \SET w \land X =  x')\cf Y=y'\notin\Gamma$; but then, by axiom C9 (and T1), since $\varphi_{End(Y)}\in \Gamma$,  $(\SET W_{XY} = \SET w \land X =  x')\cf Y=\F_Y(\SET wx)\in\Gamma$. In other words, for all $\SET w$ there is a value $y_{\SET w}$ such that, for \emph{all} values $x\in \ran(X)$, $(\SET W_{XY} = \SET w \land X =  x)\cf Y=y_{\SET w}\in\Gamma$. By the construction of $\T$, this tells us that $X$ is a dummy argument of $\F_Y$, i.e. $X\notin \PA_Y$.

Secondly, we observe that $\varphi_{DC(X,Y)}\vdash \aff{X}{Y}$. Indeed, $\varphi_{DC(X,Y)}$ is a disjunction of \emph{some} of the disjuncts of $\aff{X}{Y}$, so we derive $\aff{X}{Y}$ from it by T2. 



Suppose now that $\T$ is not recursive, i.e. there are variables $X_1,\dots, X_n$ ($n\geq 2$) such that $X_i\in \PA_{X_{i+1}}$ (for $i=1,\dots,n-1$) and $X_n \in \PA_1$. 
Then by the above we have $\aff{X_1}{X_2},\dots,\aff{X_{n-1}}{X_n},\aff{X_n}{X_1}\in\Gamma$. On the other hand, from $\aff{X_1}{X_2},\dots,\aff{X_{n-1}}{X_n}\in\Gamma$ we obtain, by axiom C11, that $(\aff{X_n}{X_1})^C\in\Gamma$, so we contradict the consistency of $\Gamma$. 


3) Since, by definition of $\T$, $\Pr(\Al_i)=\epsilon_i\in \Gamma$ for each $i\in I$, by Lemma \ref{lemma: decomposed probabilities} $\Pr(\bigvee_{i=1..n}\Al_i) = \sum_{i=1..n}\epsilon_i\in\Gamma$. On the other hand, by axioms A3 and P1 we have $\Pr(\bigvee_{i=1..n}\Al_i)=1\in\Gamma$. Thus, by Lemma \ref{lemma: properties of MCS}, 2. (unicity), $\sum_{i=1..n}\epsilon_i =1$.




4) By 3. we have $1 = \sum_{i=1..n}\epsilon_i = \sum_{i=1..n}\frac{m_i}{d}$, which implies $\sum_{i=1..n}m_i = d$. But then, since by definition $\T^-$ contains $m_i$ copies of each assignment $s_i$, its cardinality is $d$.

5) By definition, $P_\T(\Al_i)= \frac{\sum_{s\in\B_\sigma \mid(\{s\},\F)\models\Al_i}\#(s,\T)}{\T^-} = \frac{\#(s_i,\T)}{\T^-} = \frac{\#(s_i,\T)}{d} = \frac{m_i}{d} = \epsilon_i$, where in the third equality we have used 4.

\vspace{3pt}

6) We have $P_\T(\beta) = P_\T(\bigvee_{i \mid (s_i,\F)\models\beta}\Al_i) = \sum_{i \mid (s_i,\F)\models\beta}P_\T(\Al_i) = \sum_{i \mid (s_i,\F)\models\beta}\epsilon_i$, where in the last equality we used 5.
\end{proof}




\begin{lemma}\label{lemma: Psi F}
Writing $\Gamma$ and $\T = (\T^-,\F)$ as above, we have $\Phi^\F \in \Gamma$.
\end{lemma}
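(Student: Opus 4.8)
The plan is to unfold $\Phi^\F$ as a conjunction and verify membership of each conjunct in $\Gamma$, relying on Lemma \ref{lemma: properties of MCS}, 2., which tells us that a maximal consistent set contains a conjunction exactly when it contains all of its conjuncts. The conjuncts of $\Phi^\F$ fall into two families: the formulas $\SET W_V = \SET w \cf V = \F_V(\SET w)$ coming from $\eta_\sigma(V)$ for each endogenous $V$, and the formulas $V = v \supset (\SET W_V = \SET w \cf V = v)$ coming from $\xi_\sigma(V)$ for each exogenous $V$. Here ``$V$ endogenous'' means $\varphi_{End(V)}\in\Gamma$ and ``$V$ exogenous'' means $\varphi_{End(V)}\notin\Gamma$, in accordance with the construction of $\T$.

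For the endogenous conjuncts I would simply appeal to the construction of $\F$ itself. When $\varphi_{End(V)}\in\Gamma$, the value $\F_V(\SET w)$ was defined to be the unique $y$ with $\SET W_V = \SET w \cf V = y \in \Gamma$, whose existence and uniqueness are guaranteed by Lemma \ref{lemma: cf properties of MCS}, 1. Hence $\SET W_V = \SET w \cf V = \F_V(\SET w) \in \Gamma$ holds by definition for every $\SET w\in\ran(\SET W_V)$, which accounts for every conjunct of $\eta_\sigma(V)$.

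For the exogenous conjuncts I would first observe that, since $\varphi_{End(V)}\notin\Gamma$, maximality of $\Gamma$ yields $(\varphi_{End(V)})^C\in\Gamma$. Then, for each choice of $\SET w$ and $v$, the corresponding instance of axiom C10 is precisely $(\varphi_{End(V)})^C \rightarrow (V=v \supset (\SET W_V = \SET w \cf V = v))$, so Rule MP together with the closure of $\Gamma$ under derivation delivers $V=v \supset (\SET W_V = \SET w \cf V = v)\in\Gamma$. This supplies every conjunct of $\xi_\sigma(V)$. Reassembling the two families through Lemma \ref{lemma: properties of MCS}, 2., then gives $\Phi^\F\in\Gamma$.

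The argument is essentially bookkeeping, and I expect no substantive obstacle: the genuinely hard work has already been carried out in the preceding lemmas, namely the definability of $\varphi_{End}$ and the uniqueness of counterfactual consequents in Lemma \ref{lemma: cf properties of MCS}. The only point deserving care is the exogenous case, where one must correctly pass from $\varphi_{End(V)}\notin\Gamma$ to $(\varphi_{End(V)})^C\in\Gamma$ by maximality and then match the exact instance of axiom C10 to the conjunct being derived.
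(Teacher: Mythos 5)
Your proof is correct and follows essentially the same route as the paper's: the endogenous conjuncts of $\Phi^\F$ are in $\Gamma$ directly by the construction of $\F$ (via Lemma \ref{lemma: cf properties of MCS}, 1.), and the exogenous conjuncts follow from maximality, axiom C10, and closure under derivation. The paper's own proof is just a more compressed version of the same bookkeeping.
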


\begin{proof}
 Notice that, if $V\in End(\F)$, 
 then, by definition of $\T$, $\varphi_{End(V)}\in\Gamma$, and furthermore each of the conjuncts of $\eta_\sigma(V)$ is in $\Gamma$.

If instead $V\in Exo(\F)$, by definition of $\T$, $\varphi_{End(V)}\notin\Gamma$. Thus, by maximality of $\Gamma$, $(\varphi_{End(V)})^C\in\Gamma$, So, by axiom C10, $V=v \supset (\SET W_V=\SET w \cf V=v)$ for each $v$ and $\SET w$. Thus $\xi_\sigma(V)$ is in $\Gamma$.
%
%
\end{proof}

\begin{lemma}\label{lemma: C13}
Let $\beta\in\CO_\sigma$ and the $\hat \alpha_i$ formulas be defined as above. For any $\epsilon\in[0,1]\cap\mathbb Q$ and any function component $\F$ of signature $\sigma$, we have:
\[
 \vdash \Phi^\F \rightarrow (\Pr(\beta)=\epsilon \leftrightarrow \Pr\left(\bigvee_{i \mid (\{s_i(i/Key)\},\F)\models\beta}\Al_i\right)=\epsilon).
\]  
\end{lemma}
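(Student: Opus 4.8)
The plan is to reduce the statement to applications of the probability axiom P6, after showing that, modulo probability $1$ and under the hypothesis $\Phi^\F$, the $\CO$ formula $\beta$ is equivalent to the explicit disjunction appearing in the statement. Write $D$ for that disjunction, i.e. $D := \bigvee_{i\mid(\{s_i\},\F)\models\beta}\Al_i$. The key structural observation is that \emph{every} formula in sight is a $\CO$ formula: each $\Al_i$ is a conjunction of literals, so $D\in\CO$; and $\Phi^\F$ is built only from literals, $\land$, $\supset$ and $\cf$ (the conjuncts $\eta_\sigma(V)$ and $\xi_\sigma(V)$ contain no probabilistic atoms), so $\Phi^\F\in\CO$ as well. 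Hence $\Phi^\F$, $\beta$, $D$ and $\beta\equiv D$ all lie in $\CO_\sigma$, which lets me invoke the strong completeness result for $\CO$ (Theorem~\ref{thm: CO completeness}) and thereby replace a long syntactic induction over counterfactuals by a short semantic argument.

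First I would establish the validity $\Phi^\F\models\beta\equiv D$. Let $T=(T^-,\G)$ satisfy $\Phi^\F$; if $T$ is empty this is trivial by the empty team property, so assume $T$ nonempty, whence $\G=\F$. To see $T\models\beta\supset D$, note that $(T^\beta)^-$ consists of those $s\in T^-$ with $(\{s\},\F)\models\beta$; each such $s$ equals some $s_j$ with $(\{s_j\},\F)\models\beta$, so $\Al_j$ is a disjunct of $D$ and $(\{s\},\F)\models\Al_j$, hence $(\{s\},\F)\models D$; by flatness of $\CO$, $T^\beta\models D$. The converse $T\models D\supset\beta$ is symmetric: any $s\in (T^D)^-$ satisfies some disjunct $\Al_j$ with $(\{s_j\},\F)\models\beta$, and $(\{s\},\F)\models\Al_j$ forces $s=s_j$ on the signature variables, so $(\{s\},\F)\models\beta$, and flatness gives $T^D\models\beta$. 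Thus $T\models\beta\equiv D$.

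Next, by Theorem~\ref{thm: CO completeness} applied with $\Gamma=\{\Phi^\F\}$ I obtain $\Phi^\F\vdash\beta\equiv D$, and since $\beta\equiv D$ and $D\equiv\beta$ differ only by commutativity of $\land$ (a $\CO$ tautology, T2), also $\Phi^\F\vdash D\equiv\beta$. Axiom P1 turns these into $\Phi^\F\vdash\Pr(\beta\equiv D)=1$ and $\Phi^\F\vdash\Pr(D\equiv\beta)=1$. Two applications of P6, one in each orientation ($\alpha:=\beta,\beta:=D$ and $\alpha:=D,\beta:=\beta$), together with Rule MP then yield $\Phi^\F\vdash\Pr(\beta)=\epsilon\rightarrow\Pr(D)=\epsilon$ and $\Phi^\F\vdash\Pr(D)=\epsilon\rightarrow\Pr(\beta)=\epsilon$, so by T1 we get $\Phi^\F\vdash\Pr(\beta)=\epsilon\leftrightarrow\Pr(D)=\epsilon$. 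Finally the deduction theorem (Theorem~\ref{thm: deduction theorem}) discharges the hypothesis, giving $\vdash\Phi^\F\rightarrow(\Pr(\beta)=\epsilon\leftrightarrow\Pr(D)=\epsilon)$, which is the claim.

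The only real obstacle is the semantic verification in the second paragraph, and in particular the observation that $\Phi^\F$ is itself a $\CO$ formula: this is exactly what makes $\CO$-completeness applicable and lets us avoid reasoning directly, in the object system, about how the counterfactuals occurring in $\beta$ interact with the structural equations encoded by $\Phi^\F$. Once that point is secured, the remainder is routine bookkeeping with P1, P6, T1 and the deduction theorem.
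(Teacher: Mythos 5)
Your proof is correct and follows essentially the same route as the paper's: reduce the claim to the $\CO$ validity $\Phi^\F\models\beta\equiv D$ (exploiting that $\Phi^\F$, $\beta$ and $D$ are all $\CO$ formulas), invoke Theorem~\ref{thm: CO completeness} to obtain a derivation, and then lift to the probabilistic statement via P1 and P6. The only difference is cosmetic bookkeeping (the paper phrases the semantic check assignment-by-assignment, being explicit about the $Key$ values, and closes with Rule Rep where you use two orientations of P6 plus the deduction theorem), which does not affect correctness.
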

\begin{proof}
 Notice that the formula in the statement immediately follows from $\vdash \Phi^\F \rightarrow (\beta \equiv \bigvee_{i \mid (\{s_i(i/Key)\},\F)\models\beta}\Al_i)$ by axiom P1, P6 and Rule Rep. Since this latter is a $\CO$ formula, by Theorem \ref{thm: CO completeness} it suffices to show that it is semantically valid.   

 Let $T=(T^-,\G)$ be a causal multiteam of signature $\sigma$, and suppose $T\models \Phi^\F$. Then (see remarks in section \ref{subs: further notation}) $\G=\F$. By flatness of $\CO$, we just need to prove that, if $t\in T^-$, then $(\{t\},\F) \models \beta$ iff $(\{t\},\F) \models \bigvee_{i \mid (\{s_i(i/Key)\},\F)\models\beta}\Al_i$. But each $t\in T^-$ is of the form $s_i(k/Key)$ for some $i\in\{1,\dots,n\}$ and $k\in \mathbb N$. Then, obviously, $(\{t\},\F) \models \beta$ iff $(\{s_i(i/Key)\},\F) \models \beta$, since the value of $Key$ does not affect the satisfaction of formulas. But $(\{s_i(i/Key)\},\F) \models \Al_i$ by the definition of $\Al_i$; thus $(\{t\},\F) \models \beta$ iff $(\{s_i(i/Key)\},\F) \models \Al_i$ iff $(\{s_i(i/Key)\},\F) \models \bigvee_{i \mid (\{s_i(i/Key)\},\F)\models\beta} \Al_i$; and finally, as before, iff $(\{t\},\F) \models \bigvee_{i \mid (\{s_i(i/Key)\},\F)\models\beta} \Al_i$.
\end{proof}

\begin{lemma}\label{lemma: conditional probabilities in canonical T}
Let $\alpha,\beta\in\CO_{\sigma}$, $\delta\in[0,1]\cap\mathbb Q$, and $\T,\Gamma$ as before.
\begin{enumerate}
\item $P_\T(\beta) = \delta$ iff $\Pr(\beta)=\delta \in \Gamma$.
\item If $P_\T(\beta\mid\alpha)=\delta$, then $\alpha\supset\Pr(\beta)=\delta\in\Gamma$.
\item If $\alpha\supset\Pr(\beta)=\delta\in\Gamma$, then $T^\alpha$ is empty or $P_\T(\beta\mid\alpha)=\delta$.
\end{enumerate}
\end{lemma}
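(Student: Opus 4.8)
The plan is to prove the three parts in order, leaning on part 1 as the main tool for parts 2 and 3, since it is exactly what converts the semantic quantity $P_\T(\cdot)$ into membership of a probability atom in $\Gamma$.

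For part 1, I would first compute $P_\T(\beta)$ explicitly via Lemma \ref{lemma: properties of canonical T}, 6., which gives $P_\T(\beta)=\sum_{i\mid(s_i,\F)\models\beta}\epsilon_i$. On the syntactic side, Lemma \ref{lemma: decomposed probabilities} yields $\Pr(\bigvee_{i\mid(s_i,\F)\models\beta}\Al_i)=\sum_{i\mid(s_i,\F)\models\beta}\epsilon_i\in\Gamma$. To bridge the disjunction of the $\Al_i$ with $\beta$ itself, I would invoke Lemma \ref{lemma: Psi F} (so that $\Phi^\F\in\Gamma$) together with Lemma \ref{lemma: C13}; since $\Gamma$ is closed under deduction, this gives $\Pr(\beta)=\epsilon\leftrightarrow\Pr(\bigvee_{i\mid(s_i,\F)\models\beta}\Al_i)=\epsilon\in\Gamma$ for every $\epsilon$. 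Taking $\epsilon$ to be the sum $\sum_{i\mid(s_i,\F)\models\beta}\epsilon_i$ settles the forward direction. The converse follows from uniqueness (Lemma \ref{lemma: properties of MCS}, 3.): setting $\delta'=P_\T(\beta)$, the forward direction gives $\Pr(\beta)=\delta'\in\Gamma$, which with the hypothesis $\Pr(\beta)=\delta\in\Gamma$ forces $\delta=\delta'$.

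For parts 2 and 3 I would use the elementary identity $P_\T(\beta\mid\alpha)=P_\T(\alpha\land\beta)/P_\T(\alpha)$, valid whenever $(\T^\alpha)^-\neq\emptyset$; this is justified by observing that $((\T^\alpha)^\beta)^-=(\T^{\alpha\land\beta})^-$ by flatness of $\CO$, whence the two numerators agree. In both parts the nonemptiness of $\T^\alpha$ is in force: in part 2 it is implicit in the well-definedness of $P_\T(\beta\mid\alpha)$, and in part 3 it is the stated alternative to $\T^\alpha$ being empty. Writing $\delta_1=P_\T(\alpha)>0$ and $\epsilon_1=P_\T(\alpha\land\beta)$, part 1 supplies $\Pr(\alpha)=\delta_1\in\Gamma$ and $\Pr(\alpha\land\beta)=\epsilon_1\in\Gamma$. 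For part 2, since $\delta=\epsilon_1/\delta_1$ and $\delta_1\neq 0$, axiom O2 together with Rule MP gives $\alpha\supset\Pr(\beta)=\delta\in\Gamma$ directly.

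For part 3, I would run the same O2 step to obtain $\alpha\supset\Pr(\beta)=\epsilon_1/\delta_1\in\Gamma$, and then appeal to uniqueness of the conditional value. From $\Pr(\alpha)=\delta_1\in\Gamma$ with $\delta_1>0$ one gets $\Pr(\alpha)>0\in\Gamma$ by Lemma \ref{lemma: reverse axioms}, 1., so Lemma \ref{lemma: properties of MCS}, 6., applies; combined with the hypothesis $\alpha\supset\Pr(\beta)=\delta\in\Gamma$ it forces $\delta=\epsilon_1/\delta_1=P_\T(\beta\mid\alpha)$. This O2-plus-uniqueness route has the advantage of being uniform in $\delta$, covering the case $\delta=0$ without the separate treatment that a direct use of O3 (which requires a nonzero value) would demand. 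I expect the only real obstacle to be bookkeeping: making the passage from $\bigvee\Al_i$ to $\beta$ in part 1 fully rigorous through Lemmas \ref{lemma: C13} and \ref{lemma: Psi F}, and tracking carefully that the nonemptiness hypothesis is exactly what licenses the division in the conditional-probability identity. Since axioms O2 and O3 are tailored to precisely these manipulations, no genuinely new idea beyond part 1 should be required.
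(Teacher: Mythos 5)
Your proposal is correct, and the core machinery is the same as the paper's (Lemmas \ref{lemma: decomposed probabilities}, \ref{lemma: Psi F} and \ref{lemma: C13} to translate $P_\T$ into $\Gamma$-membership, then axiom O2 to produce the conditional), but the organization is genuinely different in two respects. First, you prove part~1 directly and then feed it into parts~2 and~3, whereas the paper proves parts~2 and~3 first and then derives part~1 as the instance $\alpha=\top$, which costs it an extra argument showing $\top\supset\Pr(\beta)=\delta\in\Gamma$ iff $\Pr(\beta)=\delta\in\Gamma$ (via P1, P6, O2 and Lemma \ref{lemma: MP for supset}); your direct route avoids that detour entirely. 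Second, in part~3 the paper derives $\Pr(\alpha\land\beta)=\delta\cdot\delta'\in\Gamma$ from the hypothesis via axiom O3, which forces a separate treatment of the case $\delta=0$ (since O3 carries the side condition $\epsilon\neq 0$); your route instead applies O2 to get $\alpha\supset\Pr(\beta)=\epsilon_1/\delta_1\in\Gamma$ and then invokes the uniqueness clause of Lemma \ref{lemma: properties of MCS}, 6.\ (legitimately available, since $\Pr(\alpha)>0\in\Gamma$ follows from part~1 and Lemma \ref{lemma: reverse axioms}, 1.), which handles $\delta=0$ and $\delta>0$ uniformly and renders O3 unnecessary for this lemma. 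Both arguments are sound; yours is slightly more economical, the paper's has the (minor) virtue of exercising O3. The one point to keep an eye on in your version is the degenerate case where $\{i\mid(s_i,\F)\models\beta\}$ is empty, so that the disjunction $\bigvee_i\Al_i$ in Lemmas \ref{lemma: decomposed probabilities} and \ref{lemma: C13} is vacuous --- but the paper's own proof has exactly the same implicit edge case, so this is not a gap specific to your argument.
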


\begin{proof}

2) Notice that $P_\T(\alpha) =  \sum_{i\mid(s_i,\F)\models \alpha} \epsilon_i$. By 
Lemma \ref{lemma: decomposed probabilities} we have $\Pr(\bigvee_{i \mid (s_i,\F)\models \alpha}\Al_i) =  \sum_{i\mid(s_i,\F)\models \alpha} \epsilon_i\in\Gamma$. 
Thus by Lemma \ref{lemma: Psi F} and Lemma \ref{lemma: C13} we obtain (1): $\Pr(\alpha) =  \sum_{i\mid(s_i,\F)\models \alpha} \epsilon_i\in\Gamma$.
Similarly, we obtain (2): $\Pr(\alpha\land\beta) =  \sum_{i\mid(s_i,\F)\models \alpha\land\beta} \epsilon_i\in\Gamma$.

Notice furthermore the right-hand side term in (1) is $>0$ by the assumption that $P_\T(\cdot \mid \alpha)$ is defined. Thus we can apply axiom O2 to (1),(2) to obtain
\[
\alpha \supset \Pr(\beta) = \frac{ \sum_{i\mid(s_i,\F)\models \alpha\land\beta} \epsilon_i}{\sum_{i\mid(s_i,\F)\models \alpha} \epsilon_i} \in\Gamma.
\]
But the number occurring in this formula is just $\frac{P_\T(\alpha\land\beta)}{P_\T(\alpha)} = P_\T(\beta \mid \alpha)$ by Lemma \ref{lemma: properties of canonical T}, 5.

3) By Lemma \ref{lemma: properties of MCS}, 3., we obtain that there is a unique $\delta'$ such that $\Pr(\alpha)=\delta'\in\Gamma$.  By Lemma \ref{lemma: Psi F} and Lemma \ref{lemma: C13} we obtain $\Pr(\bigvee_{i \mid (s_i,\F)\models \alpha}\Al_i)=\delta'\in\Gamma$. By 
Lemma \ref{lemma: decomposed probabilities}, again, we also obtain that $\Pr(\bigvee_{i \mid (s_i,\F)\models \alpha}\Al_i)=\sum_{i\mid(s_i,\F)\models \alpha} \epsilon_i\in\Gamma$. Thus, again by Lemma \ref{lemma: properties of MCS}, 3. (unicity), $\delta'=\sum_{i\mid(s_i,\F)\models \alpha} \epsilon_i$.

  Suppose first that $\delta'=0$. Then $\sum_{i\mid(s_i,\F)\models \alpha} \epsilon_i=0$ and thus $\epsilon_i=0$ for any $i$ such that $(s_i,\F)\models \alpha$. In other words, by the construction of $\T^-$, $\#(s_i,\T)=0$. Thus $T^\alpha$ is empty.

Suppose instead that $\delta'>0$. Then by Lemma \ref{lemma: reverse axioms}, 1., we have $\Pr(\alpha)>0\in\Gamma$. But then by Lemma \ref{lemma: properties of MCS}, 6., there is a unique $\delta$ such that $\alpha \supset \Pr(\beta)=\delta\in\Gamma$. Suppose first that $\delta=0$; then, since $\Pr(\alpha)>0\in\Gamma$, by axiom O2 and Lemma \ref{lemma: properties of MCS}, 3. it must also be that $\Pr(\alpha\land\beta)=0\in\Gamma$. Then, $0 = \sum_{i\mid(s_i,\F)\models \alpha\land\beta} \epsilon_i$ by Lemmas \ref{lemma: Psi F}, \ref{lemma: C13} and \ref{lemma: decomposed probabilities}, and so $P_\T(\alpha\land\beta)=0$ by Lemma \ref{lemma: properties of canonical T}, 6.  Thus, 
\[
P_\T(\beta \mid \alpha) = \frac{P_\T(\alpha \land \beta)}{P_\T(\alpha)} = \frac{\sum_{i\mid(s_i,\F)\models \alpha\land\beta} \epsilon_i}{P_\T(\alpha)} = 0 = \delta.
\]

If instead $\delta>0$, we can apply axiom O3 to obtain $\Pr(\alpha \land \beta) = \delta\cdot\delta'\in\Gamma$. By Lemma \ref{lemma: Psi F} and Lemma \ref{lemma: C13} we have then $\Pr(\bigvee_{i \mid (s_i,\F)\models\alpha \land \beta}\Al_i) = \delta\cdot\delta'\in\Gamma$. Reasoning as before, we then get $\delta\cdot\delta' = \sum_{i\mid(s_i,\F)\models \alpha\land\beta} \epsilon_i$. But then
\[
P_\T(\beta \mid \alpha) = \frac{P_\T(\alpha \land \beta)}{P_\T(\alpha)} = \frac{\sum_{i\mid(s_i,\F)\models \alpha\land\beta} \epsilon_i}{\sum_{i\mid(s_i,\F)\models \alpha} \epsilon_i} = \frac{\delta\cdot\delta'}{\delta'} = \delta,
\]
where in the second equality we used Lemma \ref{lemma: properties of canonical T}, 6.

1) By 2. and 3. we have $P_\T(\beta) = P_\T(\beta\mid\top)=\delta$ iff $\top\supset \Pr(\beta)=\delta \in\Gamma$. However, we can show that $\top\supset \Pr(\beta)=\delta \in\Gamma$ iff $\Pr(\beta)=\delta \in\Gamma$, and then we are done.

Assume $\top\supset \Pr(\beta)=\delta \in\Gamma$. We also have $\top\in\Gamma$. Thus, by Theorem \ref{lemma: MP for supset}, 1., $\Pr(\beta)=\delta \in\Gamma$.


Vice versa, let us assume that $\Pr(\beta)=\delta \in\Gamma$. By T2, $\beta\equiv(\top\land\beta)\in\Gamma$; thus, $\Pr(\beta\equiv(\top\land\beta))=1\in\Gamma$ by P1; and then, by axiom P6 and Rule MP, (*): $\Pr(\top\land\beta)=\delta\in\Gamma$. 
On the other hand, by T1, $\top\in\Gamma$; then, by P1, (**): $\Pr(\top)=1\in\Gamma$.
Then, by axiom O2 and rule MP applied to (**) and (*), $\top\supset\Pr(\beta)=\delta \in\Gamma$.
\end{proof}

\subsection{Key lemmas and conclusion}

We will use the fact that every $\PCO$ formula can be put in a normal form (which is also closer to the formalism used in causal inference).

\begin{theorem}[Pearl-style normal form]\label{thm: Pearl normal form}
Every $\PCO_{\sigma}$ formula $\varphi$ is provably equivalent to a $\PCO_{\sigma}$ formula $\varphi'$ such that:
\begin{enumerate}
\item all consequents of $\cf$ are probabilistic atoms
\item all consequents of $\supset$ are counterfactuals or probabilistic atoms.
\end{enumerate}
\end{theorem}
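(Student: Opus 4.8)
The plan is to argue by induction on the structure of $\varphi$, with all the real work concentrated in the two cases $\varphi = \alpha \supset \psi$ and $\varphi = \SET X = \SET x \cf \psi$. In each of these the induction hypothesis gives a normal-form $\psi'$ with $\vdash \psi \leftrightarrow \psi'$, and congruence (an instance of $A \leftrightarrow A$ from T1 rewritten by Rule Rep) lets me replace $\psi$ by $\psi'$, reducing the task to normalising $\alpha \supset \psi'$ (resp. $\SET X = \SET x \cf \psi'$) for an \emph{already normal} $\psi'$. I would package these two reductions as auxiliary claims and prove each by a secondary induction on $\psi'$. The remaining cases of the main induction are immediate: a bare literal or probabilistic atom has no $\cf$- or $\supset$-consequent and so is vacuously normal, and since conditions 1 and 2 are global constraints on every occurrence of $\cf$ and $\supset$, they are inherited by subformulas and preserved when normal forms are joined by $\land$ or $\sqcup$.

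For the $\supset$-claim I would push the implication inward. O5$_\land$ and O5$_\sqcup$ distribute the implication over $\land$ and $\sqcup$ (recursing on the proper subformulas), O5$_\supset$ rewrites $\alpha \supset (\beta \supset \chi)$ as $(\alpha \land \beta) \supset \chi$ with $\alpha \land \beta \in \CO$ (recursing on $\chi$), and when the consequent is a probabilistic atom or a counterfactual the formula already satisfies condition 2. The one extra move is a literal consequent $\ell$: here P1 supplies $\vdash \ell \leftrightarrow \Pr(\ell)=1$, and unfolding the abbreviation $\Pr(\ell)=1$ into a conjunction of probabilistic atoms, followed by O5$_\land$, delivers a normal form. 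The $\cf$-claim is parallel, using C1 and C2 to distribute over $\land,\sqcup$, C3 to turn $\SET X = \SET x \cf (\alpha \supset \chi)$ into $(\SET X = \SET x \cf \alpha) \supset (\SET X = \SET x \cf \chi)$ (legitimate since $\SET X = \SET x \cf \alpha \in \CO$), and P1 for literal consequents.

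The decisive point is a consequent that is itself a counterfactual $\SET Y = \SET y \cf \mu$, where $\mu$ is a probabilistic atom by normality of $\psi'$. Rather than collapse the two interventions, I would apply C8 (or C8b when $\mu$ is a comparison atom) to rewrite this inner counterfactual as a single probabilistic atom $\mu'$ whose $\CO$-argument is the counterfactual $\SET Y = \SET y \cf \alpha$; then $\SET X = \SET x \cf \mu'$ has a probabilistic-atom consequent and is normal. This is also what keeps the secondary inductions from diverging: because $\psi'$ is normal, any $\chi$ under a $\supset$ is a probabilistic atom or a counterfactual-with-probabilistic-atom-consequent, so the formula $\SET X = \SET x \cf \chi$ produced by C3 normalises (via the base cases and C8) to a \emph{single} counterfactual with a probabilistic-atom consequent, whence $(\SET X = \SET x \cf \alpha) \supset \rho$ is directly normal and no further, uncontrolled, $\supset$-processing is spawned. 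All the axiom-given biconditionals are then chained together by transitivity (T1 and MP) and congruence (Rep, or Mon$_\supset$ and Mon$_\cf$ used in both directions).

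The hard part, I expect, is exactly this treatment of nested counterfactuals. The naive route of fusing $\SET X = \SET x \cf (\SET Y = \SET y \cf \mu)$ into a single intervention via C4 and C4b only yields a one-directional implication, and it becomes genuinely awkward when $\SET X$ and $\SET Y$ share variables, since the inner intervention overwrites them and the variable sets on the two sides of C4b no longer match --- so the biconditional required by Rep is not available off the shelf. Routing the nesting through C8/C8b, i.e. absorbing the inner counterfactual into a $\Pr(\cdot)$ and hence into a $\CO$-level counterfactual, circumvents this obstacle entirely; the price is checking that C8/C8b cover every shape of probabilistic-atom consequent (evaluation and comparison atoms alike) and that the closure of ``normal form'' under the subformula and combination operations used above is spelled out carefully.
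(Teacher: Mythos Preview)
Your proposal is correct and differs from the paper's proof chiefly in the handling of nested counterfactuals. The paper treats $\SET X = \SET x \cf (\SET Y = \SET y \cf \chi)$ by collapsing the two antecedents into a single intervention via C4 and C4b (with C5 available to dispose of inconsistent antecedents), whereas you work bottom-up, first ensuring the inner consequent is a probabilistic atom and then absorbing the inner counterfactual into that atom via C8/C8b. Your route sidesteps exactly the overlapping-variable issue you flag: combining C4 and C4b into a genuine biconditional is immediate only when $\SET X \cap \SET Y = \emptyset$, and the paper's brief sketch does not spell out the extra manipulation needed otherwise. Conversely, the paper's organisation is a simple two-phase procedure---first push all $\cf$'s inward, then all $\supset$'s---which is structurally lighter than your interleaved primary/secondary inductions. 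It is also worth noting that the paper does eventually use C8/C8b, but defers this to the Truth Lemma, where the outermost remaining $\cf$ is flattened away; you have effectively pulled that step forward into the normal-form construction itself. One minor correction: your remark that C4 and C4b ``only yield a one-directional implication'' overstates the problem---together they give the biconditional whenever the antecedent variable sets are disjoint; the difficulty is confined to the overlap case you go on to describe.
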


\begin{proof}
1. is obtained by first pushing $\cf$ inwards using axioms C1-2-3-4-4b  (or removing occurrences of $\cf$ by axiom C5) and finally replacing all consequents of the form $X=x$ (resp. $X\neq x$) with $\Pr(X=x)\geq 1$ (resp. $\Pr(X\neq x)\geq 1$) by axioms P1 and T1. Each of these steps may require the use of Rule Rep. 

After achieving 1., we apply axioms O5$_\land$-O5$_\sqcup$-O5$_\supset$, together with Rule Rep, 
in order to push $\supset$ inwards, until it is in front of a counterfactual or an atom (if the latter is not probabilistic, replace it using axioms P1 and T1, as above). 
\end{proof}

\begin{lemma}[Truth lemma]\label{lemma: truth lemma}
Let $\Gamma$ and $\T$ be as before.
\begin{enumerate}
\item For all $\alpha\in\CO$ and $\varphi\in\PCO$, $\T^\alpha \models \varphi \iff \alpha\supset\varphi\in\Gamma$.
\item For all $\varphi\in\PCO$, $\T \models \varphi \iff \varphi\in\Gamma$.
\end{enumerate}
\end{lemma}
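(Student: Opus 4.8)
The plan is to begin with the Pearl-style normal form (Theorem \ref{thm: Pearl normal form}): provable equivalence is respected on the semantic side by soundness and on the syntactic side by closure of $\Gamma$ under derivations (for part 1 one pushes the equivalence through $\supset$ using Mon$_\supset$). Hence it suffices to prove both claims for $\varphi$ in normal form, where every consequent of $\cf$ is a probabilistic atom and every consequent of $\supset$ is a counterfactual or a probabilistic atom. I would then establish part 1 by induction on the structure of $\varphi$, \emph{uniformly in $\alpha$} (so the induction hypothesis is quantified over all $\CO$ antecedents), and afterwards obtain part 2 by a parallel induction that appeals to part 1 in the $\supset$-case. Note first that $\T$ itself is nonempty, since $\sum_i \epsilon_i = 1$ forces some $m_i \geq 1$ (Lemma \ref{lemma: properties of canonical T}); thus the empty-team disjuncts in the semantic clauses can only be triggered by the observed subteams $\T^\alpha$.

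For the inductive cases of part 1 the connectives are handled by the distribution axioms. The case $\varphi = \psi \land \chi$ uses O5$_\land$ and Lemma \ref{lemma: properties of MCS}(2); the case $\varphi = \psi \sqcup \chi$ uses O5$_\sqcup$ together with the fact that $\psi \sqcup \chi \in \Gamma$ iff $\psi \in \Gamma$ or $\chi \in \Gamma$ (if $\psi \in \Gamma$ or $\chi \in \Gamma$ then $\psi \sqcup \chi \in \Gamma$ by T1 and closure; conversely, if neither is in $\Gamma$ then $\psi^C, \chi^C \in \Gamma$, so $(\psi \sqcup \chi)^C \in \Gamma$, contradicting consistency). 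The case $\varphi = \beta \supset \chi$ exploits the semantic identity $(\T^\alpha)^\beta = \T^{\alpha \land \beta}$ (flatness of $\CO$) together with O5$_\supset$, reducing to the induction hypothesis for $\chi$ at the antecedent $\alpha \land \beta$. Finally, when $\varphi = \SET X = \SET x \cf \psi$ with $\psi$ a probabilistic atom (the only shape the normal form allows), axioms C8/C8b rewrite it as a probabilistic atom over the $\CO$ formula $\SET X = \SET x \cf \beta$; by soundness the same rewriting holds in $\T^\alpha$, so this case collapses to the atomic base case.

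The crux is the atomic base case of part 1, with $\varphi$ a probabilistic or comparison atom. Here I would split on whether $(\T^\alpha)^-$ is empty. If it is, the left-hand side holds by the empty-team property, and on the syntactic side emptiness corresponds to $\Pr(\alpha) = 0 \in \Gamma$ (via Lemma \ref{lemma: conditional probabilities in canonical T}(3) and the construction of $\T^-$), whence $\alpha \supset \varphi \in \Gamma$ by axiom O1. If $(\T^\alpha)^-$ is nonempty then $\Pr(\alpha) > 0 \in \Gamma$ and $P_{\T^\alpha}(\beta) = P_\T(\beta \mid \alpha)$; Lemma \ref{lemma: properties of MCS}(6) supplies the unique $\delta$ with $\alpha \supset \Pr(\beta) = \delta \in \Gamma$, which Lemma \ref{lemma: conditional probabilities in canonical T}(2,3) identifies with $P_{\T^\alpha}(\beta)$. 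The passage from this equality to the inequality atoms $\geq \epsilon, > \epsilon$ is then bookkeeping: the monotonicity facts (Lemmas \ref{lemma: reverse axioms} and \ref{lemma: 8 11 12}, via Mon$_\supset$) give the positive direction, while for the negative direction one uses the weak contradictory-negation clause for $\supset$ — the extra conjunct $\Pr(\alpha) > 0$ is exactly what makes $(\alpha \supset \Pr(\beta) \geq \epsilon)^C$ available in $\Gamma$. Comparison atoms are handled identically, now pairing the two unique values $\delta, \delta'$ for $\beta, \gamma$ and invoking CP1/CP2 under Mon$_\supset$.

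Part 2 then follows by a routine induction: the atomic case uses Lemma \ref{lemma: conditional probabilities in canonical T}(1) and Lemma \ref{lemma: properties of MCS}(3) (literals first being turned into $\Pr(\cdot) = 1$ atoms via P1), the $\land$ and $\sqcup$ cases repeat the maximal-consistent-set arguments above, the $\cf$ case reduces through C8/C8b as before, and the case $\varphi = \alpha \supset \psi$ is immediate from part 1 since $\T \models \alpha \supset \psi$ iff $\T^\alpha \models \psi$. I expect the main obstacle to be precisely the nonempty subcase of the atomic part 1: matching the conditional-probability value on the two sides and handling the contradictory negation of $\supset$-atoms with the $\Pr(\alpha) > 0$ guard is where O2, O3, CP1, CP2 and the uniqueness facts for conditional-probability atoms must be combined with care.
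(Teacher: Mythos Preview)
Your proposal is correct and follows essentially the same route as the paper: normal form via Theorem \ref{thm: Pearl normal form}, elimination of the $\cf$ inductive case through C8/C8b, the $\supset$-case handled by O5$_\supset$ and the identity $(\T^\alpha)^\beta = \T^{\alpha\land\beta}$, and the atomic base case powered by Lemma \ref{lemma: conditional probabilities in canonical T} together with Lemma \ref{lemma: properties of MCS}(6,7) and CP1/CP2 under Mon$_\supset$. The one organizational difference is that you run a second induction for part 2, whereas the paper derives part 2 directly from part 1 by instantiating $\alpha := \top$: since $\T^\top = \T$, it remains only to check that $\top\supset\varphi\in\Gamma$ iff $\varphi\in\Gamma$, which the paper extracts from the argument in Lemma \ref{lemma: conditional probabilities in canonical T}(1) (one direction is just MP for $\supset$, the other goes through the normal-form base cases). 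Your separate induction is harmless and perhaps cleaner, but you can save work by adopting the paper's shortcut.
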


\begin{proof}
Notice first that 2. follows from 1. Indeed, 1. yields in particular that, for all  $\varphi\in\PCO$, $\T^\top \models \varphi \iff \top\supset\varphi\in\Gamma$. Now, on one hand, $\T^\top = \T$. On the other hand, $\top\supset\varphi\in\Gamma$ iff $\varphi\in\Gamma$, as shown in the proof of \ref{lemma: conditional probabilities in canonical T},1.

Let us then prove 1.; we will proceed by induction on $\varphi$. By Theorem \ref{thm: Pearl normal form}, we can assume that all counterfactuals in it are of the form $\SET X = \SET x \cf \Pr(\beta)\vartriangleright\epsilon$ ($\vartriangleright=\geq$ or $>$) or $\SET X = \SET x \cf \Pr(\beta)\vartriangleright \Pr(\gamma)$, and that $\supset$ only occurs in front of counterfactuals or probabilistic atoms. Furthermore, we can apply axiom C8 and C8b (together with the Rep Rule) 
to transform subformulas of the form $\SET X = \SET x \cf \Pr(\beta)\vartriangleright\epsilon$ into $\Pr(\SET X = \SET x \cf \beta)\vartriangleright\epsilon$, and subformulas of the form $\SET X = \SET x \cf \Pr(\beta)\vartriangleright\Pr(\gamma)$ into $\Pr(\SET X = \SET x \cf \beta)\vartriangleright \Pr(\SET X = \SET x \cf \gamma)$. We thus have just 4 base cases for the induction, $\varphi$ of the forms $\Pr(\beta)\vartriangleright\epsilon$ or $\Pr(\beta)\vartriangleright \Pr(\gamma)$; and we have no inductive case for $\cf$.
\begin{itemize}
\item Base case: $\varphi$ is $\Pr(\beta)\vartriangleright\epsilon$. Assume $\alpha\supset\varphi\in\Gamma$. Notice that, if $\vartriangleright$ is $>$, we obtain anyways $\alpha\supset \Pr(\beta)\geq\epsilon \in\Gamma$ by Lemma \ref{lemma: reverse axioms}, 2., and Rule Mon$_\supset$.
By Lemma \ref{lemma: properties of MCS}, 7., there is a $\delta\geq \epsilon$ such that $\alpha\supset \Pr(\beta)=\delta \in\Gamma$. By Lemma \ref{lemma: conditional probabilities in canonical T}, 3., we have then that either $\T^\alpha$ is empty or that $P_\T(\beta \mid \alpha) =\delta$. The former possibility immediately implies $\T^\alpha\models \Pr(\beta)\vartriangleright\epsilon$ by the empty team property. The latter implies $P_\T(\beta \mid \alpha) \geq \epsilon$ and thus we are done in case $\vartriangleright$ is $\geq$. Otherwise, if $\vartriangleright$ is $>$, 
observe that if $\delta>\epsilon$, we are done; otherwise, if $\delta=\epsilon$, $\alpha\supset \Pr(\beta)=\epsilon \in\Gamma$ and $\alpha\supset \Pr(\beta)>\epsilon \in\Gamma$  lead by T1 and Rule Mon$_\supset$ to $\alpha\supset \bot \in\Gamma$, and thus by O1b to $\Pr(\alpha)=0\in\Gamma$. By Lemma \ref{lemma: properties of canonical T}, 6., this implies that $\sum_{i \mid (s_i,\F)\models\alpha}\epsilon_i = 0$, i.e. $\T^\alpha$ is empty and again $\T^\alpha\models \Pr(\beta)\vartriangleright\epsilon$ by the empty team property. 



Vice versa, assume that $\T^\alpha\models \Pr(\beta) \vartriangleright \epsilon$. So, either $\T^\alpha$ is empty or $P_\T(\beta\mid\alpha)\vartriangleright \epsilon$. In the first case, we have $P_\T(\alpha) = \sum_{i | (s_i,\F)\models \alpha} P_\T(\Al_i) =0$. Then, $P_\T(\Al_i) =0$ for each such $i$, and thus, by the definition of $\T^-$, $\Pr(\Al_i)=0\in\Gamma$. Thus, as in the proof of Lemma \ref{lemma: properties of canonical T}, 3., we obtain $\Pr(\bigvee_{i | (s_i,\F)\models \alpha}\Al_i)=0\in\Gamma$  and then, by Lemma \ref{lemma: Psi F} and Lemma \ref{lemma: C13}, $\Pr(\alpha)=0\in\Gamma$. Thus, by axiom O1, $\alpha\supset \Pr(\beta)\vartriangleright \epsilon\in \Gamma$, as needed.

If $P_\T(\beta\mid\alpha)\vartriangleright \epsilon$, instead, then there is a $\delta\vartriangleright \epsilon$ such that  $P_\T(\beta\mid\alpha) = \delta$. Thus by Lemma  \ref{lemma: conditional probabilities in canonical T}, 2., $\alpha\supset \Pr(\beta)=\delta\in\Gamma$. Thus by Lemma \ref{lemma: 8 11 12}, 3. (and 1., in case $\vartriangleright = >$)  we get  $\alpha\supset \Pr(\beta)\vartriangleright \epsilon\in \Gamma$, as needed.

\item Base case: $\varphi$ is $\Pr(\beta)\vartriangleright \Pr(\gamma)$. Suppose $\alpha \supset \Pr(\beta)\vartriangleright \Pr(\gamma)\in\Gamma$.   
Assume first that $\Pr(\alpha)= 0\in\Gamma$. Then, by Lemma \ref{lemma: conditional probabilities in canonical T}, 1.,  $P_\T(\alpha)=0$. Thus $\T^\alpha$ is empty, and thus $\T^\alpha \models \Pr(\beta)\vartriangleright \Pr(\gamma)$. If instead $\Pr(\alpha)= 0\notin\Gamma$,
by Lemma \ref{lemma: properties of MCS}, 3., there is an $\epsilon>0$ such that $\Pr(\alpha)= \epsilon\in\Gamma$. Thus, by Lemma \ref{lemma: conditional probabilities in canonical T}, 1., $P_\T(\alpha)=\epsilon>0$, i.e., $\T^\alpha$ is nonempty.
On the other hand, by the maximality of $\Gamma$ we have $\Pr(\alpha)\neq 0\in\Gamma$, and thus, by P2 and T1, $\Pr(\alpha) >0\in\Gamma$. Thus by Lemma \ref{lemma: properties of MCS}, 6., there are (unique) $\delta, \delta'$ such that (*): $\alpha\supset\Pr(\beta)=\delta\in\Gamma$ and $\alpha\supset\Pr(\gamma)=\delta'\in\Gamma$. Then, by Lemma \ref{lemma: conditional probabilities in canonical T}, 3., since $\T^\alpha$ is not empty, we obtain $P_\T(\beta \mid \alpha) = \delta$ and  $P_\T(\gamma \mid \alpha) = \delta'$. So, if we prove $\delta\vartriangleright \delta'$, we are done. Suppose first that $\vartriangleright$ is $\geq$, and suppose for the sake of contradiction that $\delta < \delta'$. From (*), then, by axiom CP2 and Rule Mon$_\supset$ we get $\alpha\supset \Pr(\gamma) > \Pr(\beta)\in \Gamma$. Since $\Pr(\gamma) > \Pr(\beta)$ is $(\Pr(\beta)\geq \Pr(\gamma))^C$, we obtain $\alpha\supset\bot\in\Gamma$ by T1 and Rule Mon$_\supset$, and then, by axiom O1b, $\Pr(\alpha) = 0 \in \Gamma$, contradicting our assumption. If instead $\vartriangleright$ is $>$, suppose for the sake of contradiction that $\delta \leq \delta'$. Then by axiom CP1 and Rule Mon$_\supset$ we obtain $\alpha\supset \Pr(\gamma) \geq \Pr(\beta)\in \Gamma$, and then we proceed as before.

Vice versa, assume $\T\models\alpha\supset \Pr(\beta) \vartriangleright \Pr(\gamma)$. Then, either $\T^\alpha$ is empty or $\delta:= P_\T(\beta\mid\alpha) \vartriangleright P_\T(\gamma\mid\alpha) =:\delta'$. In the first case, by Lemma \ref{lemma: conditional probabilities in canonical T}, 1., we have $\Pr(\alpha)=0\in\Gamma$. Thus $\alpha\supset \Pr(\beta)\vartriangleright\Pr(\gamma)\in\Gamma$ by axiom O1. In the second case, by Lemma \ref{lemma: conditional probabilities in canonical T}, 2., we have that $\alpha\supset \Pr(\beta)=\delta$ and $\alpha\supset \Pr(\gamma)=\delta'$ are in $\Gamma$. Thus, by axiom CP1 (or CP2) and Rule Mon$_\supset$, $\alpha\supset \Pr(\beta)\vartriangleright\Pr(\gamma)\in\Gamma$.


\item Case $\varphi$ is $\psi\land\chi$. Then $\T\models\alpha\supset\varphi$ iff $\T^\alpha\models \psi\land\chi$ iff $\T^\alpha\models \psi$ and $\T^\alpha\models \chi$ iff (by inductive hypothesis) $\alpha\supset\psi\in\Gamma$ and $\alpha\supset\chi\in\Gamma$. This is equivalent to $\alpha\supset(\psi\land\chi)\in\Gamma$ by T1 and O5$_\land$. 
 
\item Case $\varphi$ is $\psi\sqcup\chi$. This is completely analogous to the previous case, using T1 and O5$_\sqcup$. 

\item Case $\varphi$ is $\alpha' \supset \chi$. Then $\T\models \alpha\supset\varphi$ iff $(\T^\alpha)^{\alpha'}\models\chi$ iff $\T^{\alpha\land\alpha'}\models\chi$ iff (by the inductive assumption) $(\alpha\land\alpha')\supset\chi\in\Gamma$. By axiom O5$_\supset$ 
the latter is equivalent to $\alpha \supset (\alpha'\supset\chi)\in \Gamma$, i.e. $\alpha\supset\varphi\in\Gamma$. 
\end{itemize}
\end{proof}

\begin{theorem}[Strong completeness]\label{thm: strong completeness of PCO}
 Let $\sigma$ be a signature and $\Gamma\cup\{\varphi\}\subseteq \PCO_{\sigma}$.
 Then $\Gamma \models \varphi$ implies $\Gamma\vdash \varphi$.
\end{theorem}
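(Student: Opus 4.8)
The plan is to prove the contrapositive: assuming $\Gamma \not\vdash \varphi$, I will exhibit a causal multiteam that satisfies every formula of $\Gamma$ but refutes $\varphi$, thereby witnessing $\Gamma \not\models \varphi$. The entire canonical-model apparatus built above is tailored precisely to this end, so the final argument is mostly a matter of assembly. First I would show that $\Gamma \cup \{\varphi^C\}$ is consistent. If it were not, then $\Gamma,\varphi^C \vdash \bot$, so by the Deduction Theorem (Theorem~\ref{thm: deduction theorem}) we would have $\Gamma \vdash \varphi^C \rightarrow \bot$; since $(\varphi^C \rightarrow \bot)\rightarrow \varphi$ is an instance of a classical propositional tautology in $^C,\rightarrow,\bot$ (axiom T1), Rule MP would yield $\Gamma \vdash \varphi$, contradicting the assumption.

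Next I would invoke the Lindenbaum lemma (Lemma~\ref{lemma: Lindenbaum}) to extend the consistent set $\Gamma \cup \{\varphi^C\}$ to a maximal consistent, deductively closed set $\Delta$, and then build from $\Delta$ the canonical causal multiteam $\T = (\T^-,\F)$ exactly as in the construction preceding Lemma~\ref{lemma: properties of canonical T}. By the Truth Lemma (Lemma~\ref{lemma: truth lemma}, part~2), for every $\psi \in \PCO_\sigma$ we have $\T \models \psi \iff \psi \in \Delta$. Since $\Gamma \subseteq \Delta$, this gives $\T \models \Gamma$; and since $\varphi^C \in \Delta$, it gives $\T \models \varphi^C$.

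It remains to pass from $\T \models \varphi^C$ to $\T \not\models \varphi$, and this is the only point demanding genuine care. Theorem~\ref{thm: contradictory negation} guarantees the equivalence $\T \models \varphi^C \Leftrightarrow \T \not\models \varphi$ only for \emph{nonempty} $\T$, because on the empty multiteam every formula (including $\varphi$ and $\varphi^C$) is satisfied. So I would first verify that $\T$ is nonempty: by Lemma~\ref{lemma: properties of canonical T}, parts~3 and~4, we have $\sum_{i} \epsilon_i = 1$ and $|\T^-| = d$ with $\epsilon_i = m_i/d$, so at least one $m_i \geq 1$ and hence $\T^- \neq \emptyset$. With $\T$ nonempty, Theorem~\ref{thm: contradictory negation} applied to $\varphi$ yields $\T \not\models \varphi$. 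Thus $\T$ satisfies all of $\Gamma$ yet refutes $\varphi$, so $\Gamma \not\models \varphi$, which establishes the contrapositive and hence the theorem. The substantive work has already been absorbed into the Truth Lemma and the lemmas supporting it; the main obstacle at this concluding stage is exactly the nonemptiness check, which is what licenses treating the weak contradictory negation $\varphi^C$ as a genuine refutation of $\varphi$ in the model $\T$.
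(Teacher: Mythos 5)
Your proof is correct and follows essentially the same route as the paper's: show $\Gamma\cup\{\varphi^C\}$ is consistent, extend it by the Lindenbaum lemma, build the canonical multiteam, and apply the Truth Lemma. Your explicit verification that $\T$ is nonempty (via $\sum_i\epsilon_i=1$) before invoking Theorem~\ref{thm: contradictory negation} to pass from $\T\models\varphi^C$ to $\T\not\models\varphi$ is a step the paper leaves implicit, and it is a genuine point of care worth making.
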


\begin{proof}
Suppose $\Gamma\not\vdash \varphi$. Then $\Gamma,\varphi^C\not\vdash\bot$ (otherwise by the deduction theorem $\Gamma\vdash \varphi^C\rightarrow\bot$, and then, by axiom T1 and Rule MP, $\Gamma\vdash\varphi$); in other words, $\Gamma \cup\{\varphi^C\}$ is consistent. Thus by Lemma \ref{lemma: Lindenbaum} there is a maximal consistent set $\Gamma'\supseteq \Gamma \cup\{\varphi^C\}$. Let $\T$ be the canonical causal multiteam built from $\Gamma'$ (which is a recursive causal multiteam by Lemma \ref{lemma: properties of canonical T}). By the truth Lemma \ref{lemma: truth lemma}, $\T\models \Gamma$ and $\T\models \varphi^C$. Thus $\T\models \Gamma$ and $\T\not\models \varphi$, so that $\Gamma\not\models\varphi$.
\end{proof}

\section{Conclusions}
We produced a strongly complete axiom system for a language  $\PCO$ for probabilistic counterfactual reasoning (without arithmetical operations).
As for most analogous results in the literature on interventionist counterfactuals, we have assumed that the signatures are finite; it would be interesting to find out if the recently developed methods of \cite{HalPet2022} for axiomatizatizing  infinite signatures may be extended to our case. Our system features infinitary rules, and it is therefore natural to wonder whether finitary axiomatizations could be obtained. Due to the failure of compactness, such axiomatizations can aspire at most at weak completeness.

There is another closely related axiomatization is issue that would be important to settle. In \cite{BarSan2023}, an extension $\PCO^\omega$ of $\PCO$ is considered that features a countably infinite version of the global disjunction $\sqcup$. This uncountable language is much more expressive than $\PCO$ and it can be proved that, in a sense, it encompasses all the expressive resources that a probabilistic language for interventionist counterfactuals should have. Given the special semantic role of this language, it would be important to find out whether an (infinitary) strongly complete axiomatization can be obtained for it. Proving a Lindenbaum lemma for an uncountable language and an infinitary axiom system presents peculiar difficulties, as discussed e.g. in \cite{BilCinLav2018}.

\section*{Acknowledgments}
Fausto Barbero was partially supported by the DFG grant VI 1045-1/1 and by the Academy of Finland grant 349803.
Jonni Virtema was partially supported by the DFG grant VI 1045-1/1 and by the Academy of Finland grant 338259.

\bibliographystyle{apalike}
\bibliography{iilogics}

\end{document}